\pdfoutput=1
\documentclass[a4paper,11pt]{article}
\usepackage[utf8]{inputenc}
\usepackage[T1]{fontenc}
\usepackage{graphicx}

\usepackage{xspace}
\usepackage{mathtools}
\usepackage{amsthm}
\usepackage{amssymb}
\usepackage{amsmath} 
\usepackage{physics} 
\usepackage{tensor}
\usepackage{slashed}

\usepackage{xcolor}
\newcommand{\hairsp}{\hspace{1pt}}
\newcommand{\ie}{\mbox{\emph{i.\hairsp{}e.}}\xspace}
\newcommand{\eg}{\mbox{\emph{e.\hairsp{}g.}}\xspace}
\newcommand{\cf}{\mbox{\emph{cf.}}\xspace}
\newcommand{\etc}{\mbox{\emph{etc.}}\xspace}
\newcommand{\eqn}[1]{(\ref{eq:#1})}
\newcommand{\mip}[2]{\langle#1,#2\rangle}
\newcommand{\hilb}{\mathcal{H}}
\newcommand{\alg}[1]{\mathcal{#1}}
\newcommand{\lmod}[2]{\tensor[_{#2}]{#1}{}}
\newcommand{\rmod}[2]{\tensor{#1}{_{#2}}}
\newcommand{\bimod}[3]{\tensor[_{#2}]{#1}{_{#3}}}
\newcommand{\lmip}[3]{\tensor[_{#3}]{\langle#1,#2\rangle}{}}
\NewDocumentCommand{\bdd}{O{\hilb}}{\mathcal{B}(#1)} 
\DeclareMathOperator{\Ad}{Ad}
\DeclareMathOperator{\id}{id}
\DeclareMathOperator{\cc}{c.\!c.}
\DeclareMathOperator{\diag}{diag}
\DeclareMathOperator{\spaceofendomorphisms}{End}
\newcommand{\End}[2]{\spaceofendomorphisms_{#1}(#2)}
\DeclareMathOperator{\Aut}{Aut}

\DeclareMathOperator{\linspan}{span}

\DeclareMathOperator{\Dom}{Dom}
\newcommand{\place}{\mathord{\color{black!33}\bullet}}
\newcommand{\ZZ}{\mathbb{Z}}

\newcommand{\CC}{\mathbb{C}}
\newcommand{\HH}{\mathbb{H}}
\DeclareMathOperator{\U}{U}
\DeclareMathOperator{\SU}{SU}

\newcommand{\ASM}{\mathcal{A}_{\mathrm{SM}}}
\newcommand{\ALR}{\mathcal{A}_{\mathrm{LR}}}
\newcommand{\Atoy}{\mathcal{A}_{\mathrm{toy}}}
\newcommand{\DSM}{D_{\mathrm{SM}}}
\newcommand{\EBA}{\bimod{\alg{E}}{\alg{B}}{\alg{A}}}
\newcommand{\conjEBA}{\bimod{\bar{\alg{E}}}{\alg{A}}{\alg{B}}}

\usepackage{scalerel}
\usepackage{bm}

\theoremstyle{plain}
  \newtheorem{thm}{Theorem}[section]
  \newtheorem{lem}[thm]{Lemma}
  
  \newtheorem{prop}[thm]{Proposition}
  
\theoremstyle{definition}
  \newtheorem{defn/}[thm]{Definition}
  \newtheorem*{notn}{Notation}
  \newtheorem{exm/}[thm]{Example}
\theoremstyle{remark}
  \newtheorem*{rmk/}{Remark}
  
  \newtheorem*{note}{Note}
\newenvironment{rmk}
  {%
   \pushQED{\qed}\begin{rmk/}}
  {\popQED\end{rmk/}}
\newenvironment{defn}
  {%
   \pushQED{\qed}\begin{defn/}}
  {\popQED\end{defn/}}

\makeatletter
\let\save@mathaccent\mathaccent
\newcommand*\if@single[3]{%
  \setbox0\hbox{${\mathaccent"0362{#1}}^H$}%
  \setbox2\hbox{${\mathaccent"0362{\kern0pt#1}}^H$}%
  \ifdim\ht0=\ht2 #3\else #2\fi
  }
\newcommand*\rel@kern[1]{\kern#1\dimexpr\macc@kerna}
\newcommand*\widebar[1]{\@ifnextchar^{{\wide@bar{#1}{0}}}{\wide@bar{#1}{1}}}
\newcommand*\wide@bar[2]{\if@single{#1}{\wide@bar@{#1}{#2}{1}}{\wide@bar@{#1}{#2}{2}}}
\newcommand*\wide@bar@[3]{%
  \begingroup
  \def\mathaccent##1##2{%
    \let\mathaccent\save@mathaccent
    \if#32 \let\macc@nucleus\first@char \fi
    \setbox\z@\hbox{$\macc@style{\macc@nucleus}_{}$}%
    \setbox\tw@\hbox{$\macc@style{\macc@nucleus}{}_{}$}%
    \dimen@\wd\tw@
    \advance\dimen@-\wd\z@
    \divide\dimen@ 3
    \@tempdima\wd\tw@
    \advance\@tempdima-\scriptspace
    \divide\@tempdima 10
    \advance\dimen@-\@tempdima
    \ifdim\dimen@>\z@ \dimen@0pt\fi
    \rel@kern{0.6}\kern-\dimen@
    \if#31
      \overline{\rel@kern{-0.6}\kern\dimen@\macc@nucleus\rel@kern{0.4}\kern\dimen@}%
      \advance\dimen@0.4\dimexpr\macc@kerna
      \let\final@kern#2%
      \ifdim\dimen@<\z@ \let\final@kern1\fi
      \if\final@kern1 \kern-\dimen@\fi
    \else
      \overline{\rel@kern{-0.6}\kern\dimen@#1}%
    \fi
  }%
  \macc@depth\@ne
  \let\math@bgroup\@empty \let\math@egroup\macc@set@skewchar
  \mathsurround\z@ \frozen@everymath{\mathgroup\macc@group\relax}%
  \macc@set@skewchar\relax
  \let\mathaccentV\macc@nested@a
  \if#31
    \macc@nested@a\relax111{#1}%
  \else
    \def\gobble@till@marker##1\endmarker{}%
    \futurelet\first@char\gobble@till@marker#1\endmarker
    \ifcat\noexpand\first@char A\else
      \def\first@char{}%
    \fi
    \macc@nested@a\relax111{\first@char}%
  \fi
  \endgroup
}

\numberwithin{equation}{section}

\title{Gauge transformations of spectral triples with twisted real structures}
\date{September 24, 2020}
\author{Adam M. Magee and Ludwik D\k{a}browski}
\begin{document}
\maketitle

\begin{abstract}
Twisted real structures are well-motivated as a way to implement the conformal transformation of a Dirac operator for a real spectral triple without needing to twist the noncommutative 1-forms. We study the coupling of spectral triples with twisted real structures to gauge fields, adopting Morita equivalence via modules and bimodules as a guiding principle and paying special attention to modifications to the inner fluctuations of the Dirac operator. In particular, we analyse the twisted first-order condition as a possible alternative to abandoning the first-order condition in order to go beyond the Standard Model, and elaborate upon the special case of gauge transformations accordingly. Applying the formalism to a toy model, we argue that under certain physically-motivated assumptions the spectral triple based on the left-right symmetric algebra should reduce to that of the Standard Model of fundamental particles and interactions, as in the untwisted case.
\end{abstract}

\section{Introduction}

Connes' noncommutative geometry (see \eg, \cite{C94}) proposes a `quantum' generalisation of Riemannian manifolds, which are in this framework described by so-called spectral triples $(\alg{A},\hilb,D)$. Spectral triples take inspiration from the Gelfand-Naimark duality between topological spaces and commutative C*-algebras, but extend the concept to include noncommutative $*$-algebras $\alg{A}$ and further incorporate metric and calculus through the inclusion of a generalised Dirac operator $D$. By $\alg{A}$ and $D$ acting on a Hilbert space $\hilb$, the topological and geometric notions of Riemannian geometry can be translated into operator algebraic language. Real spectral triples, which are additionally equipped with a real structure $J$, appear well-suited to providing a natural mathematical framework for expressing certain gauge theories, including that of the Standard Model of particle physics, which accurately describes all presently known high energy physics phenomena. 

A detailed treatment covering the noncommutative geometric formulation of the Standard Model is provided in \cite{CMar08}, but for our purposes it is sufficient to say that the real even spectral triple describing the Standard Model comes from the product of the manifold spectral triple 
\[ (C^\infty(\mathcal{M}),L^2(\mathcal{M},\mathcal{S}),\slashed{D},J_{\mathcal{M}},\gamma_{\mathcal{M}}), \]
which describes the spatial degrees of freedom, with the finite spectral triple
\[ (\alg{A}_{\mathrm{SM}},\mathbb{C}^{96},D_{\mathrm{SM}},J_F,\gamma_F), \]
which describes the internal degrees of freedom of the theory. Here, for a manifold $\mathcal{M}$, $C^\infty(\mathcal{M})$ is the algebra of smooth complex functions on $\mathcal{M}$, $L^2(\mathcal{M},\mathcal{S})$ is the space of square-integrable spinors on $\mathcal{M}$, $\slashed{D}$ is the Dirac operator associated to the spinor bundle $\mathcal{S}$, $\alg{A}_{\mathrm{SM}}$ is the real $*$-algebra $\CC\oplus\HH\oplus M_3(\CC)$ where $\HH$ denotes the quaternions, $D_{\mathrm{SM}}$ is the fermionic mass matrix, $J_{\mathcal{M}}\otimes J_F$ is the charge conjugation operator, and $\gamma_{\mathcal{M}}\otimes\gamma_F$ is the chirality operator.

Real spectral triples have a natural notion of dimension coming from K-theoretic concepts known as \emph{KO-dimension} (see \eg, \cite{C95}), which coincides with the dimension of the manifold (modulo 8) in the commutative case. It is well-known that the KO-dimension of the finite part of the spectral triple for the Standard Model must be 6 \cite{B07,C06}. Assuming that the Hilbert space admits a symplectic structure, the smallest irreducible representation of a matrix algebra on finite-dimensional Hilbert space, whose grading is compatible with the grading on the algebra, and which is of KO-dimension 6, is $M_2(\mathbb{H})\oplus M_4(\mathbb{C})$ \cite{CC08}.\footnote{A more principled justification for considering algebras of the form $M_k(\HH)\oplus M_{2k}(\CC)$ for applications to quantum physics is offered in \cite{CCM14}.} The form of the grading breaks this algebra down to the `left-right symmetric algebra'\footnote{The reader should beware that this name is sometimes given to the algebra ${\CC\oplus\HH_L\oplus\HH_R\oplus M_3(\CC)}$.} $\alg{A}_{\mathrm{LR}}=\HH_L\oplus\HH_R\oplus M_4(\CC)$, and the fulfilment of the first-order condition, which ensures that $D$ is a first-order differential operator in noncommutative-geometric terms, breaks $\ALR$ down to $\ASM$.

As the gauge group of a spectral triple comes from the choice of finite algebra, and extensions of the Standard Model coming from enlargements of the gauge group are of ongoing physical interest, it is natural to ask if the first-order condition can be jettisoned such that $\ALR$ can be taken as the algebra of the spectral triple, and if so, what gauge theory this spectral triple would correspond to. The first of these questions was answered in the affirmative by \cite{CCvS13a}. The second question was answered by \cite{CCvS13b} and \cite{CCvS15}, and it was found that this spectral triple corresponds to a family of Pati-Salam $\SU(2)_L\times\SU(2)_R\times\SU(4)$ models.

One should not be too hasty in discarding the first-order condition, though. For one thing, it is the noncommutative equivalent of the requirement that a generalised Dirac operator be a first-order differential operator. Furthermore, it was introduced (along with the real structure $J$ which implements it) in \cite{C95} at least partly to better define the notion of gauge theories in noncommutative geometry. It would seem advantageous, then, to search for a less radical solution. A generalised notion of real structure is given by the twisted real structure of \cite{BCDS16}, which was there demonstrated to be applicable to the case of certain conformal transformations of real spectral triples. The range of applicability was subsequently further extended by \cite{DS21} using ``multitwisted'' real structures. We hence investigate the possibility that such twisted real structures might offer a route to implementing the left-right symmetric spectral triple with only a weakening, rather than a complete discarding, of the first-order condition, or if the reduction to the Standard Model is unavoidable, as occurs when imposing the (untwisted) first-order condition.

In order to do so, in \S3 (culminating in Thm. \ref{env:innerflucthm}) we present in great detail a construction of Morita (self-)equivalence bimodules for spectral triples with twisted real structure that gives the expected form of inner fluctuations of the Dirac operator (\cf \cite[\S2.2]{BCDS16}). In \S4 we use this construction to develop a notion of gauge transformations for spectral triples with twisted real structure, given in Thm. \ref{env:DwuDuw}. The necessary alterations to the spectral action are then described in \S5 and finally, in \S6 we attempt to apply the formalism to the toy model based on the algebra $\CC_L\oplus\CC_R\oplus M_2(\CC)$, which takes the role of a simplified version of the spectral Pati-Salam model.\footnote{To be precise, in \cite{CCvS13a} it is argued that this algebra gives rise to a $\U(1)_L\times\U(1)_R\times\U(2)$ gauge theory in the absence of the first-order condition.} In the course of doing so we discuss various issues and limitations we encounter both for the toy model and for the full physical model. 

\section{Twisted real structures}

Let $\alg{A}$ be a unital $*$-algebra and $\hilb$ a Hilbert space admitting a \mbox{$*$-representation} $\pi\colon\alg{A}\to\bdd$ where $\bdd$ denotes the C*-algebra of bounded linear operators on $\hilb$. 

Furthermore, let $D$ be a densely-defined self-adjoint linear operator on $\hilb$ with compact resolvent such that $\comm{D}{\pi(a)}\in\bdd$ for all $a\in\alg{A}$ (the ``generalised Dirac operator''), $J$ an antilinear map $J\colon\hilb\to\hilb$ such that $J^2=\pm1$ and $J^*=J^{-1}$ (the ``real structure''), and $\gamma\colon\hilb\to\hilb$ a $\mathbb{Z}_2$-grading operator on a Hilbert space $\hilb$ such that $\gamma^2=1$ and $\gamma^*=\gamma$. 

Let $\nu$ be a bounded operator on $\hilb$ with bounded inverse such that there exists an algebra automorphism $\hat{\nu}\colon\mathcal{A}\to\mathcal{A}$ implemented by 
\begin{equation}
\pi(\hat{\nu}(a))\coloneqq\nu\pi(a)\nu^{-1}
\end{equation}
for all $a\in\alg{A}$. We will call such a $\nu$ a \emph{twist operator}. It is also possible to define another algebra automorphism $\tilde{\nu}\colon\alg{A}\to\alg{A}$ using $\nu$ which is given by \begin{equation}\label{eq:tildetwist}\pi(\tilde{\nu}(a))\coloneqq\nu^*\pi(a)(\nu^*)^{-1}.\end{equation} Note that we can express $\tilde{\nu}$ in terms of $\hat{\nu}$ by $\tilde{\nu}(a)=(\hat{\nu}^{-1}(a^*))^*$ for any $a\in\alg{A}$.

\begin{defn}\label{env:optwistdef}
A \emph{spectral triple with twisted real structure} is a collection of spectral data $(\alg{A},\hilb,D,J,\nu)$ such that the following conditions are satisfied for all $a,b\in\alg{A}$: 
\begin{equation}\label{eq:0C}
\comm{\pi(a)}{J\pi(b)J^{-1}}=0,
\end{equation}
\begin{equation}\label{eq:nu1C}
\comm{D}{\pi(a)}J\pi(\hat{\nu}^2(b))J^{-1}=J\pi(b)J^{-1}\comm{D}{\pi(a)},
\end{equation}
\begin{equation}\label{eq:nueC}
DJ\nu=\varepsilon'\nu JD,
\end{equation}
\begin{equation}\label{eq:RC}
\nu J\nu=J,
\end{equation}
where $\varepsilon'\in\{1,-1\}$. These conditions we will refer to as, respectively, the zeroth-order condition, the $\nu$-twisted first-order condition, the $\nu$-twisted $\varepsilon'$ condition and the regularity condition.
\end{defn}

\begin{rmk}
In Def. \ref{env:optwistdef}, several technical requirements must be met for all of the expressions given to be well-defined. We do not list them explicitly, as they are `natural' and regard the preservation of the domain of the operator $D$ under the action of the various bounded operators ($\pi(a)$ for $a\in\alg{A}$, $J$, $\nu$, \etc). 
\end{rmk}

Because $\hat{\nu}$ is an automorphism of $\alg{A}$, we can equivalently express \eqn{nu1C} in the `balanced' form
\begin{equation}\label{eq:nu1Cbal} \comm{D}{\pi(a)}J\pi(\hat{\nu}(b))J^{-1}=J\pi(\hat{\nu}^{-1}(b))J^{-1}\comm{D}{\pi(a)}. \end{equation}
Compatibility of \eqn{nu1C} with the $*$-structure forces one to also require
\begin{equation}\label{eq:nu1Cstar} \comm{D}{\pi(a)}J\pi(\tilde{\nu}(b))J^{-1}=J\pi(\tilde{\nu}^{-1}(b))J^{-1}\comm{D}{\pi(a)}, \end{equation}
which as $\tilde{\nu}$ is also an automorphism, can also be expressed in the `unbalanced' form of \eqn{nu1C}. We note that \eqn{nu1Cstar} follows as a consequence of \eqn{nu1C} (rather than being taken as an additional assumption) when $\nu=\nu^*$. 

It is clear from the above definition that the familiar, `ordinary' real spectral triples $(\alg{A},\hilb,D,J)$ can be considered as special cases of spectral triples with twisted real structure with the trivial twist operator $\nu=1$. In this document, we will refer to these spectral triples as `trivially-twisted'. For the sake of ease, in this document we will retain the familiar notation omitting $\nu$, rather than writing the full $(\alg{A},\hilb,D,J,1)$ for these spectral triples.

\begin{note}
The conditions \eqn{nu1C} and \eqn{nueC} have better-known analogues coming from the trivially-twisted case. These are respectively
\begin{equation}\label{eq:1C}
\comm{D}{\pi(a)}J\pi(b)J^{-1}=J\pi(b)J^{-1}\comm{D}{\pi(a)},
\end{equation}
the first-order condition, and
\begin{equation}\label{eq:eC}
DJ=\varepsilon'JD,
\end{equation}
where $\varepsilon'\in\{-1,+1\}$, which we refer to as the $\varepsilon'$ condition. 
\end{note}

\begin{defn}\label{env:nueeC}
A spectral triple with twisted real structure $(\alg{A},\hilb,D,J,\nu)$ is called \emph{even} if its spectral data includes a grading operator $\gamma$ such that ${\gamma D=-D\gamma}$ and $\comm{\gamma}{\pi(a)}=0$ for all $a\in\alg{A}$. Furthermore,  $\gamma$ is also required to satisfy
\begin{equation}\label{eq:gnuJ}
\gamma\nu J=\varepsilon''\nu J\gamma,
\end{equation}
which we refer to as the $\nu$-twisted $\varepsilon''$ condition, with $\varepsilon''\in\{-1,+1\}$, and the commutation relation
\begin{equation}
\comm{\gamma}{\nu^2}=0.
\end{equation}
\end{defn}
\begin{note}
The condition \eqn{gnuJ} has a better-known analogue from the trivially-twisted case,
\begin{equation}\label{eq:gJ}
\gamma J=\varepsilon''J\gamma,
\end{equation} where $\varepsilon''\in\{-1,+1\}$, which we refer to as the $\varepsilon''$ condition.
\end{note}
\begin{rmk}
It is worth noting that previous papers on twisted real structures (\eg, \cite{BCDS16,BDS19}) take \eqn{gJ} rather than \eqn{gnuJ}. The motivation for using \eqn{gnuJ} in this paper instead comes from Prop. \ref{env:gradprop}, and the fact that \eqn{gnuJ} is the weaker choice of the two constraints.
\end{rmk}

Spectral triples which are not even are referred to as \emph{odd}. As a point of notation, such even spectral triples with twisted real structure will be denoted by $(\alg{A},\hilb,D,J,\gamma,\nu)$. To be consistent with the convention established earlier, trivially-twisted real even spectral triples will be denoted by $(\alg{A},\hilb,D,J,\gamma)$. Where not defined explicitly, context should make clear whether a collection of spectral data is an odd spectral triple with twisted real structure or an even trivially-twisted real spectral triple.

Spectral triples with twisted real structure were first introduced in \cite{BCDS16}. It is important to note that spectral triples with twisted real structure are different to the ``twisted spectral triples'' introduced in \cite{CM08}, and one should take special care not confuse the two. However, it is possible in many cases to draw equivalences between the two frameworks (see \cite{BDS19}), and some looser parallels can be drawn too, as we shall see in the next section. 

It will frequently occur that we will take the twist operator to be self-adjoint or involutive up to sign. When it is necessary to keep track of these signs, we will consistently refer to them as $\alpha_1$ and $\alpha_2$ respectively, \ie, if $\nu$ is self-adjoint up to sign we will say $\nu=\alpha_1\nu^*$ and if $\nu$ is involutive up to sign, we will say that $\nu=\alpha_2\nu^{-1}$. Regarding the latter case in particular, we will refer to spectral triples with such involutive-twisted real structures as being \emph{mildly-twisted}, because these spectral triples satisfy the ordinary first-order condition \eqn{1C}.

As a further point of notation, we additionally define the conjugate $*$-representation $\pi_J(a)\coloneqq J\pi(a)J^{-1}$ for any $a\in\alg{A}$. It will also be useful to further define the $*$-antirepresentation $\pi_J^*(a)\coloneqq J\pi(a)^*J^{-1}$ for any $a\in\alg{A}$. These will make it more convenient to use the following notation for twisted commutators: $$\comm{T}{a}^\pi_\sigma\coloneqq T\pi(a)-\pi(\sigma(a))T,$$ for $a\in\alg{A}$, $T$ an operator on $\hilb$, $\pi\colon\alg{A}\to\bdd$ and $\sigma$ an algebra automorphism of $\alg{A}$. Of course, in this notation we might write ordinary commutators of operators with algebra elements as $\comm{T}{a}^\pi_{\id_{\alg{A}}}$ although in this document we will favour the standard notation $\comm{T}{\pi(a)}$ for simplicity. With this notation, we can make use of the following results from \cite{BDS19}.
\begin{lem}\label{env:altnu1C}
The $\nu$-twisted first-order condition \eqn{nu1C} can be equivalently written as $$\comm{\comm{D}{a}}{b}^{\pi_J}_{\hat{\nu}^{-2}}=0$$ for any $a,b\in\alg{A}$.\end{lem}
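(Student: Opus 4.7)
The plan is to recognise that this is essentially a change of variables using the bijectivity of the automorphism $\hat{\nu}^2$.

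First, I would unfold the twisted-commutator notation. By the definitions $\pi_J(a) = J\pi(a)J^{-1}$ and $\comm{T}{a}^\pi_\sigma = T\pi(a) - \pi(\sigma(a))T$, the expression to be shown to vanish reads
\begin{equation*}
\comm{\comm{D}{a}}{b}^{\pi_J}_{\hat{\nu}^{-2}} = \comm{D}{\pi(a)}\,J\pi(b)J^{-1} - J\pi(\hat{\nu}^{-2}(b))J^{-1}\,\comm{D}{\pi(a)},
\end{equation*}
so the claim is that this vanishes for all $a,b \in \alg{A}$ if and only if \eqref{eq:nu1C} holds.

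Next, I would perform the substitution $b \mapsto \hat{\nu}^2(b')$ (which is legitimate because $\hat{\nu}$ is an algebra automorphism of $\alg{A}$, so $\hat{\nu}^2$ is a bijection on $\alg{A}$, and a statement holding for all $b$ holds for all $b'$ as well). Under this substitution, the first summand becomes $\comm{D}{\pi(a)}\,J\pi(\hat{\nu}^{2}(b'))J^{-1}$ and the second becomes $J\pi(b')J^{-1}\,\comm{D}{\pi(a)}$, so the vanishing of the twisted commutator for all $b$ is equivalent to
\begin{equation*}
\comm{D}{\pi(a)}\,J\pi(\hat{\nu}^2(b'))J^{-1} = J\pi(b')J^{-1}\,\comm{D}{\pi(a)}
\end{equation*}
holding for all $b'$, which is exactly \eqref{eq:nu1C}. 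The reverse direction follows by the inverse substitution $b' \mapsto \hat{\nu}^{-2}(b)$.

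There is no real obstacle here; the only subtlety is to make sure one uses the bijectivity of $\hat{\nu}^2$ (equivalently, that $\hat{\nu}$ is an automorphism and not merely an endomorphism) to pass quantifiers between $b$ and $\hat{\nu}^{\pm 2}(b)$, which is why the statement is phrased as an equivalence rather than a one-way implication.
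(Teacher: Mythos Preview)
Your argument is correct: unfolding the twisted-commutator notation and using the bijectivity of $\hat{\nu}^2$ to relabel the variable $b$ is exactly what is needed, and the equivalence follows immediately. The paper itself does not give a proof of this lemma but simply cites it as a result from \cite{BDS19}; your argument is the natural one and matches what one would expect such a proof to look like.
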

\begin{lem}\label{env:Jacobi}
Let $T$ be an operator on $\hilb$, with the algebra $\alg{A}$ represented on $\hilb$ by $\pi\colon\alg{A}\to\bdd$ and $\pi_J\colon\alg{A}\to\bdd$, with the algebra automorphisms $\sigma,\rho\in\Aut(\alg{A})$. Then if \eqn{0C} holds, we have $$\comm*{\comm*{T}{a}^\pi_\sigma}{b}^{\pi_J}_\rho=\comm*{\comm*{T}{b}^{\pi_J}_\rho}{a}^\pi_\sigma$$  for all $a,b\in\alg{A}$.
\end{lem}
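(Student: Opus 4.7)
The plan is to prove Lemma \ref{env:Jacobi} by direct expansion of the twisted commutators on each side, showing that the two sides differ only by ordinary commutators between $\pi$ and $\pi_J$ of algebra elements, which then vanish by the zeroth-order condition \eqn{0C}.

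First I would unpack the inner twisted commutators and write out
\[
\comm*{\comm*{T}{a}^\pi_\sigma}{b}^{\pi_J}_\rho=\bigl(T\pi(a)-\pi(\sigma(a))T\bigr)\pi_J(b)-\pi_J(\rho(b))\bigl(T\pi(a)-\pi(\sigma(a))T\bigr),
\]
and symmetrically
\[
\comm*{\comm*{T}{b}^{\pi_J}_\rho}{a}^\pi_\sigma=\bigl(T\pi_J(b)-\pi_J(\rho(b))T\bigr)\pi(a)-\pi(\sigma(a))\bigl(T\pi_J(b)-\pi_J(\rho(b))T\bigr).
\]
Each expression expands to four terms; taking their difference, the two middle cross-terms $\pi(\sigma(a))T\pi_J(b)$ and $\pi_J(\rho(b))T\pi(a)$ cancel outright, leaving
\[
\comm*{\comm*{T}{a}^\pi_\sigma}{b}^{\pi_J}_\rho-\comm*{\comm*{T}{b}^{\pi_J}_\rho}{a}^\pi_\sigma=T\comm{\pi(a)}{\pi_J(b)}-\comm{\pi(\sigma(a))}{\pi_J(\rho(b))}T.
\]

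The final step is to invoke the zeroth-order condition \eqn{0C}. The first commutator vanishes because $\comm{\pi(a)}{\pi_J(b)}=\comm{\pi(a)}{J\pi(b)J^{-1}}=0$ for all $a,b\in\alg{A}$. The only minor subtlety, and the main point worth flagging, is that the second commutator also vanishes: since $\sigma,\rho\in\Aut(\alg{A})$, the elements $\sigma(a)$ and $\rho(b)$ again lie in $\alg{A}$, so \eqn{0C} applies to them as well and $\comm{\pi(\sigma(a))}{\pi_J(\rho(b))}=0$. Hence both terms on the right vanish and the claimed identity follows. No domain or analytical obstruction arises beyond the standing technical requirements on $T$ and $D$ noted after Def.~\ref{env:optwistdef}, since everything reduces to algebraic manipulations of bounded operators (and $T\pi(a)$, $T\pi_J(b)$ on the appropriate domain).
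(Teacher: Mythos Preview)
Your proof is correct: the direct expansion and cancellation you describe is exactly the computation that establishes the identity, and your observation that $\sigma(a),\rho(b)\in\alg{A}$ so that \eqn{0C} applies to them is the only point requiring comment. The paper itself does not prove this lemma but cites it from \cite{BDS19}; your argument is the natural one and matches what one finds there.
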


Another result that will be of some relevance later is the following proposition.
\begin{prop}\label{env:ruinerprop}
Suppose that $(\alg{A},\hilb,D,J,\gamma,\nu)$ is an even spectral triple with twisted real structure with KO-dimension signs $(\varepsilon,\varepsilon',\varepsilon'')$ whose twist operator satisfies $\nu=\alpha\nu^*=\alpha\nu^{-1}$ for $\alpha\in\{+1,-1\}$. Then $(\alg{A},\hilb,D,\nu J,\gamma)$ is an even (trivially-twisted) real spectral triple with KO-dimension signs $(\varepsilon,\alpha\varepsilon',\varepsilon'')$ provided that $\nu$ is a linear operator.

Alternatively, suppose that $(\alg{A},\hilb,D,J,\gamma,\nu)$ is an even spectral triple with twisted real structure with KO-dimension signs $(\varepsilon,\varepsilon',\varepsilon'')$ whose twist operator satisfies $\nu=\alpha_2\nu^{-1}$ for $\alpha_2\in\{+1,-1\}$ and $\nu D=\beta_1D\nu$, $\nu\gamma=\beta_2\gamma\nu$ for $\beta_i\in\{+1,-1\}$. Then this spectral triple with twisted real structure is equivalent to the even (trivially-twisted) real spectral triple $(\alg{A},\hilb,D,J,\gamma)$ with KO-dimension signs $(\varepsilon,\alpha_2\beta_1\varepsilon',\beta_2\varepsilon'')$.
\end{prop}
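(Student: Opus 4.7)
The plan is to exploit a common observation underlying both hypotheses: $\nu^{2}$ is a scalar in each case ($\alpha$ in Part 1, $\alpha_{2}$ in Part 2), so the induced algebra automorphism is involutive, $\hat{\nu}^{2}=\id_{\alg{A}}$, and in particular $\hat{\nu}=\hat{\nu}^{-1}$. Substituted into the balanced form \eqn{nu1Cbal} of the twisted first-order condition, this collapses it to the ordinary first-order condition $\comm{\comm{D}{\pi(a)}}{\pi_{J}(b)}=0$ on the same algebra (after relabelling $b$, which is legitimate because $\hat{\nu}$ is a bijection of $\alg{A}$). This is the one non-routine observation in the whole proof; everything else is bookkeeping of signs produced when $\nu$ is commuted past $D$, $J$, or $\gamma$, together with the workhorse identity $\nu J=J\nu^{-1}$ obtained by rearranging the regularity condition \eqn{RC}.

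For Part 1, I set $J'\coloneqq\nu J$ and verify each axiom of an even real spectral triple $(\alg{A},\hilb,D,J',\gamma)$ with signs $(\varepsilon,\alpha\varepsilon',\varepsilon'')$. Combining $\nu J=J\nu^{-1}$ with $\nu^{-1}=\alpha\nu$ yields $\nu J=\alpha J\nu$; from here $J'^{2}=\nu J\nu J=J^{2}=\varepsilon$, while $J'^{*}=J^{-1}(\alpha\nu)$ matches $J'^{-1}=J^{-1}\nu^{-1}=\alpha J^{-1}\nu$. The relation $\gamma J'=\varepsilon''J'\gamma$ is a direct restatement of \eqn{gnuJ}, and $DJ'=D\nu J=\alpha DJ\nu=\alpha\varepsilon'\nu JD=\alpha\varepsilon'J'D$ by \eqn{nueC}. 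For the zeroth- and first-order conditions, I compute $J'\pi(b)J'^{-1}=\nu\pi_{J}(b)\nu^{-1}=\pi_{J}(\hat{\nu}^{-1}(b))=\pi_{J}(\hat{\nu}(b))$, where the middle equality uses $\nu J=J\nu^{-1}$, $\nu^{-1}\pi(b)=\pi(\hat{\nu}^{-1}(b))\nu^{-1}$, and $\nu^{-1}J^{-1}=J^{-1}\nu$; the conditions then follow from \eqn{0C} and \eqn{nu1Cbal} respectively.

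For Part 2, the data $(J,D,\gamma)$ are kept unchanged, so $J^{2}=\varepsilon$, $J^{*}=J^{-1}$, and \eqn{0C} are inherited immediately, and the untwisted first-order condition is delivered by the collapse argument above. The new $\varepsilon'$ sign is extracted by rewriting the right-hand side of \eqn{nueC} as $\varepsilon'\nu JD=\varepsilon'\alpha_{2}J\nu D=\varepsilon'\alpha_{2}\beta_{1}JD\nu$ using $\nu J=\alpha_{2}J\nu$ together with $\nu D=\beta_{1}D\nu$, then cancelling $\nu$ on the right to read off $DJ=\alpha_{2}\beta_{1}\varepsilon'JD$. The $\varepsilon''$ sign is obtained analogously from \eqn{gnuJ}: using $\gamma\nu=\beta_{2}\nu\gamma$ to convert the left-hand side and cancelling $\nu$ on the left gives $\gamma J=\beta_{2}\varepsilon''J\gamma$. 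No further obstacle arises beyond the scalar-square observation already noted.
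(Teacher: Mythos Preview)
Your proof is correct and follows essentially the same route as the paper's. Both parts hinge on the observation that $\nu^{2}$ is scalar, so the twisted first-order condition collapses to the ordinary one, and the remaining work is sign-tracking through the regularity condition $\nu J\nu=J$ and the commutation hypotheses; your organisation of Part~1 via $J'\pi(b)J'^{-1}=\pi_{J}(\hat{\nu}^{-1}(b))$ is a slight repackaging of the paper's direct manipulation of the balanced form, but the content is the same.
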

\begin{proof}
We begin with the first claim. Let us call $\nu J\eqqcolon\mathcal{J}$ and require that $\nu=\alpha_1\nu^*=\alpha_2\nu^{-1}$ for $\alpha_1,\alpha_2\in\{+1,-1\}$. We first check that $\mathcal{J}$ is a valid real structure. First of all, as it is the product of a linear and an antilinear operator, it is itself antilinear. It is straightforward to find  $$\mathcal{J}^*=J^*\nu^*=\alpha_1J^{-1}\nu=\alpha_1\alpha_2J^{-1}\nu^{-1}=\alpha_1\alpha_2\mathcal{J}^{-1}$$ and so it is antiunitary provided $\alpha_1\alpha_2=1$. This implies that $\alpha_1=\alpha_2$, and so we call $\alpha=\alpha_1=\alpha_2$. Furthermore, $$\mathcal{J}^2=\nu J\nu J=J^2=\varepsilon1$$ by \eqn{RC}. 

It is equally straightforward to see that \begin{align*}DJ\nu&=\varepsilon'\nu JD\\
\alpha D\nu J&=\varepsilon'\nu JD\\
D\mathcal{J}&=\alpha\varepsilon'\mathcal{J}D
\end{align*}
and so \eqn{nueC} reduces to \eqn{eC} with the sign $\alpha\varepsilon'$. Similar reasoning shows that \eqn{nu1C} reduces to \eqn{1C}, particularly if one considers the `balanced' form \begin{align*}
\comm{D}{\pi(a)}J\nu\pi(b)\nu^{-1}J^{-1}&=J\nu^{-1}\pi(b)\nu J^{-1}\comm{D}{\pi(a)}\\
\alpha^2\comm{D}{\pi(a)}\mathcal{J}\pi(b)\mathcal{J}^{-1}&=\mathcal{J}\pi(b)\mathcal{J}^{-1}\comm{D}{\pi(a)}\\
\comm{D}{\pi(a)}\mathcal{J}\pi(b)\mathcal{J}^{-1}&=\mathcal{J}\pi(b)\mathcal{J}^{-1}\comm{D}{\pi(a)}.
\end{align*}

In the even case, all of the above applies unchanged. The requirement that $\comm{\gamma}{\nu^2}=0$ is satisfied trivially as $\nu^2\propto1$ and $\gamma\nu J=\varepsilon''\nu J\gamma$ immediately becomes $\gamma\mathcal{J}=\varepsilon''\mathcal{J}\gamma$ by the definition of $\mathcal{J}$, and so \eqn{gnuJ} reduces to \eqn{gJ}.

For the second claim, $\nu^2\propto1$ immediately reduces \eqn{nu1C} to \eqn{1C}. The reduction of \eqn{nueC} to \eqn{eC} and \eqn{gnuJ} to \eqn{gJ} comes quite immediately from the fact that \eqn{RC} reduces to $\nu J=\alpha_2J\nu$ combined with the commutation relations between $\nu$ and $D$ and $\gamma$ respectively.
\end{proof}
\begin{rmk}
The alteration of the KO-dimension signs in going from a spectral triple with twisted real structure to a real spectral triple as outlined above may in some cases require a redrawing the standard table of KO-dimensions (as must be done when changing the real structure of an even real spectral triple from $J$ to $\gamma J$). However, we will not address this issue in detail in this paper, but rather assume for simplicity that we are only looking at cases where this is not an issue.
\end{rmk}

\section{Morita equivalence of spectral triples with twisted real structure}

Before we can talk about the applications of spectral triples with twisted real structure to gauge theories, we should understand how the changes to the usual definitions discussed in the previous section affect the definition of gauge transformations, and in order to do this we must discuss how the notion of Morita equivalence has been changed.

\subsection{Inner fluctuations}

The space of noncommutative 1-forms associated to a spectral triple $(\alg{A},\hilb,D)$ is generated by (the representation of) the algebra $\alg{A}$ and the derivation $\comm{D}{\place}$, and is denoted by 
\[ \Omega^1_D(\alg{A})\coloneqq\bigg\{\sum_i\pi(a_i)\comm{D}{\pi(b_i)}:a_i,b_i\in\alg{A}\bigg\}. \]
For what follows, we would like to maintain this conventional 1-form structure as much as possible. 

If the spectral triple is equipped with a (trivially-twisted) real structure $J$, an \emph{inner fluctuation} of the Dirac operator is given by 
\begin{equation}\label{eq:Dalpha}D_\omega=D+\omega+\varepsilon'J\omega J^{-1}\end{equation} for $\omega^*=\omega\in\Omega^1_D(\alg{A})$ a self-adjoint 1-form. However, if $D_\omega$ is to satisfy \eqn{nueC}, it should instead be of the form
\begin{equation}\label{eq:Dalphatwist}D_\omega=D+\omega+\varepsilon'\nu J\omega J^{-1}\nu.\end{equation} The relevant question is then whether or not we can sensibly implement such a fluctuation as \eqn{Dalphatwist}.

\begin{rmk}
Equation \eqn{Dalphatwist} is not the only possible inner fluctuation of $D$ which satisfies \eqn{nueC}. One could also define $$D_{\omega}'=D+\nu\omega\nu+\varepsilon'J\omega J^{-1},$$ now with $\nu\omega\nu\in\Omega_D^1(\alg{A})$. The difficulty with this choice is that it would necessitate modifying the structure of noncommutative 1-forms, which we would like to avoid, but instead requiring $\nu\comm{D}{a}\nu=\comm{D}{a'}$ for $a,a'\in\mathcal{A}$ then places constraints on the Dirac operator and the twist. 
\end{rmk}

\subsection{Morita equivalence}

Loosely speaking, two C*-algebras are Morita equivalent\footnote{In the context of C*-algebras, one generally refers to \emph{strong} Morita equivalence, but it is customary to omit `strong', especially when considering unital C*-algebras for which strong Morita equivalence coincides with algebraic Morita equivalence.} when there exists a categorical equivalence of their (left or right) module structures. Thus, when one considers a C*-algebra as Morita equivalent to itself, the algebra and associated Hilbert module are only `the same' up to isomorphism, and so the action of the Dirac operator on the Hilbert space is only defined up to a connection 1-form. This is the ultimate source of inner fluctuations.

Furthermore, when considering real spectral triples, the left and right module structures are related by the real structure. Therefore, finding an inner fluctuation of the Dirac operator which is compatible with the real structure is a two-step process which involves taking Morita self-equivalences of the left and the right module structures, and imposing self-consistency.

As we will demonstrate in this subsection, this procedure is more complicated in the case of twisted real structures, as the connection will be an ordinary 1-form for the right module case and a twisted 1-form for the left module case. We will largely follow \cite{L97} and \cite{LM18} for the right and left module cases respectively, but the modifications necessary to combine the two approaches and adapt them to the twisted real structure formalism are original.

\begin{defn}Two C*-algebras $\mathcal{A}$ and $\mathcal{B}$ are \emph{Morita equivalent} if there exists a Hilbert bimodule $\bimod{\mathcal{E}}{\mathcal{B}}{\mathcal{A}}$ such that
\begin{enumerate}
\item It is full as a bimodule, \ie, $\linspan{\{\mip{e_1}{e_2}_{\mathcal{A}},e_1,e_2\in\bimod{\mathcal{E}}{\mathcal{B}}{\mathcal{A}}\}}$ is dense in $\mathcal{A}$ and $\linspan{\{\lmip{e_1}{e_2}{\mathcal{B}},e_1,e_2\in\bimod{\mathcal{E}}{\mathcal{B}}{\mathcal{A}}\}}$ is dense in $\mathcal{B}$;
\item $\lmip{e_1}{e_2}{\mathcal{B}}e_3=e_1\mip{e_2}{e_3}_{\mathcal{A}}$ for all $e_1,e_2,e_3\in\bimod{\mathcal{E}}{\mathcal{B}}{\mathcal{A}}$.
\end{enumerate}
We will be primarily interested in the case of \emph{Morita self-equivalence}, where we take $\mathcal{B}=\mathcal{A}$ and consider $\bimod{\mathcal{E}}{\mathcal{B}}{\mathcal{A}}$ as the bimodule of $\mathcal{A}$ over itself.\end{defn}

A standard result which we will later make use of is that if $\mathcal{E}$ is a full and finitely-generated and projective (hereafter, `finite projective') (left or right) Hilbert module over the \mbox{C*-algebra} $\mathcal{A}$, then $\mathcal{A}$ is Morita equivalent to $\End{\mathcal{A}}{\mathcal{E}}$. 

If two C*-algebras are Morita equivalent, they can be said to have equivalent representation theories. To see this, suppose two C*-algebras $\mathcal{A}$ and $\mathcal{B}$ are Morita equivalent via the bimodule $\bimod{\mathcal{E}}{\mathcal{B}}{\mathcal{A}}$, with $\mathcal{A}$ represented on the Hilbert space $\mathcal{H}$ as bounded operators by the map $\pi_{\mathcal{A}}$. This allows us to define the new Hilbert space 
\[ \mathcal{H}'\coloneqq\bimod{\mathcal{E}}{\mathcal{B}}{\mathcal{A}}\otimes_{\mathcal{A}}\mathcal{H} \] 
such that 
\[ ea\otimes_{\mathcal{A}}\psi=e\otimes_{\mathcal{A}}\pi_{\mathcal{A}}(a)\psi \] 
for all $e\in\bimod{\mathcal{E}}{\mathcal{B}}{\mathcal{A}}$, $a\in\mathcal{A}$ and $\psi\in\mathcal{H}$, equipped with the inner product 
\[ \mip{e_1\otimes_{\mathcal{A}}\psi_1}{e_2\otimes_{\mathcal{A}}\psi_2}_{\mathcal{H}'}\coloneqq\mip{\psi_1}{\mip{e_1}{e_2}_{\mathcal{A}}\psi_2}_{\mathcal{H}} \] 
for all $e_1,e_2\in\bimod{\mathcal{E}}{\mathcal{B}}{\mathcal{A}}$ and $\psi_1,\psi_2\in\mathcal{H}$. One can then construct a representation of $\mathcal{B}$ on $\mathcal{H}'$ by 
\[ \pi_\alg{B}(b)(e\otimes_{\mathcal{A}}\psi)\coloneqq(be)\otimes_{\mathcal{A}}\psi \] 
for all $b\in\mathcal{B}$ and $e\otimes_{\mathcal{A}}\psi\in\mathcal{H}'$. One then finds that the two representations $(\mathcal{A},\pi_{\mathcal{A}},\mathcal{H})$ and $(\mathcal{B},\pi_\alg{B},\mathcal{H}')$ are equivalent. 

Since the above construction produces the representation $(\mathcal{B},\pi_\alg{B},\mathcal{H}')$ using the fact that $\bimod{\mathcal{E}}{\mathcal{B}}{\mathcal{A}}$ is a right $\mathcal{A}$-module, we refer to it as Morita equivalence by right module. Of course, if $\mathcal{A}$ and $\mathcal{B}$ are Morita equivalent, then there also exists an $\mathcal{A}$-$\mathcal{B}$-bimodule, the conjugate bimodule $\bimod{\bar{\mathcal{E}}}{\mathcal{A}}{\mathcal{B}}$, which allows one to start with a representation of $\mathcal{B}$ and construct a unitarily equivalent representation of $\mathcal{A}$ using the fact that the conjugate bimodule is a left $\mathcal{A}$-module. We refer to this construction as Morita equivalence by left module.

\subsubsection{Morita self-equivalence by right module}

For this part, we will not need the real structure, and the twist will play no role, so we simply summarise the standard construction. Consider a spectral triple $(\alg{A},\pi,\hilb,D)$, and a representation $(\alg{B},\pi_{\alg{B}},\hilb_R)$ equivalent to $(\alg{A},\pi,\hilb)$ by Morita \emph{self}-equivalence by right module. We begin by considering the simplest way to construct a Dirac operator on $\hilb_R$, which is by simply taking the na\"{i}ve action of $D$ given by $D_r(a\otimes\psi)\coloneqq a\otimes D\psi$. However, this does not respect the module structure as $D_r(e\otimes\pi(a)\psi)\neq D_r(ea\otimes\psi)$ for all $a\in\alg{A},e\in\bimod{\alg{E}}{\alg{B}}{\alg{A}}$ since $D$ is not assumed to commute with $\pi(\alg{A})$. Instead we find
\begin{align*}
D_r(ea\otimes\psi)&=ea\otimes D\psi\\
&=e\otimes\pi(a)D\psi\\
&=e\otimes(D\pi(a)-\comm{D}{\pi(a)})\psi\\
&=D_r(e\otimes\pi(a)\psi)-e\otimes\comm{D}{\pi(a)}\psi.
\end{align*}

Therefore we instead consider a connection on the bimodule 
\[ \nabla\colon\EBA\to\EBA\otimes_{\alg{A}}\Omega^1(\alg{A}) \]
satisfying the Leibniz rule
\[ \nabla(ea)=(\nabla(e))a+e\otimes\delta(a), \]
where $\mathrm{Der}(\alg{A})\ni\delta\colon\alg{A}\to\Omega^1(\alg{A})$. When $\EBA$ is finite projective \ie, $\EBA=p\alg{A}^N$ for $p=p^*=p^2\in M_N(\alg{A})$, we can define any such connection as the Grassmann connection $p\delta$ up to some $\omega=p\omega=\omega p=p\omega p\in\End{\alg{A}}{\EBA,\EBA\otimes_{\alg{A}}\Omega^1(\alg{A})}$.

We now impose the Morita self-equivalence by taking $\mathcal{B}=\alg{A}$ and treating the Morita equivalence bimodule $\mathcal{E}=p\mathcal{A}^N$ as the module of $\alg{A}$ over itself (\ie, take $p=1$ and $N=1$). We have that the Hilbert spaces are isomorphic $\hilb_R=\alg{A}\otimes_{\alg{A}}\hilb\overset{\iota}{\simeq}\hilb$ under the isomorphism
\[ \iota\colon a\otimes\psi\mapsto\pi(a)\psi \] 
with inverse $\iota^{-1}\colon\psi\mapsto 1\otimes\psi$. Then the representation of the algebra $\alg{B}=\alg{A}$ is simply given by 
\[ \iota\circ\pi_{\alg{B}}(b)\circ\iota^{-1}=\pi(b) \]
for all $b\in\alg{A}$. Working in the Dirac calculus, the derivation $\delta$ is given by $$\delta(a)\coloneqq\comm{D}{\pi(a)}$$ for all $a\in\alg{A}$, which generates the space of noncommutative 1-forms $\Omega^1_D(\alg{A})$, which is an $\alg{A}$-bimodule with bimodule product $$a\cdot\omega\cdot a'\coloneqq\pi(a)\omega\pi(a')$$ for $a,a'\in\alg{A}$ and $\omega\in\Omega^1_D(\alg{A})$. We then have that any $\Omega^1_D(\alg{A})$-valued connection on the right module $\alg{A}$ reads\footnote{We understand $\omega\colon\alg{A}\to\Omega^1_D(\alg{A})$ to act via the module product, \ie, $\omega(a)=\omega\cdot a=\omega\pi(a)$ as an operator.} $$\nabla=\delta+\omega$$ for $\omega\in\Omega^1_D(\alg{A})$. 

This suggests that the appropriate construction for the Dirac operator comes from making $D_r$ compatible with the module structure by the addition of a connection like so:
\begin{align*}
D_R(a\otimes\psi)\coloneqq{}&a\otimes D\psi+\nabla(a)\psi\\
={}&a\otimes D\psi+1\otimes\delta(a)\psi+1\otimes(\omega\cdot a)\psi\\
={}&1\otimes D\pi(a)\psi+1\otimes\omega\pi(a)\psi\\
={}&D_R(1\otimes\pi(a)\psi),
\end{align*} where the last line comes from observing that $\nabla(1)=\delta(1)+(\omega\cdot1)=\omega$ as an operator. Lastly, via the isomorphism $\iota$ we find the compatible Dirac operator $$D_R=D+\omega$$ on $\hilb$ with $\omega\in\Omega^1_D(\alg{A})$. 

Thus we took the representation $(\alg{A},\pi,\hilb)$ with Dirac operator $D$ on $\hilb$ and found the Morita self-equivalent representation $(\alg{A},\pi,\hilb)$ with Dirac operator $D_R=D+\omega$ on $\hilb$ for $\omega\in\Omega^1_D(\alg{A})$. However, if our original spectral triple is a real spectral triple with real structure $J$, then along with the twist operator $\nu$ it necessarily obeys \eqn{nueC}. The same cannot be said for $(\alg{A},\hilb,D+\omega,J,\nu)$, which obeys \eqn{nueC} if and only if $\omega=\varepsilon'\nu J\omega J^{-1}\nu$. But it is not difficult to show that, for $\omega=\sum_j\pi(a_j)\comm{D}{\pi(b_j)}\in\Omega^1_D(\alg{A})$, 
\begin{align}\varepsilon'\nu J\omega J^{-1}\nu&=\sum_j\nu J\pi(a_j)J^{-1}\nu^{-1}(D\nu^{-1}J\pi(b_j)J^{-1}\nu-\nu J\pi(b_j)J^{-1}\nu^{-1}D)\nonumber\\
&=\sum_j\pi_J(\hat{\nu}^{-1}(a_j))\comm{D}{\hat{\nu}(b_j)}^{\pi_J}_{\hat{\nu}^{-2}},\label{eq:nuJomegaJnu}\end{align} 
which has no reason to be equal to $\omega$.

\subsubsection{Morita self-equivalence by left module}

The significance of \eqn{nuJomegaJnu} is that if $\omega$ is a 1-form, then $\nu J\omega J^{-1}\nu$ is a twisted 1-form. We have already seen that the standard Morita self-equivalence by right module can obtain $\omega$, so we expect that some changes should need to be made to the construction in the left module case to obtain the twisted 1-form $\nu J\omega J^{-1}\nu$. For said changes, we look to the construction offered by \cite{LM18}  for obtaining inner fluctuations of the Dirac operator for real twisted spectral triples in the spirit of \cite[Prop. 3.4]{CM08}.

As telegraphed, given an $\alg{A}$-$\alg{B}$ Morita equivalence bimodule $\EBA$, there also exists a conjugate bimodule $\bimod{\bar{\alg{E}}}{\alg{A}}{\alg{B}}$ which has the canonical product $a\bar{e}b=\widebar{b^*ea^*}$ where $\bar{\alg{E}}=\{\bar{e}:e\in\alg{E}\}$. If we replace $a_j\mapsto a_j^*$ and $b_j\mapsto b_j^*$ in \eqn{nuJomegaJnu} then we find that we can write
\[ \varepsilon'\nu J\omega J^{-1}\nu=\sum_j\pi^*_J(\tilde{\nu}(a_j))\comm{D}{\tilde{\nu}^{-1}(b_j)}^{\pi^*_J}_{\tilde{\nu}^2}. \]
Motivated by this equality, we define the space of ``twisted-opposite'' 1-forms as
\begin{equation}\label{eq:twistopspace} \tilde{\Omega}^1_D(\alg{A}^{\mathrm{op}})\coloneqq\bigg\{\sum_j\pi^*_J(\tilde{\nu}(a_j))\comm{D}{\tilde{\nu}^{-1}(b_j)}^{\pi^*_J}_{\tilde{\nu}^2}:a_j,b_j\in\alg{A}\bigg\}. \end{equation}
In light of this, it will prove convenient to define two ``twisted-opposite'' maps 
\begin{subequations}
\begin{align}
a^\oplus&\coloneqq\pi^*_J\big(\tilde{\nu}(a)\big)=J\nu^{-1}\pi(a)^*\nu J^{-1},\\
a^\ominus&\coloneqq\pi^*_J\left(\tilde{\nu}^{-1}(a)\right)=J\nu\pi(a)^*\nu^{-1}J^{-1},
\end{align} 
\end{subequations}
noting that both $a^\oplus$ and $a^\ominus$ are (still) elements of $\pi^*_J(\alg{A})\equiv\alg{A}^{\circ}$ for $a\in\alg{A}$, and that each twisted-opposite map separately preserves the algebra product. This notation allows us to write twisted-opposite 1-forms in a more compact fashion:
\begin{lem}\label{env:nuJomegaJnu}
The operator $\varepsilon'\nu J\omega J^{-1}\nu$, for $\omega=\sum_j\pi(a_j^*)\comm{D}{\pi(b_j^*)}\in\Omega^1_D(\alg{A})$, can be rewritten in the form $\sum_ja_j^\oplus(Db_j^\ominus-b_j^\oplus D)$.
\end{lem}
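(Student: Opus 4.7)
The plan is to argue that this is essentially a bookkeeping exercise starting from the equation displayed just before the lemma, namely
\[ \varepsilon'\nu J\omega J^{-1}\nu=\sum_j\pi^*_J(\tilde{\nu}(a_j))\comm{D}{\tilde{\nu}^{-1}(b_j)}^{\pi^*_J}_{\tilde{\nu}^2}, \]
which the text has already obtained (by the substitution $a_j\mapsto a_j^*$, $b_j\mapsto b_j^*$ in \eqn{nuJomegaJnu}) from the computation in the previous subsubsection. So I would start by quoting this identity as the point of departure.

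The next step is to unpack the outer factor. By the very definition of the twisted-opposite map, $\pi^*_J(\tilde{\nu}(a_j))=a_j^\oplus$, which takes care of the prefactor in each summand. It then remains to identify the twisted commutator $\comm{D}{\tilde{\nu}^{-1}(b_j)}^{\pi^*_J}_{\tilde{\nu}^2}$ with $Db_j^\ominus-b_j^\oplus D$. For this I would simply expand according to the definition $\comm{T}{a}^\pi_\sigma=T\pi(a)-\pi(\sigma(a))T$ introduced earlier: the first term is $D\,\pi^*_J(\tilde{\nu}^{-1}(b_j))=Db_j^\ominus$ directly from the definition of $(\place)^\ominus$, while the second term is $\pi^*_J\bigl(\tilde{\nu}^{2}\tilde{\nu}^{-1}(b_j)\bigr)D=\pi^*_J(\tilde{\nu}(b_j))D=b_j^\oplus D$, using that $\tilde{\nu}$ is an algebra automorphism so $\tilde{\nu}^{2}\circ\tilde{\nu}^{-1}=\tilde{\nu}$. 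Putting the two pieces together and summing over $j$ yields the claimed expression.

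No real obstacle is expected: the content of the lemma is entirely notational, a repackaging of the explicit operator identity (3.3) into a form where the twist operator $\nu$ and the real structure $J$ have been absorbed into the maps $(\place)^\oplus$ and $(\place)^\ominus$. The only thing one has to be a little careful about is keeping track of which power of $\tilde{\nu}$ appears where, in particular that the combination $\tilde{\nu}^{2}\circ\tilde{\nu}^{-1}$ really does collapse to $\tilde{\nu}$ so that the trailing $D$ is accompanied by $b_j^\oplus$ rather than some other twist; this is what makes the balanced pair $(b_j^\ominus,b_j^\oplus)$ appear inside the parentheses and not, say, two copies of $b_j^\ominus$.
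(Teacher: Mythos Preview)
Your argument is correct as written, but it takes a different route from the paper. The paper's proof is a self-contained operator computation: it expands $\varepsilon'\nu J a^*[D,b^*]J^{-1}\nu$ directly, inserts the twisted $\varepsilon'$ condition $DJ\nu=\varepsilon'\nu JD$ and the regularity condition $\nu J\nu=J$ to pull $D$ through $J$ and $\nu$, and then recognises the resulting factors as $a^\oplus$, $b^\ominus$, $b^\oplus$. In other words, the paper's proof of this lemma \emph{is} the deferred verification of the identity \eqn{nuJomegaJnu} that was earlier announced with ``it is not difficult to show''. Your approach instead treats that identity (and its $\pi_J^*$ rewriting displayed just before the lemma) as already established and then performs only the notational unpacking $\pi_J^*(\tilde\nu(a_j))=a_j^\oplus$, $\pi_J^*(\tilde\nu^{-1}(b_j))=b_j^\ominus$, $\pi_J^*(\tilde\nu(b_j))=b_j^\oplus$.

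What this buys: your version is shorter and makes transparent that the lemma is purely a change of notation once \eqn{nuJomegaJnu} is granted. The paper's version is logically self-contained and shows exactly where the structural axioms \eqn{nueC} and \eqn{RC} enter. The only caution is a mild circularity of presentation: since \eqn{nuJomegaJnu} was asserted without proof, a reader could reasonably expect the lemma to supply that missing computation, which your argument does not. If you want your proof to stand on its own, you should either include the short computation behind \eqn{nuJomegaJnu} or at least flag that you are invoking it as an already-verified input.
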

\begin{proof}
The proof is simply by computation. For the sake of simplicity and without loss of generality, we will omit summations and the representation $\pi$.
\begin{align*}
\varepsilon'\nu J\omega J^{-1}\nu&=\varepsilon'\nu Ja^*\comm{D}{b^*}J^{-1}\nu\\
&=\varepsilon'\nu J(a^*Db^*-a^*b^*D)J^{-1}\nu\\
&=\varepsilon'(\nu Ja^*Db^*J^{-1}\nu-\nu Ja^*b^*DJ^{-1}\nu)\\
&=\nu Ja^*\nu JDJ^{-1}\nu b^*J^{-1}\nu-\nu J^{-1}a^*b^*\nu JDJ^{-1}\nu J\nu\text{ by \eqn{nueC},}\\
&=J\nu^{-1}a^*\nu J^{-1}DJ\nu b^*\nu^{-1}J^{-1}-J\nu^{-1}a^*b^*\nu J^{-1}D\\
&=a^\oplus Db^\ominus-a^\oplus b^\oplus D.
\end{align*}
\end{proof}
Thus the space of twisted-opposite 1-forms \eqn{twistopspace} could equivalently be defined
$$\tilde{\Omega}^1_D(\alg{A}^{\mathrm{op}})=\bigg\{\sum_ja_j^\oplus(Db_j^\ominus-b_j^\oplus D):a_j^\oplus,b_j^\ominus,b_j^\oplus\in\alg{A}^{\circ}\bigg\},$$ 
which the reader may find easier to parse. We denote the elements of this space by $\omega^\odot$ to contrast with the more familiar $\omega^\circ\coloneqq\varepsilon'J\omega J^{-1}$. This space can be considered an $\alg{A}$-bimodule with bimodule action defined by $$a\cdot\omega^\odot\cdot b\coloneqq b^\oplus\omega^\odot a^\ominus$$ for all $a,b\in\alg{A}$ and $\omega^\odot\in\tilde{\Omega}^1_D(\alg{A}^{\mathrm{op}})$, and is generated by the derivation \begin{equation}\label{eq:newder}\delta^\odot(a)\coloneqq Da^\ominus-a^\oplus D\end{equation} for $a\in\alg{A}$. This is a derivation in the sense that:\footnote{Of course, rather than taking $\delta^\odot$ to be a derivation with respect to an unusual bimodule action, one could take it to be a twisted derivation with respect to the usual bimodule action, obeying the twisted Leibniz rule for $\tilde{\nu}^2$ the twist: $\delta^\odot(ab)=\delta^\odot(b)a^\ominus+\left(\tilde{\nu}(b)^2\right)^\ominus\delta^\odot(a)$. We choose to follow \cite{LM18}, which also keeps us in close contact with the typical construction of a connection on a Hilbert module.}
\begin{align*}
\delta^\odot(ab)&=D(ab)^\ominus-(ab)^\oplus D\\
&=Db^\ominus a^\ominus-b^\oplus a^\oplus D\\
&=Db^\ominus a^\ominus-b^\oplus Da^\ominus+b^\oplus Da^\ominus-b^\oplus a^\oplus D\\
&=\delta^\odot(b)a^\ominus+b^\oplus\delta^\odot(a)\\
&=a\cdot\delta^\odot(b)+\delta^\odot(a)\cdot b.
\end{align*}

We now return to the matter of Morita equivalence. Finding a Morita equivalent representation of $(\alg{A},\hilb)$ by left module (using now the conjugate module) follows similarly to the right module construction given above. We denote the resultant representation as $(\alg{B},\hilb_L)$, where \[ \hilb_L\coloneqq\hilb\otimes_{\alg{A}}\conjEBA \] is the Hilbert space with the inner product
\[ \mip{\psi_1\otimes \bar{e}_1}{\psi_2\otimes\bar{e}_2}_{\hilb_L}=\mip{\psi_1\tensor[_{\alg{A}}]{\mip{\bar{e}_1}{\bar{e}_2}}{}}{\psi_2}_\hilb \]
for all $\psi_i\in\hilb$ and $\bar{e}_i\in\conjEBA$ (as a left $\alg{A}$-module), where $\tensor[_{\alg{A}}]{\mip{\bar{e}_1}{\bar{e}_2}}{}\coloneqq\mip{e_1}{e_2}_{\alg{A}}$. Furthermore, the right action of $\alg{B}\simeq\End{\alg{A}}{\lmod{\alg{E}}{\alg{A}}}$ on $\tensor[_{\alg{A}}]{\alg{E}}{}$ is extended to $\hilb_L$ by $$(\psi\otimes e)b\coloneqq\psi\otimes eb.$$ 
In the standard construction, we would use the standard algebra bijection \begin{equation}\label{eq:op}a^\circ\coloneqq\pi_J^*(a)=J\pi(a)^*J^{-1}\end{equation} for the right action of $\alg{A}$ on $\hilb$,\footnote{This right action can be viewed as an antirepresentation of $\alg{A}$ or equivalently, a representation of the opposite algebra $\alg{A}^{\mathrm{op}}$ on $\hilb$, hence our use of $\alg{A}^{\mathrm{op}}$ in \eg, $\tilde{\Omega}^1_D(\alg{A}^{\mathrm{op}})$ and references to ``(twisted-)opposite maps'', \etc}
such that $\psi\otimes a\bar{e}=\psi a\otimes\bar{e}\coloneqq\pi^*_J(a)\psi\otimes\bar{e}$. However, this is where we instead choose to start making changes.

The first change we will make is to use the twisted-opposite maps to require that the right action of $\alg{A}$ on the original Hilbert space be now given by
\begin{equation}\label{eq:psia}
\psi a\coloneqq a^\ominus\psi=J\nu\pi(a)^*\nu^{-1}J^{-1}\psi
\end{equation} for all $a\in\alg{A}$ and $\psi\in\hilb$, which extends to the module structure in the obvious way. This choice of right action is of course not unique, but its motivation will soon become clear.

As in the right module case, the na\"{i}ve implementation of the action of the Dirac operator on $\hilb_L$ given by $$D_\ell(\psi\otimes \bar{e})\coloneqq D\psi\otimes\bar{e}$$ fails to be compatible with the module structure since $D$ does not commute with the algebra and we require the tensor product to be balanced for $\mathcal{A}$, not merely $\CC$. 

Note that, as $\tensor[_{\alg{A}}]{\alg{\bar{E}}}{}$ is finitely-generated and projective by assumption, we have $\tensor[_{\alg{A}}]{\alg{\bar{E}}}{}\simeq\alg{A}^Np$, $p=p^2=p^*\in M_N(\alg{A})$. Thus we introduce an invertible linear module map 
\begin{equation}\label{eq:oldmu} \tilde{\nu}_\alg{E}\colon\tensor[_{\alg{A}}]{\alg{\bar{E}}}{}\to\tensor[_{\alg{A}}]{\alg{\bar{E}}}{}\end{equation}
whose action is given elementwise by $\tilde{\nu}\in\Aut(\alg{A}),$ the twist automorphism of \eqn{tildetwist}, and under which we assume $p$ is invariant. This allows us to make our second change, which is to the construction of the candidate Dirac operator
$${((\id\otimes\tilde{\nu}_\alg{E}^2)\circ D_\ell)(\psi\otimes\bar{e})=D\psi\otimes\tilde{\nu}_\alg{E}^2(\bar{e})}$$ for all $\psi\in\hilb$ and $\bar{e}\in\tensor[_{\alg{A}}]{\alg{\bar{E}}}{}$. Of course, this operator is still not compatible with the tensor product:
\begin{equation}
((\id\otimes\tilde{\nu}_\alg{E}^2)\circ D_\ell)(\psi\otimes a\bar{e})-(( \id \otimes\tilde{\nu}_\alg{E}^2)\circ D_\ell)(\psi a\otimes \bar{e})=-\delta^\odot(a)\psi\otimes\tilde{\nu}_\alg{E}^2(\bar{e})\neq0.\label{eq:naive}
\end{equation}
However, just as in the right module case, the presence of a derivation suggests that we try to introduce a connection.

Thus, working from the derivation \eqn{newder}, we define an $\tilde{\Omega}^1_D(\alg{A}^{\mathrm{op}})$-valued connection on the left $\alg{A}$-module $\tensor[_{\alg{A}}]{\alg{\bar{E}}}{}$ as a map $\nabla^\odot\colon\tensor[_{\alg{A}}]{\alg{\bar{E}}}{}\to\tilde{\Omega}^1_D(\alg{A}^{\mathrm{op}})\otimes_{\alg{A}}\tensor[_{\alg{A}}]{\alg{\bar{E}}}{}$ such that $$\nabla^\odot( a\bar{e} )-a\cdot\nabla^\odot(\bar{e})=\delta^\odot(a)\otimes\bar{e}$$ for all $a\in\alg{A}$ and $\bar{e}\in\tensor[_{\alg{A}}]{\alg{\bar{E}}}{}$, with left multiplication by $\alg{A}$ on $\tilde{\Omega}^1_D(\alg{A}^{\mathrm{op}})\otimes_{\alg{A}}\tensor[_{\alg{A}}]{\alg{\bar{E}}}{}$ given by the left module structure of $\tilde{\Omega}^1_D(\alg{A}^{\mathrm{op}})$. 

Using the action of $\tilde{\Omega}^1_D(\alg{A}^{\mathrm{op}})$ on $\hilb$ we can define the map $\nabla^\odot\colon\hilb\otimes_{\mathbb{C}}\tensor[_{\alg{A}}]{\alg{\bar{E}}}{}\to\hilb\otimes_{\mathbb{C}}\tensor[_{\alg{A}}]{\alg{\bar{E}}}{}$ by
\begin{equation}\label{eq:psiconn}\nabla^\odot(\psi\otimes\bar{e})\coloneqq\psi\nabla^\odot(\bar{e})\end{equation} for all $\bar{e}\in\tensor[_{\alg{A}}]{\alg{\bar{E}}}{}$ and $\psi\in\hilb$. We cannot factorise this into a map on $\hilb\otimes_{\alg{A}}\tensor[_{\alg{A}}]{\alg{\bar{E}}}{}$ because $\psi\nabla^\odot( a\bar{e} )-(\psi a)\nabla^\odot(\bar{e})$ need not vanish. However, the obstruction is captured by the derivation $\delta^\odot$ because the actions of $\alg{A}$ and $\tilde{\Omega}^1_D(\alg{A}^{\mathrm{op}})$ are compatible, \ie, 
\begin{equation}\label{eq:Aomegacomp}(a\cdot\omega^\odot)\psi=\omega^\odot a^\ominus\psi=\omega^\odot(\psi a).\end{equation}
In short, this means that \begin{equation}\label{eq:connLeib}\psi\nabla^\odot(a\bar{e})-\psi a\nabla^\odot(\bar{e})=\delta^\odot(a)\psi\otimes\bar{e}.\end{equation}

\begin{rmk}One notes that, though of course the meanings and rules established are distinct, much of what we have stated (and will state) for objects like $\nabla^\odot$, $\omega^\odot$ and $\delta^\odot$ are analogous to equivalent statements in the familiar case with $\nabla^\circ$, $\omega^\circ$ and $\delta^\circ$ instead. Indeed, many proofs carry over analogously with only the need to substitute different symbols, though to some extent this comes as a result of deliberate choices of notation. \end{rmk}

By \eqn{psiconn} we therefore have 
\begin{align}
((\id\otimes\tilde{\nu}_\alg{E}^2)\circ\nabla^\odot)(\psi\otimes a\bar{e})-((\id\otimes\tilde{\nu}_\alg{E}^2)\circ\nabla^\odot)(\psi a\otimes\bar{e})&=( \id \otimes\tilde{\nu}_\alg{E}^2)(\delta^\odot(a)\psi\otimes\bar{e})\nonumber\\
&=\delta^\odot(a)\psi\otimes\tilde{\nu}_\alg{E}^2(\bar{e}),\label{eq:fix}
\end{align}
and so combining \eqn{naive} and \eqn{fix} we find that the correct construction for the Dirac operator on $\hilb_L$ is given by
$$D_L\coloneqq( \id \otimes\tilde{\nu}_\alg{E}^2)\circ(D_\ell+\nabla^\odot),$$
which is compatible with the module structure since $D_L(\psi\otimes a\bar{e} )-D_L(\psi a\otimes\bar{e})=0$ as desired. 

As before, we take the left $\alg{A}$-module $\tensor[_{\alg{A}}]{\alg{\bar{E}}}{}$ to be finite projective, we know that $\tensor[_{\alg{A}}]{\alg{\bar{E}}}{}\simeq\alg{A}^Np$ with $p=p^*=p^2\in M_N(\alg{A})$. In these terms, the connection decomposes as $$\nabla^\odot=\nabla^\odot_0+\vec{\omega}^\odot$$ with `twisted' Grassmann connection $$\nabla_0^\odot(\bar{e})=(\delta^\odot(e_1),\ldots,\delta^\odot(e_N))p$$ for all $\bar{e}=(e_1,\ldots,e_N)\in\tensor[_{\alg{A}}]{\alg{\bar{E}}}{}$ with $e_j\in\alg{A}$. Meanwhile, $\vec{\omega}^\odot$ is a map $\tensor[_{\alg{A}}]{\alg{\bar{E}}}{}\to\tilde{\Omega}^1_D(\alg{A}^{\mathrm{op}})\otimes_\alg{A}\tensor[_{\alg{A}}]{\alg{\bar{E}}}{}$ which is $\alg{A}$-linear in the sense that $$\vec{\omega}^\odot( a\bar{e} )=a\cdot\vec{\omega}^\odot(\bar{e}).$$ 

We now impose the self-equivalence by taking $\alg{B}=\alg{A}$ and $\conjEBA=\alg{A}$ as a module (\ie, taking $p=1$, $N=1$ considering $\conjEBA$ as finite projective over $\alg{A}$) such that $\hilb_L=\hilb\otimes_{\alg{A}}\alg{A}\simeq\hilb$.

\begin{prop}In the case of Morita self-equivalence, with $\alg{B}=\alg{A}$ and $\conjEBA=\alg{A}$, the form of the Dirac operator on $\hilb_L$ is nothing but the bounded perturbation $$D_L=D+\omega^\odot=D+\varepsilon'\nu J\omega J^{-1}\nu$$ for some $\omega^\odot=\varepsilon'\nu J\omega J^{-1}\nu\in\tilde{\Omega}^1_D(\alg{A}^{\mathrm{op}})$ with $\omega\in\Omega^1_D(\alg{A})$.\end{prop}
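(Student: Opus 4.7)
The plan is to unwind the definition $D_L = (\id \otimes \tilde{\nu}_\alg{E}^2) \circ (D_\ell + \nabla^\odot)$ in the self-equivalence case $\conjEBA = \alg{A}$ (so $N=1$ and $p=1$), then transport the result through the canonical isomorphism $\iota: \hilb \otimes_\alg{A} \alg{A} \xrightarrow{\sim} \hilb$ given by $\iota(\psi \otimes \bar{a}) = \psi a = a^\ominus \psi$ (using the prescribed right action \eqref{eq:psia}), with inverse $\iota^{-1}(\psi) = \psi \otimes \bar{1}$.

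First, I would observe that on the module generator $\bar{1}$ the twisted Grassmann piece drops out. Both twisted-opposite maps are unital ($1^\oplus = 1^\ominus = 1$), so the derivation \eqref{eq:newder} yields $\delta^\odot(1) = D - D = 0$, whence $\nabla_0^\odot(\bar{1}) = 0$. Thus $\nabla^\odot(\bar{1}) = \vec{\omega}^\odot(\bar{1})$, and because $\vec{\omega}^\odot$ is $\alg{A}$-linear on the rank-one free module $\alg{A}$, it is determined by a single element $\omega^\odot \in \tilde{\Omega}_D^1(\alg{A}^{\mathrm{op}})$ via $\vec{\omega}^\odot(\bar{1}) = \omega^\odot \otimes \bar{1}$. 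Similarly, $\tilde{\nu}_\alg{E}^2$ fixes $\bar{1}$ since $\tilde{\nu}^2(1) = 1$.

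Second, combining these reductions with the action \eqref{eq:psiconn} of $\nabla^\odot$ on $\hilb \otimes_\CC \conjEBA$, I would compute, for any $\psi \in \hilb$,
\[
D_L(\psi \otimes \bar{1}) = (\id \otimes \tilde{\nu}_\alg{E}^2)\bigl(D\psi \otimes \bar{1} + \omega^\odot \psi \otimes \bar{1}\bigr) = (D + \omega^\odot)\psi \otimes \bar{1},
\]
so that $\iota \circ D_L \circ \iota^{-1} = D + \omega^\odot$ on $\hilb$. To obtain the advertised closed form, it remains to invoke Lem. \ref{env:nuJomegaJnu}, which rewrites any generator of $\tilde{\Omega}_D^1(\alg{A}^{\mathrm{op}})$ in the form $\varepsilon' \nu J \omega J^{-1} \nu$ with $\omega = \sum_j \pi(a_j^*)\comm{D}{\pi(b_j^*)} \in \Omega_D^1(\alg{A})$, delivering $D_L = D + \varepsilon' \nu J \omega J^{-1} \nu$.

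No essential obstacle is expected, since the technical content has been front-loaded into the careful construction of $D_L$ (compatibility with the tensor product over $\alg{A}$, the introduction of $\tilde{\nu}_\alg{E}^2$) and into Lem. \ref{env:nuJomegaJnu}. The one point that deserves attention is the vanishing of $\delta^\odot(1)$, without which the Grassmann part would leave a residual contribution to $D_L$; this is immediate from the explicit definitions of the twisted-opposite maps, which preserve the unit.
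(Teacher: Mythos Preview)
Your argument is correct. The difference from the paper's proof is purely one of economy: the paper computes $D_L(\psi\otimes a)$ for a \emph{general} $a\in\alg{A}$, tracks the cancellation between $D\psi\otimes\tilde{\nu}^2(a)$ and the Grassmann term $\delta^\odot(a)\psi\otimes1$ (using the identity $(\tilde{\nu}^2(a))^\ominus=a^\oplus$), and arrives at $Da^\ominus\psi\otimes1+\omega^\odot a^\ominus\psi\otimes1$ before identifying $a^\ominus\psi\otimes1$ with $\psi$. You instead evaluate only at the module generator $\bar{1}$, where $\delta^\odot(1)=0$ and $\tilde{\nu}^2(1)=1$ kill the extra terms immediately, and then transport via $\iota$. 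Your shortcut is legitimate precisely because the well-definedness of $D_L$ over the balanced tensor product was already established prior to the proposition; the paper's longer computation effectively re-verifies that compatibility in the special case $p=1$, $N=1$, which is redundant but has the minor virtue of making the cancellation mechanism explicit. Both routes finish identically by invoking Lem.~\ref{env:nuJomegaJnu}.
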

\begin{proof}
Because $\conjEBA\simeq\alg{A}^Np$ with $p=1$ and $N=1$, we have $\nabla^\odot\colon\alg{A}\to\tilde{\Omega}^1_D(\alg{A}^{\mathrm{op}})\otimes_{\alg{A}}\alg{A}$ such that  $\nabla^\odot=\nabla^\odot_0+\vec{\omega}^\odot$ with \begin{align*}\nabla^\odot_0(a)&=\delta^\odot(a)\otimes1,\\
\vec{\omega}^\odot(a)&=(\omega^\odot a^\ominus)\otimes1,
\end{align*} where $\omega^\odot\in\tilde{\Omega}^1_D(\alg{A}^{\mathrm{op}})$. We therefore find
\begin{align*}
D_L(\psi\otimes a)&=( \id \otimes\tilde{\nu}_\alg{E}^2)\circ(D_\ell+\nabla^\odot)(\psi\otimes a)\\
&=(( \id \otimes\tilde{\nu}_\alg{E}^2)\circ D_\ell)(\psi\otimes a)+(( \id \otimes\tilde{\nu}_\alg{E}^2)\circ \delta^\odot)(\psi\otimes a)\\
&\qquad+(( \id \otimes\tilde{\nu}_\alg{E}^2)\circ \omega^\odot)(\psi\otimes a)\\
&=D\psi\otimes\tilde{\nu}^2(a)+( \id \otimes\tilde{\nu}_\alg{E}^2)(\delta^\odot(a)\psi\otimes1)+( \id \otimes\tilde{\nu}_\alg{E}^2)(\omega^\odot a^\ominus\psi\otimes1)\\
&=(D\psi)\tilde{\nu}^2(a)\otimes1+\delta^\odot(a)\psi\otimes1+\omega^\odot a^\ominus\psi\otimes1\\
&=a^\oplus D\psi\otimes1+Da^\ominus\psi\otimes1-a^\oplus D\psi\otimes1+\omega^\odot a^\ominus\psi\otimes1\\
&=Da^\ominus\psi\otimes1+\omega^\odot a^\ominus\psi\otimes1,
\end{align*}
where we have used the fact that $(\tilde{\nu}^2(a))^\ominus=(\tilde{\nu}^{-1}(\tilde{\nu}^2(a)))^\circ=(\tilde{\nu}(a))^\circ=a^\oplus$. By making the identification $\hilb\otimes_\alg{A}\alg{A}\simeq\hilb$ via the identification of $\psi\otimes a=a^\ominus\psi\otimes1$ with $\psi$, one immediately finds that $D_L=D+\omega^\odot$. The final result then follows as a consequence of Lem.~\ref{env:nuJomegaJnu}.
\end{proof}

Something to take note of before moving on is what happens when the original spectral triple $(\alg{A},\hilb,D)$ is even. In that case, we have the following lemma.
\begin{lem}\label{env:ganticomm1}
If a grading operator $\gamma$ anticommutes with the Dirac operator $D$ and commutes with (representations of) all $a\in\alg{A}$, then $\gamma$ also anticommutes with all $\omega\in\Omega^1_D(\alg{A})$.
\end{lem}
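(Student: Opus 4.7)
The plan is to prove this by direct computation on a generic element of $\Omega^1_D(\alg{A})$, reducing the claim to a one-line identity about the commutator $[D,\pi(b)]$ for a single $b\in\alg{A}$, and then extending linearly.

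First I would show that $\gamma$ anticommutes with $[D,\pi(b)]$ for each $b\in\alg{A}$. Expanding the commutator and pushing $\gamma$ through, using $\gamma D=-D\gamma$ on the first term and $\gamma\pi(b)=\pi(b)\gamma$ followed by $\gamma D=-D\gamma$ on the second, one gets
\begin{equation*}
\gamma[D,\pi(b)]=\gamma D\pi(b)-\gamma\pi(b)D=-D\gamma\pi(b)-\pi(b)\gamma D=-D\pi(b)\gamma+\pi(b)D\gamma=-[D,\pi(b)]\gamma,
\end{equation*}
so $\{\gamma,[D,\pi(b)]\}=0$.

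Next I would take a general $\omega=\sum_j\pi(a_j)[D,\pi(b_j)]\in\Omega^1_D(\alg{A})$ and compute $\gamma\omega$ by moving $\gamma$ past each $\pi(a_j)$ (which is allowed by the commutation hypothesis) and then applying the identity from the previous step to each $[D,\pi(b_j)]$. This gives $\gamma\omega=-\omega\gamma$ term by term.

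The argument is entirely algebraic and uses only the two hypotheses $\gamma D=-D\gamma$ and $[\gamma,\pi(a)]=0$ together with the defining form of a noncommutative $1$-form; there is no real obstacle. The only thing to be slightly careful about is that the set of $\omega$ in $\Omega^1_D(\alg{A})$ is closed under finite sums, so anticommutation on each summand suffices to conclude anticommutation on $\omega$, and no closure/limit argument is needed at this stage.
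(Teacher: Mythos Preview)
Your proof is correct; the paper in fact states this lemma without proof, treating it as a standard fact, and your direct computation is exactly the obvious argument one would supply. There is nothing to add or compare.
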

As a result of this lemma, the grading operator will automatically anticommute with $D+\omega$ for $\omega\in\Omega^1_D(\alg{A})$. We now show that it also anticommutes with $D+\varepsilon'\nu J\omega J^{-1}\nu$ as well.
\begin{prop}\label{env:gradprop}
Let $\gamma$ be a grading operator which anticommutes with the Dirac operator $D$ and commutes with (representations of) any $a\in\alg{A}$. Then $\gamma$ also anticommutes with $D+\varepsilon'\nu J\omega J^{-1}\nu$ for any $\omega\in\Omega^1_D(\alg{A})$ provided \begin{equation}\label{eq:gnuJguess}\gamma\nu J=\varepsilon''\nu J\gamma\end{equation} for $\varepsilon''\in\{-1,+1\}$, and \begin{equation}\label{eq:gcomm}\comm{\gamma}{\nu^2}=0.\end{equation}
\end{prop}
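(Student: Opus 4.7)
The plan is to reduce everything to the $\omega$-part, using $\gamma D=-D\gamma$, and then push $\gamma$ through the sandwich $\nu J(\cdot)J^{-1}\nu$ using the two hypotheses. Since $\gamma$ automatically anticommutes with $D$, it suffices to show that $\gamma$ anticommutes with $\varepsilon'\nu J\omega J^{-1}\nu$ for any $\omega\in\Omega^1_D(\alg{A})$. By Lemma~\ref{env:ganticomm1}, $\gamma\omega=-\omega\gamma$, so the problem becomes purely one of commuting $\gamma$ past $\nu J$ on the left and past $J^{-1}\nu$ on the right.

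The left-hand side is immediate: \eqn{gnuJguess} gives $\gamma\nu J=\varepsilon''\nu J\gamma$, so
\[
\gamma\,\nu J\omega J^{-1}\nu \;=\; \varepsilon''\nu J\gamma\omega J^{-1}\nu \;=\; -\varepsilon''\nu J\omega\gamma J^{-1}\nu,
\]
where in the last step we used Lemma~\ref{env:ganticomm1}. What remains is to establish the partner identity $\gamma J^{-1}\nu=\varepsilon''J^{-1}\nu\gamma$. This is the main (but modest) obstacle: the $\varepsilon''$ condition as stated in \eqn{gnuJguess} only tells us how $\gamma$ interacts with the product $\nu J$, not with $J^{-1}\nu$ separately, so naively it looks like we are one commutation relation short.

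This is exactly where the extra assumption \eqn{gcomm} enters. I would rewrite \eqn{gnuJguess} as $\gamma\nu=\varepsilon''\nu J\gamma J^{-1}$ and then multiply on the right by $\nu$ to obtain $\gamma\nu^2=\varepsilon''\nu J\gamma J^{-1}\nu$. By \eqn{gcomm} the left-hand side equals $\nu^2\gamma$, so cancelling one factor of $\nu$ on the left yields $\nu\gamma=\varepsilon'' J\gamma J^{-1}\nu$, which is precisely the desired identity $\gamma J^{-1}\nu=\varepsilon''J^{-1}\nu\gamma$ after multiplying by $J^{-1}$ on the left and rearranging.

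With this in hand the computation closes:
\[
\gamma\,\nu J\omega J^{-1}\nu \;=\; -\varepsilon''\nu J\omega\gamma J^{-1}\nu \;=\; -(\varepsilon'')^2\,\nu J\omega J^{-1}\nu\,\gamma \;=\; -\nu J\omega J^{-1}\nu\,\gamma,
\]
so multiplying through by $\varepsilon'$ and adding the already established $\gamma D=-D\gamma$ gives $\gamma(D+\varepsilon'\nu J\omega J^{-1}\nu)=-(D+\varepsilon'\nu J\omega J^{-1}\nu)\gamma$, as required. The proof thus highlights why both \eqn{gnuJguess} and \eqn{gcomm} are imposed in Def.~\ref{env:nueeC}: the former is what is literally needed for the $\nu J$ on the left of $\omega$, while the latter is exactly the additional ingredient that promotes it to the analogous relation for the $J^{-1}\nu$ on the right.
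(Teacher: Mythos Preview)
Your proof is correct and follows the same overall strategy as the paper: reduce to showing $\gamma$ anticommutes with $\nu J\omega J^{-1}\nu$, push $\gamma$ past $\nu J$ using \eqn{gnuJguess}, past $\omega$ using Lemma~\ref{env:ganticomm1}, and then establish the partner identity $\gamma J^{-1}\nu=\varepsilon''J^{-1}\nu\gamma$ to push it past $J^{-1}\nu$.

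The one point of difference is in how that partner identity is obtained. The paper derives it by writing $J^{-1}\nu=\varepsilon J\nu=\varepsilon\nu J\nu^2$ via $J^2=\varepsilon$ and the regularity condition \eqn{RC}, then applying \eqn{gnuJguess} and \eqn{gcomm}. Your derivation instead manipulates \eqn{gnuJguess} directly (right-multiply by $J^{-1}\nu$, use \eqn{gcomm}, cancel a $\nu$, left-multiply by $J^{-1}$), which is slightly more economical: it uses only the two hypotheses stated in the Proposition and does not invoke \eqn{RC} or the value of $J^2$. Both routes are short and valid; yours has the minor advantage of making transparent that the result depends only on \eqn{gnuJguess} and \eqn{gcomm}, not on the remaining axioms of the twisted real structure.
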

\begin{proof}
We first focus on the second term of the fluctuated Dirac operator. By \eqn{gnuJguess}, we have that $$\gamma(\varepsilon'\nu J\omega J^{-1}\nu)=\varepsilon'\varepsilon''\nu J\gamma\omega J^{-1}\nu=-\varepsilon'\varepsilon''\nu J\omega\gamma J^{-1}\nu,$$
where the second equality is due to Lem.~\ref{env:ganticomm1}. We now have that
\begin{align*}
\gamma J^{-1}\nu&=\varepsilon\gamma J\nu\\
&=\varepsilon\gamma\nu J\nu^2\text{ using \eqn{RC},}\\
&=\varepsilon\varepsilon''\nu J\gamma\nu^2\text{ by \eqn{gnuJguess},}\\
&=\varepsilon\varepsilon''\nu J\nu^2\gamma\text{ by \eqn{gcomm},}\\
&=\varepsilon\varepsilon''J\nu\gamma\\
&=\varepsilon''J^{-1}\nu\gamma,
\end{align*} and therefore $$\gamma(\varepsilon'\nu J\omega J^{-1}\nu)=-\varepsilon'\varepsilon''\nu J\omega\gamma J^{-1}\nu=-\varepsilon'\varepsilon''\nu J\omega\varepsilon''J^{-1}\nu\gamma=-(\varepsilon'\nu J\omega J^{-1}\nu)\gamma.$$ As $\gamma D=-D\gamma$ by assumption, this is sufficient to establish the result.
\end{proof}
\begin{rmk}Indeed, that $\varepsilon'\nu J\omega J^{-1}\nu$ should anticommute with $\gamma$ is what motivates Defn. \ref{env:nueeC}. Contrast this with what is taken in the literature, $\gamma J=\varepsilon''J\gamma$, \cf \cite{BCDS16,BDS19}, which is insufficient to establish the anticommutation, even assuming $\gamma\nu^2=\nu^2\gamma$. Note that, alternatively to $\gamma\nu J=\varepsilon''\nu J\gamma$, one could instead take $\gamma J\nu=\varepsilon''J\nu\gamma$. Both only hold simultaneously if $\comm{J}{\nu}=0$, which by \eqn{RC} only happens when $\nu^2=1$.\end{rmk}

Thus we find that for a 1-form $\omega$, one has that $(\alg{A},\hilb,D+\varepsilon'\nu J\omega J^{-1}\nu,\gamma)$ is an even spectral triple. However, it fails to admit $(J,\nu)$ as a twisted real structure for more or less the same reason as in the right module case. We will resolve this problem for both left and right module cases in the next section.

\subsubsection{Bimodule and twisted real structure}

To ensure the compatibility of the Dirac operator constructed from Morita self-equivalences with the twisted real structure, one needs to combine the above two left and right module constructions. First, one fluctuates spectral triple $(\alg{A},\hilb,D)$ using the bimodule $\EBA=\alg{A}$, and then one fluctuates the resulting triple by the conjugate bimodule $\conjEBA=\alg{A}$. This yields the triple $(\alg{A},\hilb,D')$ where \begin{equation}\label{eq:D'}D'\coloneqq D+\omega_R+\varepsilon'\nu J\omega_LJ^{-1}\nu\end{equation} with $\omega_L$ and $\omega_R$ two \emph{a priori} distinct elements of $\Omega^1_D(\alg{A})$. The compatibility of \eqn{D'} with \eqn{nueC} for the real structure $J$ and twist operator $\nu$ can then always be demanded, as the following proposition guarantees. 
\begin{prop}
The Dirac operator $D'$ satisfies \eqn{nueC} if and only if there exists an element $\omega\in\Omega^1_D(\alg{A})$ such that $$D'=D+\omega+\varepsilon'\nu J\omega J^{-1}\nu.$$
\end{prop}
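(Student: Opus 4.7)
My plan is to prove both directions by direct computation, organized around the observation that the map $\Psi\colon X\mapsto \varepsilon'\nu J X J^{-1}\nu$ is an involution on bounded operators on $\hilb$; this follows from \eqn{RC} (equivalently $\nu J\nu=J$, whence $J\nu J^{-1}=\nu^{-1}$) together with $J^2=\varepsilon$. The ``if'' direction is then immediate: substituting $D'=D+\omega+\Psi(\omega)$ into $D'J\nu-\varepsilon'\nu JD'$, the $D$-piece vanishes by \eqn{nueC}, the term from $\omega$ equals $\omega J\nu-\varepsilon'\nu J\omega$, and the term from $\Psi(\omega)$ collapses (using $J^{-1}\nu J\nu=1$ and $\varepsilon J^{-1}=J$) to $\varepsilon'\nu J\omega-\omega J\nu$, which is its negative.

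For the converse, I would run the same computation for the general asymmetric $D'=D+\omega_R+\varepsilon'\nu J\omega_L J^{-1}\nu$, obtaining
\[
D'J\nu-\varepsilon'\nu JD' \;=\; (\omega_R-\omega_L)J\nu - \varepsilon'\nu J(\omega_R-\omega_L).
\]
Thus the hypothesis that $D'$ satisfies \eqn{nueC} becomes $\eta J\nu=\varepsilon'\nu J\eta$ for $\eta\coloneqq\omega_R-\omega_L$. Right-multiplying this identity by $J^{-1}\nu$ and using $J\nu J^{-1}\nu=1$ (another form of \eqn{RC}) converts it into the clean statement $\Psi(\eta)=\eta$; that is, the difference $\omega_R-\omega_L$ is a $\Psi$-fixed point.

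I would then take the averaged choice $\omega\coloneqq\tfrac{1}{2}(\omega_R+\omega_L)\in\Omega^1_D(\alg{A})$ (which lies in $\Omega^1_D(\alg{A})$ since the algebras of interest contain $\tfrac{1}{2}$) and verify directly that $\omega+\Psi(\omega)-\omega_R-\Psi(\omega_L)=\tfrac{1}{2}\bigl(\Psi(\omega_R-\omega_L)-(\omega_R-\omega_L)\bigr)$, which vanishes by the $\Psi$-fixed-point property just derived. The only substantive obstacle is the translation from the twisted-commutation condition on $\eta$ to the $\Psi$-fixed-point condition, which is the content of the algebraic manipulation via \eqn{RC}; the averaging step is then the standard fact that, for an involution $\Psi$, the image of $I+\Psi$ is exactly the fixed-point set.
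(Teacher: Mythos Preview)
Your proof is correct and follows essentially the same route as the paper's. Both arguments reduce \eqn{nueC} for $D'$ to the fixed-point condition $\Psi(\omega_R-\omega_L)=\omega_R-\omega_L$ (the paper writes this as $(\omega_L-\omega_R)-\varepsilon'\nu J(\omega_L-\omega_R)J^{-1}\nu=0$) and then take $\omega=\tfrac12(\omega_R+\omega_L)$; your explicit framing via the involution $\Psi$ and your separate treatment of the ``if'' direction are more detailed than the paper's terse ``add half of zero'' step, but the underlying computation is identical.
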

\begin{proof}
We have that $D'=D+\omega_R+\varepsilon'\nu J\omega_LJ^{-1}\nu$. For $D'$ to satisfy \eqn{nueC}, we must have $D'J\nu=\varepsilon'\nu JD'$, which is the case if and only if $$(\omega_L-\omega_R)-\varepsilon'\nu J(\omega_L-\omega_R)J^{-1}\nu=0.$$ Adding half of the left hand side of this equation (which is still equal to $0$) to the right hand side of \eqn{D'}, one gets $$D'=D+\frac{1}{2}(\omega_R+\omega_L)+\varepsilon'\nu J\frac{1}{2}(\omega_R+\omega_L)J^{-1}\nu.$$ This gives the claimed result for $\omega=\frac{1}{2}(\omega_R+\omega_L).$
\end{proof}

The sum total of these past three subsections can thus be expressed as follows.
\begin{thm}\label{env:innerflucthm}
For a spectral triple with twisted real structure $(\alg{A},\hilb,D,J,\nu)$, the inner-fluctuated Dirac operator $$D_\omega\coloneqq D+\omega+\varepsilon'\nu J\omega J^{-1}\nu$$ for $\omega=\omega^*\in\Omega^1_D(\alg{A})$ arises from $D$ by implementing the bimodule Morita self-equivalence of $(\alg{A},\hilb,D,J,\nu)$ and requiring that the resulting Dirac operator satisfies \eqn{nueC} with respect to $J$ and $\nu$. Thus the data $(\alg{A},\hilb,D_\omega,J,\nu)$ form a spectral triple with twisted real structure.
\end{thm}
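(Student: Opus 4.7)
The plan is to assemble the theorem as the explicit synthesis of the three Morita self-equivalence constructions carried out in the three preceding subsections, supplemented by the verification that the conditions of Def.~\ref{env:optwistdef} are stable under the resulting bounded perturbation of $D$. The key observation is that the theorem asserts \emph{both} that $D_\omega$ arises from the bimodule Morita self-equivalence procedure \emph{and} that $(\alg{A},\hilb,D_\omega,J,\nu)$ is still a spectral triple with twisted real structure; these are logically independent and I would address them in turn.

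First, I would recall the right-module construction: treating the Morita equivalence bimodule $\EBA=\alg{A}$ as a finite projective right $\alg{A}$-module and introducing the $\Omega^1_D(\alg{A})$-valued connection $\nabla=\delta+\omega_R$ on it yields, under the isomorphism $\alg{A}\otimes_\alg{A}\hilb\simeq\hilb$, the fluctuated operator $D+\omega_R$ with $\omega_R\in\Omega^1_D(\alg{A})$. Next, invoking the left-module construction through the conjugate bimodule $\conjEBA=\alg{A}$, with the twisted-opposite right action \eqn{psia} and the twisted connection $\nabla^\odot=\nabla_0^\odot+\vec{\omega}_L^\odot$, Prop.~preceding the theorem identifies the resulting fluctuation with $\omega_L^\odot\in\tilde{\Omega}^1_D(\alg{A}^{\mathrm{op}})$, which by Lem.~\ref{env:nuJomegaJnu} equals $\varepsilon'\nu J\omega_L J^{-1}\nu$ for some $\omega_L\in\Omega^1_D(\alg{A})$. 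The bimodule combination then produces $D'$ as in \eqn{D'} with \emph{a priori} distinct $\omega_L,\omega_R$, and the immediately preceding proposition shows that demanding \eqn{nueC} forces (after symmetrising) $\omega_L=\omega_R\eqqcolon\omega$, giving the stated form.

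Second, I would check the axioms of Def.~\ref{env:optwistdef} for $(\alg{A},\hilb,D_\omega,J,\nu)$. The zeroth-order condition \eqn{0C} and regularity condition \eqn{RC} involve only $\pi(\alg{A})$, $J$ and $\nu$, so they are inherited unchanged. Self-adjointness of $D_\omega$ follows from $\omega=\omega^*$ together with \eqn{RC} and \eqn{nueC} applied to $\varepsilon'\nu J\omega J^{-1}\nu$. The twisted $\varepsilon'$ condition \eqn{nueC} for $D_\omega$ is exactly what the final proposition was designed to guarantee, so this is automatic. Compactness of the resolvent survives because $D_\omega-D$ is bounded. The remaining check is the $\nu$-twisted first-order condition \eqn{nu1C}, and here I would use Lem.~\ref{env:altnu1C} together with Lem.~\ref{env:Jacobi}: writing \eqn{nu1C} for $D_\omega$ as the vanishing of the twisted double commutator, the $D$-contribution vanishes by hypothesis, the $\omega$-contribution vanishes because $\omega\in\Omega^1_D(\alg{A})$ and the twisted Jacobi identity of Lem.~\ref{env:Jacobi} reduces it to an already-vanishing expression, and the $\nu J\omega J^{-1}\nu$-contribution vanishes by a parallel argument after translating through the twisted-opposite maps $a^\oplus,a^\ominus$.

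The main obstacle I anticipate is the twisted first-order check for the cross term $\varepsilon'\nu J\omega J^{-1}\nu$: unlike the untwisted case, one cannot simply use $[\pi(a),J\pi(b)J^{-1}]=0$ and the ordinary first-order condition, because the twists $\hat\nu,\tilde\nu$ intervene nontrivially and the relevant commutation of $\nu$ with $\pi(a)$ and with $D$ is only controlled up to the automorphisms $\hat\nu^{\pm 1}$. The correct bookkeeping is precisely what the twisted-opposite formalism and the rewriting in Lem.~\ref{env:nuJomegaJnu} were set up to handle, so I expect the verification to reduce cleanly to a twisted analogue of the standard computation, but one must be careful to track the appearances of $\hat\nu^{\pm 2}$ and $\tilde\nu^{\pm 2}$ correctly at each step.
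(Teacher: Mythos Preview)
Your proposal follows essentially the same path as the paper: the theorem is presented there as the synthesis of the three preceding subsections (right-module fluctuation, left-module fluctuation via the twisted-opposite calculus, and the bimodule combination forcing $\omega_L=\omega_R$), and the $\nu$-twisted first-order condition for $D_\omega$ is verified in a separate proposition immediately afterward using exactly the tools you name (Lem.~\ref{env:altnu1C}, Lem.~\ref{env:Jacobi}, and Lem.~\ref{env:nuJomegaJnu} for the $\varepsilon'\nu J\omega J^{-1}\nu$ term). Your inclusion of that check inside the proof of the theorem rather than as a follow-up proposition is only an organisational difference.

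One genuine correction, however: your claim that self-adjointness of $\varepsilon'\nu J\omega J^{-1}\nu$ follows from $\omega=\omega^*$ together with \eqn{RC} and \eqn{nueC} is not correct. Taking adjoints gives $(\varepsilon'\nu J\omega J^{-1}\nu)^*=\varepsilon'\nu^* J\omega J^{-1}\nu^*$, and neither \eqn{RC} (which relates $\nu$, $J$ and $\nu$, not $\nu^*$) nor \eqn{nueC} (which involves $D$) provides any control of $\nu^*$ in terms of $\nu$. The paper does \emph{not} verify self-adjointness of $D_\omega$ at this stage; the issue is deferred to \S4.3, where $\nu=\pm\nu^*$ is identified as the natural sufficient condition. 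So either add that hypothesis explicitly here, or follow the paper in postponing the self-adjointness discussion.
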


\begin{rmk}
As noted in \cite{BCDS16}, Dirac operators are closed with respect to these inner fluctuations, \ie, $$(D_{\omega})_{\omega'}=D_{\omega''}$$ for $D_\omega\coloneqq D+\omega+\varepsilon'\nu J\omega J^{-1}\nu$ with $\omega,\omega''\in\Omega_D^1(\alg{A})$ and $\omega'\in\Omega_{D_{\omega}}^1(\alg{A})$. Explicitly, if $\omega=\sum_i\pi(a_i)\comm{D}{\pi(b_i)}$ and $\omega'=\sum_j\pi(c_j)\comm{D_{\omega}}{\pi(d_j)}$ for $a_i,b_i,c_i,d_i\in\alg{A}$, then one finds (suppressing representations and summations/indices) that $\omega''$ is given by $$\omega''=(a-cda)\comm{D}{b}+(c-cab)\comm{D}{d}+cd\comm{D}{bd}.$$
\end{rmk}

All that remains is to verify that the inner-fluctuated Dirac operator $D_\omega$ satisfies the twisted first-order condition (assuming the original Dirac operator $D$ does as well). 

\begin{prop}
If $(\alg{A},\hilb,D,J,\nu)$ is a spectral triple with twisted real structure, then the fluctuated Dirac operator $D_\omega=D+\omega+\varepsilon'\nu J\omega J^{-1}\nu$ for $\omega\in\Omega^1_D(\alg{A})$ satisfies \eqn{nu1C} with respect to the real structure $J$ and twist operator $\nu$. 
\end{prop}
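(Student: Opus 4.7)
The plan is to write $D_\omega = D + \omega + \omega^\odot$ with $\omega^\odot \coloneqq \varepsilon'\nu J\omega J^{-1}\nu$, expand $\comm{D_\omega}{\pi(a)}$ termwise, and reduce everything to the $\nu$-twisted first-order condition already enjoyed by $D$.

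The crux is to show $\comm{\omega^\odot}{\pi(a)} = 0$ for all $a \in \alg{A}$. By linearity it suffices to handle $\omega = \pi(c)\comm{D}{\pi(d)}$, which in the notation of Lem.~\ref{env:nuJomegaJnu} corresponds to $a_j = c^*$ and $b_j = d^*$; the lemma then rewrites $\omega^\odot$ as $a_j^\oplus(D b_j^\ominus - b_j^\oplus D)$. Using $\pi(e^*) = \pi(e)^*$ together with $\nu^{\pm1}\pi(e)\nu^{\mp1} = \pi(\hat{\nu}^{\pm1}(e))$, the definitions of the twisted-opposite maps identify $a_j^\oplus = \pi_J(\hat{\nu}^{-1}(c))$, $b_j^\oplus = \pi_J(\hat{\nu}^{-1}(d))$, and $b_j^\ominus = \pi_J(\hat{\nu}(d))$, all of which commute with $\pi(a)$ by \eqn{0C}. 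Shuffling $\pi(a)$ past these factors collapses $\comm{\omega^\odot}{\pi(a)}$ to $\pi_J(\hat{\nu}^{-1}(c))\bigl(\comm{D}{\pi(a)}\pi_J(\hat{\nu}(d)) - \pi_J(\hat{\nu}^{-1}(d))\comm{D}{\pi(a)}\bigr)$, and the parenthesised expression vanishes by the $\nu$-twisted first-order condition on $D$ applied with $b$ replaced by $\hat{\nu}^{-1}(d)$.

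With this in hand, $\comm{D_\omega}{\pi(a)} = \comm{D}{\pi(a)} + \comm{\omega}{\pi(a)}$, and a routine Leibniz expansion places $\comm{\omega}{\pi(a)}$ in $\Omega^1_D(\alg{A})$, so $\comm{D_\omega}{\pi(a)} \in \Omega^1_D(\alg{A})$. It then suffices to verify that every $\omega' = \sum_i \pi(c_i)\comm{D}{\pi(d_i)}$ in $\Omega^1_D(\alg{A})$ satisfies $\omega'\,\pi_J(\hat{\nu}^2(b)) = \pi_J(b)\,\omega'$: one pushes $\pi_J(\hat{\nu}^2(b))$ leftward through each $\comm{D}{\pi(d_i)}$ via \eqn{nu1C} on $D$ and then through each $\pi(c_i)$ via \eqn{0C}. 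Setting $\omega' = \comm{D_\omega}{\pi(a)}$ yields \eqn{nu1C} for $D_\omega$.

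The principal obstacle is the first step, since at face value $\omega^\odot$ is an opaque sandwich of $\nu$'s and $J$'s with no obvious reason to commute with $\pi(\alg{A})$; the whole proof hinges on using Lem.~\ref{env:nuJomegaJnu} to re-express it with coefficients in $\pi_J(\alg{A})$, so that a single application of the twisted first-order condition on $D$ suffices to collapse the resulting commutator.
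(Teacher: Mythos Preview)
Your proof is correct and follows essentially the same strategy as the paper's: both arguments split $D_\omega$ into its three summands, use Lem.~\ref{env:nuJomegaJnu} to rewrite $\omega^\odot$ with coefficients in $\pi_J(\alg{A})$, and then combine \eqn{0C} with the twisted first-order condition on $D$ to show that $\comm{\omega^\odot}{\pi(a)}=0$, handling the $\omega$ term by a Leibniz expansion. The only cosmetic difference is that the paper packages the collapse of $\comm{\omega^\odot}{\pi(a)}$ by invoking Lem.~\ref{env:Jacobi} and Lem.~\ref{env:altnu1C}, whereas you carry out that same manipulation directly and appeal to the balanced form \eqn{nu1Cbal} (your substitution $b\mapsto\hat{\nu}^{-1}(d)$ in \eqn{nu1C} is precisely that).
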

\begin{proof}
In order for $D_\omega$ to satisfy \eqn{nu1C}, it is sufficient for each of the three summands to individually satisfy \eqn{nu1C}. As $(\alg{A},\hilb,D,J,\nu)$ is a spectral triple with twisted real structure, $D$ satisfies \eqn{nu1C} by assumption. 

Since $\omega\in\Omega^1_D(\alg{A})$, we can write it in the form $\omega=\sum_j\pi(c_j)\comm{D}{\pi(d_j)}$ for $c_j,d_j\in\alg{A}$. In that case, for $a\in\alg{A}$ we have (suppressing summations and representations)
\begin{align*}\comm{\omega}{a}&=\comm{c\comm{D}{d}}{a}\\
&=c\comm{D}{da}-cd\comm{D}{a}-ac\comm{D}{d}.\end{align*}
Each three of these terms satisfies \eqn{nu1C} because $D$ does by assumption. As an example calculation, $D$ satisfying \eqn{nu1C} and $a,d\in\alg{A}$ implies that $\comm{D}{da}J\nu^2b\nu^{-2}J^{-1}=JbJ^{-1}\comm{D}{da}$. Since \eqn{0C} implies that $cJbJ^{-1}=JbJ^{-1}c$, we thus have that $c\comm{D}{da}J\nu^2b\nu^{-2}J^{-1}=JbJ^{-1}c\comm{D}{da}$, \ie, that $c\comm{D}{da}$ satisfies \eqn{nu1C}. The computation for the other terms is carried out in the same way.

For the final term, we can simply make use of Lem.~\ref{env:nuJomegaJnu} to note that $\nu J\omega J^{-1}\nu\in\tilde{\Omega}^1_D(\alg{A}^{\mathrm{op}})$ for $\omega\in\Omega^1_D(\alg{A})$. By the definition of $\tilde{\Omega}^1_D(\alg{A}^{\mathrm{op}})$, any element of $\tilde{\Omega}^1_D(\alg{A}^{\mathrm{op}})$ can be written in the form $\sum_j\pi^*_J(\tilde{\nu}(c_j))\comm{D}{\tilde{\nu}^{-1}(d_j)}^{\pi^*_J}_{\tilde{\nu}^2}$ or, equivalently, $\sum_j\pi_J(\hat{\nu}^{-1}(c_j))\comm{D}{\hat{\nu}(d_j)}^{\pi_J}_{\hat{\nu}^{-2}}$ for $c_j,d_j\in\alg{A}$. We now compute (again suppressing summations and representations) that $$\comm{J\hat{\nu}^{-1}(c)J^{-1}\comm{D}{\hat{\nu}(d)}^{\pi_J}_{\hat{\nu}^{-2}}}{a}=J\hat{\nu}^{-1}(c)J^{-1}\comm{\comm{D}{\hat{\nu}(d)}^{\pi_J}_{\hat{\nu}^{-2}}}{a}$$ by \eqn{0C}. However, from Lem.~\ref{env:Jacobi} we have that $\comm{\comm{D}{\hat{\nu}(d)}^{\pi_J}_{\hat{\nu}^{-2}}}{a}=\comm{\comm{D}{a}}{\hat{\nu}(d)}^{\pi_J}_{\hat{\nu}^{-2}}$ and since $D$ satisfies \eqn{nu1C}, we find $\comm{\comm{D}{a}}{\hat{\nu}(d)}^{\pi_J}_{\hat{\nu}^{-2}}=0$ by Lem.~\ref{env:altnu1C}, and thus \eqn{nu1C} is satisfied by $\nu J\omega J^{-1}\nu$ trivially. 
\end{proof}

The general procedure described in this section relies on defining connections on projective modules. In the setting of Hopf-Galois extensions (`quantum principal bundles'), strong connections similarly induce connections on associated projective modules \cite{DGH01}. It would be interesting to investigate the link between fluctuations of the Dirac operator in the various approaches to Connes' noncommutative geometry (including spectral triples with (twisted) real structure, real twisted spectral triples, real spectral triples without the first-order condition, \etc) and strong connections for Hopf-Galois extensions in scenarios where the formalisms are compatible. As such an investigation is beyond the scope of this paper, however, we leave it for future research.

\section{Gauge transformations of spectral triples with twisted real structure}

\subsection{Gauge transformations as a Morita self-equivalence}

In the context of noncommutative geometry, gauge transformations can be viewed as a special case of Morita self-equivalences, which we will describe briefly as follows. Considering a Morita equivalence between two algebras $\alg{A}$ (acting on $\hilb$) and $\alg{B}$, we have a Morita equivalence bimodule $\EBA$ and the conjugate bimodule $\conjEBA$ (such that $\alg{B}$ acts on $\EBA\otimes_{\alg{A}}\hilb\otimes_{\alg{A}}\conjEBA$ as a Hilbert space). As mentioned in the previous section, when $\EBA$ is finite projective as a right $\alg{A}$-module (which we will assume it to be), we have $\alg{B}\simeq\End{\alg{A}}{\rmod{\alg{E}}{\alg{A}}}$ (and similarly in the left module case). Consider the group of unitary endomorphisms $\alg{U}(\alg{E})\coloneqq\{u\in\End{\alg{A}}{\alg{E}}:uu^*=u^*u=\id_\alg{E}\}$ (with $\alg{U}(\bar{\alg{E}})$ defined likewise). Then for $u\in\alg{U}(\alg{E})\simeq\alg{U}(\bar{\alg{E}})$ we call a \emph{gauge transformation} the action of $u$ on $\EBA$ ($u(e)=ue$) and on $\conjEBA$ ($u(\bar{e})=\widebar{eu^*}$).

In the case of self-equivalence, $\EBA=\conjEBA=\alg{A}$ and $u\in\alg{U}(\alg{A})$. Then gauge transformations on the algebra can be interpreted as inner automorphisms $a\mapsto uau^*$ for all $a\in\alg{A}$. This in turn gives gauge transformations on the Hilbert space $\alg{A}\otimes_{\alg{A}}\hilb\otimes_{\alg{A}}\alg{A}\simeq\hilb$ as $\psi\mapsto u\psi u^*$ for all $\psi\in\hilb$. Here, we already see a divergence from the standard construction due to the modified choice of right action \eqn{psia}, as
\begin{equation}\label{eq:Adu} u\psi u^*\eqqcolon\Ad(u)\psi=\pi(u)(u^*)^\ominus\psi=\pi(u)J\nu\pi(u)\nu^{-1}J^{-1}\psi \end{equation}
for all $\psi\in\hilb$, which contrasts with the usual case where $u\psi u^*=\pi(u)J\pi(u)J^{-1}\psi$. Note that in terms of the notation we have developed, this has the inverse $$\Ad(u)^{-1}\psi=u^*\psi u=\pi(u^*)u^\ominus\psi=\pi(u^*)J\nu\pi(u^*)\nu^{-1}J^{-1}\psi.$$ In addition to the adjoint action above, by $\widetilde{\Ad}(u)$ we denote the complementary `twisted' adjoint action 
\begin{equation}\label{eq:twistedAdu}\widetilde{\Ad}(u)\psi\coloneqq u\psi\tilde{\nu}^2(u^*)=\pi(u)(u^*)^\oplus\psi=\pi(u)J\nu^{-1}\pi(u)\nu J^{-1}\psi\end{equation} for all $\psi\in\hilb$, with inverse
\[ \widetilde{\Ad}(u)^{-1}\psi=\pi(u^*)J\nu^{-1}\pi(u^*)\nu J^{-1}\psi. \]
A peculiarity of \eqn{Adu} (and \eqn{twistedAdu}) is that in general, neither $\Ad(u)$ nor $\widetilde{\Ad}(u)$ is a unitary operator despite being generated by a unitary element, due to the presence of the twist. This means that in the context of twisted real structures, Morita equivalences do not generally give unitary equivalences of spectral triples (which will be established explicitly in Prop. \ref{env:VDW}).

Gauge transformations are also defined on (right $\alg{A}$-module) connections $\nabla\colon\alg{E}\to\alg{E}\otimes_{\alg{A}}\Omega^1(\alg{A})$ as maps
\begin{equation}\label{eq:conngaugetf} \nabla\mapsto\nabla^u\coloneqq u\nabla u^* \end{equation}
for $u\in\alg{U}(\alg{E})$, where $\alg{U}(\alg{E})$ acts on $\EBA\otimes_{\alg{A}}\Omega^1(\alg{A})$ by $u\otimes\id$. Note that the gauge-transformed connection $\nabla^u$ is itself a connection for any $u\in\alg{U}(\alg{E})$ and connection $\nabla$. 

For the case of right module Morita \emph{self}-equivalence, we have $\tensor{\alg{E}}{_{\alg{A}}}=\alg{A}$ and the connection is given by $$\nabla=\delta+\omega$$ for $\omega\in\Omega^1_D(\alg{A})$ self-adjoint. In this context, the connection 1-form $\omega$ is referred to as a \emph{gauge potential}, and a gauge transformation maps the connection to $$\nabla^u=\delta+\omega^u$$ where the gauge transformation is fully encoded in the transformation law for the gauge potential\footnote{For $a\in\alg{A}$, $\nabla^u(a)=u\cdot\delta(u^*a)+u\cdot\omega(u^*a)=\big(\delta+(u\cdot\omega\cdot u^*)+(u\cdot\delta(u^*))\big)(a)\equiv(\delta+\omega^u)(a)$.} 
\begin{equation}\label{eq:gauge1formtf}\omega\mapsto\omega^u=\pi(u)\omega\pi(u^*)+\pi(u)\comm{D}{\pi(u^*)}.\end{equation} 
In the case of right module self-equivalence, the implementation of gauge transformations on the Dirac operator amounts to substituting the gauge transformed connection $\nabla\mapsto\nabla^u$ into the definition of $D_R$. Thus for a gauge transformation by the unitary $u\in\alg{A}$, $D_R=D+\omega$ is mapped to $$D_R^u=D+\omega^u=D+\pi(u)\omega\pi(u^*)+\pi(u)\comm{D}{\pi(u^*)}.$$

As we have already seen in Lem. \ref{env:nuJomegaJnu}, for a given twisted real structure $(J,\nu)$ it is possible to find a `twisted-opposite' 1-form $\omega^\odot$ associated to the 1-form $\omega$ using the map $\omega\mapsto\varepsilon'\nu J\omega J^{-1}\nu$. However, it is not clear that this map is compatible with gauge transformation of gauge potentials \eqn{gauge1formtf}, or in other words, it is not clear if $(\omega^\odot)^u$ coming from the action of $u$ on the connection $\nabla^\odot$ is actually equal to $(\omega^u)^\odot$. In what follows, we show that this is indeed the case.
\begin{prop}\label{env:uodot}
Let $\omega\in\Omega^1_D(\alg{A})$ be a gauge potential. Under the gauge transformation $\omega\mapsto\omega^u$, given in \eqn{gauge1formtf} for $u\in\alg{U}(\alg{A})$, the corresponding twisted 1-form under the map $\omega^u\mapsto(\omega^u)^\odot$ is given by
\[ (\omega^u)^\odot=(u^*)^\oplus\omega^\odot u^\ominus+(u^*)^\oplus\delta^\odot(u). \]
\end{prop}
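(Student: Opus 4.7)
The plan is to substitute the explicit formula \eqn{gauge1formtf} for $\omega^u$ into the definition $(\omega^u)^\odot = \varepsilon'\nu J\omega^u J^{-1}\nu$, and verify the identity by treating the homogeneous piece $\pi(u)\omega\pi(u^*)$ and the inhomogeneous piece $\pi(u)\comm{D}{\pi(u^*)}$ separately. The workhorse identities are the rearrangements $\nu J = J\nu^{-1}$ and $J^{-1}\nu = \nu^{-1}J^{-1}$ of \eqn{RC}, together with the consequence $\nu J^{-1}DJ\nu = \varepsilon' D$ of \eqn{nueC}.

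For the homogeneous piece, one transports the outer $\nu$'s inward through $\pi(u)$ and $\pi(u^*)$ using the $\nu$-$J$ relations to rewrite $\varepsilon'\nu J\pi(u)\omega\pi(u^*)J^{-1}\nu$ as $\varepsilon' J\nu^{-1}\pi(u)\omega\pi(u^*)\nu^{-1}J^{-1}$. An identical rearrangement applied to the product $(u^*)^\oplus\omega^\odot u^\ominus = J\nu^{-1}\pi(u)\nu J^{-1}\cdot\varepsilon'\nu J\omega J^{-1}\nu\cdot J\nu\pi(u^*)\nu^{-1}J^{-1}$ collapses the inner factors using $J^{-1}\nu J = \nu^{-1}$ and $J^{-1}J = \id$, yielding the same expression; hence this piece contributes $(u^*)^\oplus\omega^\odot u^\ominus$.

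For the inhomogeneous piece, expand $\pi(u)\comm{D}{\pi(u^*)} = \pi(u)D\pi(u^*) - D$ using $\pi(uu^*) = 1$. The $-D$ contribution becomes $-\varepsilon'\nu JDJ^{-1}\nu$, which equals $-D$ after a short application of \eqn{nueC} together with $J^2 = \varepsilon$. The surviving term $\varepsilon'\nu J\pi(u)D\pi(u^*)J^{-1}\nu$ reduces, by the same $\nu$-through-$J$ manipulation and then $\nu J^{-1}DJ\nu = \varepsilon' D$ in the middle, to $(u^*)^\oplus Du^\ominus$. Finally, the identity $(u^*)^\oplus u^\oplus = (uu^*)^\oplus = 1$ — valid because $a\mapsto a^\oplus$ is an antihomomorphism and $u$ is unitary — lets one rewrite $-D = -(u^*)^\oplus u^\oplus D$ and combine the two surviving terms into $(u^*)^\oplus(Du^\ominus - u^\oplus D) = (u^*)^\oplus\delta^\odot(u)$ via \eqn{newder}.

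Summing the two contributions yields the announced formula. The only real obstacle is careful bookkeeping — deciding whether each $\nu$ should appear as $\nu$ or $\nu^{-1}$ when it crosses a $J$ or a $D$, and tracking the signs $\varepsilon$ and $\varepsilon'$ — rather than any conceptual difficulty beyond systematic use of \eqn{RC} and \eqn{nueC}.
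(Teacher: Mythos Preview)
Your proof is correct and follows essentially the same route as the paper: both split $\omega^u$ into the homogeneous piece $\pi(u)\omega\pi(u^*)$ and the inhomogeneous piece $\pi(u)\comm{D}{\pi(u^*)}$, and both handle the homogeneous part by the same insertion of $\nu J^{-1}\nu J=1$ and $J^{-1}\nu J\nu=1$ (coming from \eqn{RC}) to reveal $(u^*)^\oplus\omega^\odot u^\ominus$. The only minor difference is in the inhomogeneous term: the paper factors out $(u^*)^\oplus$ and then invokes Lem.~\ref{env:nuJomegaJnu} with $a=1$, $b=u$ to identify $\varepsilon'\nu J\comm{D}{\pi(u^*)}J^{-1}\nu$ as $\delta^\odot(u)$, whereas you expand $\pi(u)\comm{D}{\pi(u^*)}=\pi(u)D\pi(u^*)-D$ and use $\nu J^{-1}DJ\nu=\varepsilon'D$ directly --- effectively re-deriving that special case of the lemma on the spot.
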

\begin{proof}
The proof is by a straightforward computation. We will omit representations for brevity. 
\begin{align*}
(\omega^u)^\odot&=\varepsilon'\nu J\omega^uJ^{-1}\nu\\
&=\varepsilon'\nu Ju\omega u^*J^{-1}\nu+\varepsilon'\nu Ju\comm{D}{u^*}J^{-1}\nu\\
&=\varepsilon'J\nu^{-1}(u^*)^*\nu J^{-1}\nu J\omega J^{-1}\nu J\nu u^*\nu^{-1}J^{-1}+\varepsilon'J\nu^{-1}(u^*)^*\nu J^{-1}\nu J\comm{D}{u^*}J^{-1}\nu\\
&=(u^*)^\oplus\omega^\odot u^\ominus+(u^*)^\oplus(\varepsilon'\nu J\comm{D}{u^*}J^{-1}\nu)\\
&=(u^*)^\oplus\omega^\odot u^\ominus+(u^*)^\oplus\delta^\odot(u),
\end{align*}
where in going to the last line we have made use of \eqn{newder} and Lem. \ref{env:nuJomegaJnu} (there taking $a=1,b=u$).
\end{proof}

To compare with the result of Prop. \ref{env:uodot}, we now derive a transformation law from Morita self-equivalence as we did in the (standard) right module case above. For the case of left module self-equivalence, we have that the conjugate module is $\conjEBA=\alg{A}$ with $\alg{B}=\alg{A}$, and the unitary endomorphisms are just the unitary elements of the algebra acting by
\begin{equation}\label{eq:endoalg}u(a)\coloneqq au^*\end{equation}
for $a\in\alg{A}$, $u\in\alg{U}(\alg{A})$.

Given the derivation $\delta^\odot$ as defined in \eqn{newder}, the gauge transformation for the connection $\nabla^\odot=\delta^\odot+\omega^\odot$ on the module $\conjEBA=\alg{A}$, $\alg{B}=\alg{A}$ is given by $\nabla^\odot\mapsto(\nabla^\odot)^u\coloneqq u\nabla^\odot u^*$, analogous to \eqn{conngaugetf}.

\begin{lem}
Let $\nabla^\odot$ be an $\tilde{\Omega}^1_D(\alg{A}^{\mathrm{op}})$-valued connection on $\alg{A}$ as an $\alg{A}$-bimodule with unitary endomorphisms $u$ acting via \eqn{endoalg}. Then, for any $u\in\alg{U}(\alg{A})$ we have
$$( \id \otimes\tilde{\nu}_\alg{E}^2)\circ(\nabla^\odot)^u(a)=\delta^\odot(au)\otimes\tilde{\nu}^2(u^*)+\omega^\odot(au)^\ominus\otimes\tilde{\nu}^2(u^*)$$ for all $a\in\alg{A}$ with $(\nabla^\odot)^u$ the gauge transformation $(\nabla^\odot)^u\coloneqq u\nabla^\odot u^*$ and $\tilde{\nu}_\alg{E}\in\End{\alg{A}}{\alg{A}}$ as given in \eqn{oldmu}.
\end{lem}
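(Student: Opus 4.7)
My plan is to unpack $(\nabla^\odot)^u = u \nabla^\odot u^*$ as a composition of module maps, apply it to a generic $a \in \alg{A}$, and then substitute the explicit form of $\nabla^\odot$ derived in the preceding proposition for the self-equivalence case $\conjEBA = \alg{A}$ with $p = 1$, $N = 1$. The first step is to fix the dictionary for the action: by \eqn{endoalg} the unitary $u$ acts on $\conjEBA$ as $u(b) \coloneqq bu^*$, so $u^*(a) = au$; and, following the right-module analogue \eqn{conngaugetf}, I would extend this action to $\tilde{\Omega}^1_D(\alg{A}^{\mathrm{op}}) \otimes_\alg{A} \alg{A}$ by acting only on the module (right) tensor factor, so that $u$ sends $\xi \otimes b$ to $\xi \otimes bu^*$.

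With this set-up, I would write
\[ (\nabla^\odot)^u(a) = u\bigl(\nabla^\odot(au)\bigr) \]
and then insert the decomposition
\[ \nabla^\odot(b) = \delta^\odot(b) \otimes 1 + \omega^\odot b^\ominus \otimes 1 \]
with $b = au$, which was established in the course of proving the Morita self-equivalence formula for $D_L$. Acting with $u$ on the right factor (sending $1 \mapsto u^*$) then produces
\[ (\nabla^\odot)^u(a) = \delta^\odot(au) \otimes u^* + \omega^\odot(au)^\ominus \otimes u^*. \]

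To finish, I would apply $\id \otimes \tilde{\nu}_\alg{E}^2$ and use that, by \eqn{oldmu}, the module map $\tilde{\nu}_\alg{E}$ acts elementwise by the algebra automorphism $\tilde{\nu}$, so that $\tilde{\nu}_\alg{E}^2(u^*) = \tilde{\nu}^2(u^*)$. This delivers the desired identity. The only real obstacle here is conceptual rather than computational: one must be careful about how the unitary endomorphism $u$ lifts to the codomain of $\nabla^\odot$ (which tensor factor it acts on and in which order $u$ and $u^*$ appear, given the contravariant right action \eqn{psia}), because the twist makes these choices genuinely distinct. Once this is pinned down by analogy with the right-module case, the lemma collapses to a one-line substitution.
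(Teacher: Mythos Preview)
Your proposal is correct and follows essentially the same route as the paper: the paper also writes $(\nabla^\odot)^u(a)=(u\nabla^\odot)(au)$, substitutes $\nabla^\odot(b)=\delta^\odot(b)\otimes1+\omega^\odot b^\ominus\otimes1$ with $b=au$, passes $u^*$ into the right tensor factor, and then applies $\id\otimes\tilde{\nu}_\alg{E}^2$. The only cosmetic difference is that the paper writes the action of $u$ on the codomain as $\xi\otimes1\mapsto\xi\cdot u^*\otimes1=\xi\otimes u^*$ via the balancing of the tensor product, whereas you describe it directly as acting on the right factor; these are the same map.
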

\begin{proof}
First, we have $\nabla^\odot=\nabla^\odot_0+\vec{\omega}^\odot$ given by $\nabla^\odot_0(a)=\delta^\odot(a)\otimes1$ and $\vec{\omega}^\odot(a)=(\omega^\odot a^\ominus)\otimes1$ for any $a\in\alg{A}$. We also have $u(a)=au^*$, and thus we find 
\begin{align*}
(\nabla^\odot)^u(a)\coloneqq(u\nabla^\odot u^*)(a)&=(u\nabla^\odot)(au)=u\big(\nabla^\odot_0(au)+\vec{\omega}^\odot(au)\big)\\
&=u\big(\delta^\odot(au)\otimes1+\omega^\odot(au)^\ominus\otimes1\big)\\
&=\delta^\odot(au)\cdot u^*\otimes1+(\omega^\odot(au)^\ominus)\cdot u^*\otimes1\\
&=\delta^\odot(au)\otimes u^*+\omega^\odot(au)^\ominus\otimes u^*.
\end{align*} Therefore, we have
\begin{align*}
( \id \otimes\tilde{\nu}_\alg{E}^2)\circ(\nabla^\odot)^u(a)&=( \id \otimes\tilde{\nu}_\alg{E}^2)(\delta^\odot(au)\otimes u^*+\omega^\odot(au)^\ominus\otimes u^*)\\
&=\delta^\odot(au)\otimes\tilde{\nu}^2(u^*)+\omega^\odot(au)^\ominus\otimes\tilde{\nu}^2(u^*).
\end{align*}
\end{proof}

Just like in the right module case, we implement the gauge transformation for the left module case by replacing $\nabla^\odot$ with $(\nabla^\odot)^u$ in the definition of $D_L$. In the case of self-equivalence, we obtain the following explicit formula:
\begin{prop}\label{env:odotu}
For a gauge transformation with $u\in\alg{U}(\alg{A})$, the operator $D_L=D+\omega^\odot$ is mapped to $D_L^u=D+(\omega^\odot)^u$ where the transformed 1-form is given by $$(\omega^\odot)^u=(u^*)^\oplus\omega^\odot u^\ominus+(u^*)^\oplus\delta^\odot(u).$$
\end{prop}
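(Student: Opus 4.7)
The plan is to mimic the proof of the untwisted identity $D_L = D + \omega^{\odot}$ given earlier, but with the preceding lemma applied in place of the raw connection formula. Explicitly, I would start from the definition
\[ D_L^u \coloneqq (\id\otimes\tilde{\nu}_\alg{E}^2)\circ(D_\ell+(\nabla^\odot)^u), \]
obtained by substituting the gauge-transformed connection into $D_L$, and evaluate it on an arbitrary simple tensor $\psi\otimes a \in \hilb\otimes_{\alg{A}}\alg{A}$. The first summand gives $D\psi\otimes\tilde{\nu}^2(a)$ exactly as in the untwisted computation, while the second is handled by feeding the preceding lemma into the extension \eqn{psiconn} of $\nabla^\odot$ to $\hilb\otimes_{\CC}\alg{A}$, yielding
\[ D_L^u(\psi\otimes a)=D\psi\otimes\tilde{\nu}^2(a)+\delta^\odot(au)\psi\otimes\tilde{\nu}^2(u^*)+\omega^\odot(au)^\ominus\psi\otimes\tilde{\nu}^2(u^*). \]

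The next step is to collapse this expression down to $\hilb$ via the identification $\hilb\otimes_{\alg{A}}\alg{A}\simeq\hilb$ given by $\psi\otimes b\mapsto b^\ominus\psi\otimes 1\mapsto b^\ominus\psi$, using the identity $(\tilde{\nu}^2(c))^\ominus=c^\oplus$ (already exploited in the proof of $D_L=D+\omega^\odot$). This converts the three tensor factors $\tilde{\nu}^2(a)$ and $\tilde{\nu}^2(u^*)$ into left multiplication by $a^\oplus$ and $(u^*)^\oplus$ respectively. Then I would apply the (anti-)multiplicativity of $\ominus$ to write $(au)^\ominus=u^\ominus a^\ominus$ and the derivation property of $\delta^\odot$ to expand $\delta^\odot(au)=\delta^\odot(u)a^\ominus+u^\oplus\delta^\odot(a)$, combined with the unitarity relation $(u^*)^\oplus u^\oplus=(uu^*)^\oplus=1$ (via the algebra anti-homomorphism property of $\oplus$). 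The term $a^\oplus D\psi+\delta^\odot(a)\psi$ then telescopes to $Da^\ominus\psi$ by the very definition of $\delta^\odot$, leaving
\[ D_L^u(\psi\otimes a)\leftrightarrow\bigl(D+(u^*)^\oplus\omega^\odot u^\ominus+(u^*)^\oplus\delta^\odot(u)\bigr)a^\ominus\psi. \]
Recognising $a^\ominus\psi$ as the image of $\psi\otimes a$ in $\hilb$ then gives $D_L^u=D+(\omega^\odot)^u$ with the claimed formula for $(\omega^\odot)^u$.

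The only real obstacle is bookkeeping: the $\oplus$ and $\ominus$ maps are algebra anti-homomorphisms rather than homomorphisms, so one must be careful about the order in which unitaries appear (e.g.\ $(u^*)^\oplus u^\oplus=(uu^*)^\oplus$ rather than $(u^*u)^\oplus$), and the identification $\hilb\otimes_{\alg{A}}\alg{A}\simeq\hilb$ uses $\ominus$ rather than the usual $\circ$. As a sanity check, the resulting expression agrees with $(\omega^u)^\odot$ from Prop.~\ref{env:uodot}, confirming that the gauge-transformation map $\omega\mapsto\omega^u$ is compatible with the twisted-opposite map $\omega\mapsto\omega^\odot$, which is precisely what the combined bimodule Morita self-equivalence procedure demands.
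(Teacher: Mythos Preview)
Your proposal is correct and follows essentially the same route as the paper: both start from $D_L^u=(\id\otimes\tilde{\nu}_\alg{E}^2)\circ(D_\ell+(\nabla^\odot)^u)$, invoke the preceding lemma to get the three-term expression, and then collapse to $\hilb$ via $(\tilde{\nu}^2(c))^\ominus=c^\oplus$. The only cosmetic difference is that the paper expands $\delta^\odot(au)=D(au)^\ominus-(au)^\oplus D$ directly and cancels $(u^*)^\oplus(au)^\oplus D$ against $a^\oplus D$, whereas you invoke the Leibniz rule $\delta^\odot(au)=\delta^\odot(u)a^\ominus+u^\oplus\delta^\odot(a)$ and then telescope $a^\oplus D+\delta^\odot(a)=Da^\ominus$; these are the same manipulation in a different order, and your bookkeeping remark about $(u^*)^\oplus u^\oplus=(uu^*)^\oplus=1$ is exactly the identity the paper uses implicitly.
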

\begin{proof}
We have from the previous lemma
\begin{align*}
D^u_L(\psi\otimes a)&=(( \id \otimes\tilde{\nu}_\alg{E}^2)\circ(D_L+(\nabla^\odot)^u))(\psi\otimes a)\\
&=D\psi\otimes\tilde{\nu}^2(a)+\delta^\odot(au)\psi\otimes\tilde{\nu}^2(u^*)+\omega^\odot(au)^\ominus\psi\otimes\tilde{\nu}^2(u^*)\\
&=\left(a^\oplus D+(u^*)^\oplus D(au)^\ominus-(u^*)^\oplus(au)^\oplus D\right)\psi\otimes1+(u^*)^\oplus\omega^\odot u^\ominus a^\ominus\psi\otimes1\\
&=(u^*)^\oplus Du^\ominus a^\ominus\psi\otimes1+(u^*)^\oplus\omega^\odot u^\ominus a^\ominus\psi\otimes1\\
&=\left(D+(u^*)^\oplus(Du^\ominus-u^\oplus D)\right)a^\ominus\psi\otimes1+(u^*)^\oplus\omega^\odot u^\ominus a^\ominus\psi\otimes1.
\end{align*} By definition, $Du^\ominus-u^\oplus D=\delta^\odot(u)$. Identifying $\psi\otimes a=a^\ominus\psi\otimes1$ with $\psi\in\hilb\otimes_{\alg{A}}\alg{A}\simeq\hilb$ therefore gives the proposition.
\end{proof}
Comparing the results of Prop. \ref{env:uodot} and Prop. \ref{env:odotu}, we find that indeed, $(\omega^u)^\odot=(\omega^\odot)^u$.

\subsection{Gauge transformations for a spectral triple}

The results of the previous subsection, along with Thm.~\ref{env:innerflucthm}, give the following result:
\begin{thm}\label{env:DwuDuw}
Let $(\alg{A},\hilb,D_\omega,J,\nu)$ be a spectral triple with twisted real structure obtained by bimodule Morita self-equivalence from the spectral triple with twisted real structure $(\alg{A},\hilb,D,J,\nu)$, where $D_\omega$ is the Dirac operator fluctuated from $D$ by the 1-form $\omega\in\Omega^1_D(\alg{A})$. 

Then the law for the gauge transformation of the Dirac operator $D_\omega$ by $u\in\alg{U}(\alg{A})$ is given by $$D_\omega\mapsto D_\omega^u=D+\omega^u+\varepsilon'\nu J\omega^uJ^{-1}\nu\equiv D_{\omega^u},$$ where $$\omega\mapsto\omega^u=\pi(u)\omega\pi( u^*)+\pi(u)\comm{D}{\pi(u^*)}$$ is the gauge transformation of a gauge potential.
\end{thm}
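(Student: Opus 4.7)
The plan is to assemble the bimodule gauge transformation from the right and left module constructions of the previous subsection, and then to invoke Theorem~\ref{env:innerflucthm} to identify the result with an inner fluctuation by the gauge-transformed potential. Because $D_\omega$ was produced by sequentially fluctuating along a right module and then a left module self-equivalence, a gauge transformation of the bimodule translates into a simultaneous gauge transformation of both connections, and it suffices to show that each piece lands in the expected form.

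For the right module summand, I would recall from the discussion preceding Prop.~\ref{env:uodot} that replacing the connection $\nabla = \delta + \omega$ by $u\nabla u^*$ sends $\omega$ to $\omega^u = \pi(u)\omega\pi(u^*) + \pi(u)\comm{D}{\pi(u^*)}$, and hence $D+\omega$ to $D+\omega^u$. That $\omega^u$ remains in $\Omega^1_D(\alg{A})$ (and self-adjoint whenever $\omega$ is, since $u$ is unitary) is immediate from the $\alg{A}$-bimodule structure of $\Omega^1_D(\alg{A})$ and the presence of the generator $\pi(u)\comm{D}{\pi(u^*)}$.

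For the left module summand, I would appeal to the pair Prop.~\ref{env:uodot} and Prop.~\ref{env:odotu}, which respectively compute $(\omega^u)^\odot$ (by applying the twist map $\omega \mapsto \varepsilon'\nu J\omega J^{-1}\nu$ to the already-gauge-transformed potential) and $(\omega^\odot)^u$ (by gauge-transforming the twisted-opposite connection $\nabla^\odot$ directly). These two expressions were shown to coincide, giving the self-consistency $(\omega^u)^\odot = (\omega^\odot)^u$. This is really the only nontrivial input; everything else in the present theorem is bookkeeping. Combining the two summands, the gauge-transformed Dirac operator reads
\[ D_\omega^u \;=\; D + \omega^u + (\omega^\odot)^u \;=\; D + \omega^u + \varepsilon'\nu J\omega^u J^{-1}\nu, \]
and Theorem~\ref{env:innerflucthm} immediately identifies the right-hand side with $D_{\omega^u}$, closing the argument.

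The hard part is therefore already done upstream: the statement to be proved here is, in essence, a compatibility assertion that gauge transformation and inner fluctuation commute, and the delicate step ensuring this is exactly the matching of the twisted-opposite computations performed in Prop.~\ref{env:uodot} and Prop.~\ref{env:odotu}. The only thing I would need to be careful about in writing up the present proof is to make explicit that the bimodule Morita self-equivalence used to construct $D_\omega$ carries a canonical $u$-action that factors through the right- and left-module actions separately, so that appealing to both propositions in parallel is legitimate.
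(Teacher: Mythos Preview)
Your proposal is correct and follows essentially the same approach as the paper: the theorem is presented there as a direct consequence of the results of the previous subsection together with Thm.~\ref{env:innerflucthm}, with the paper noting (but not writing out) the consistency check that gauge-transforming the left and right potentials separately gives the same answer. Your write-up is simply a more explicit version of that check, invoking Prop.~\ref{env:uodot} and Prop.~\ref{env:odotu} exactly as intended.
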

The above expression is found by mapping $\omega\mapsto\omega^u$ on the operator $D_\omega$ after $\omega_L$ and $\omega_R$ have been identified. For the sake of consistency, one should check that the same result applies for gauge transforming both left and right gauge potentials separately, and indeed this does prove to be the case, though we will not give the proof here.

Similar to the standard case, the gauge transformation of the Dirac operator can be implemented using the operator which implements gauge transformations of the Hilbert space \eqn{Adu}. The difference is that we require not just the operator for the adjoint action, but also the corresponding operator for the twisted adjoint action \eqn{twistedAdu}.
\begin{lem}\label{env:puregauge}
Let $(\alg{A},\hilb,D,J,\nu)$ be a spectral triple with twisted real structure. For any $u\in\alg{U}(\alg{A})$ it holds that $$\widetilde{\Ad}(u)D\Ad(u)^{-1}=D+\pi(u)\comm{D}{\pi(u^*)}+\varepsilon'\nu J\pi(u)\comm{D}{\pi(u^*)}J^{-1}\nu.$$
\end{lem}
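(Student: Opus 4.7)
The plan is to expand both sides directly and verify that they reduce to the common form $(u^*)^\oplus Du^\ominus+\pi(u)\comm{D}{\pi(u^*)}$.

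For the LHS, I would substitute $\widetilde{\Ad}(u)=\pi(u)(u^*)^\oplus$ and $\Ad(u)^{-1}=\pi(u^*)u^\ominus$ to obtain $\pi(u)(u^*)^\oplus D\pi(u^*)u^\ominus$. Since $(u^*)^\oplus=J\nu^{-1}\pi(u)\nu J^{-1}=\pi_J(\hat{\nu}^{-1}(u))$ lies in $\pi_J(\alg{A})$, the zeroth-order condition \eqn{0C} lets it commute past $\pi(u^*)$. Splitting $D\pi(u^*)=\pi(u^*)D+\comm{D}{\pi(u^*)}$ and using unitarity $\pi(u)\pi(u^*)=1$ then reduces the LHS to $(u^*)^\oplus Du^\ominus+\pi(u)(u^*)^\oplus\comm{D}{\pi(u^*)}u^\ominus$. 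The key step is to collapse the second summand: writing $u^\ominus=\pi_J(\hat{\nu}(u^*))$ and applying the balanced form \eqn{nu1Cbal} of the $\nu$-twisted first-order condition with $a=b=u^*$ gives $\comm{D}{\pi(u^*)}\pi_J(\hat{\nu}(u^*))=\pi_J(\hat{\nu}^{-1}(u^*))\comm{D}{\pi(u^*)}$, after which $\pi_J$ being a representation and $uu^*=1$ collapse $\pi_J(\hat{\nu}^{-1}(u))\pi_J(\hat{\nu}^{-1}(u^*))$ to the identity, leaving the desired $\pi(u)\comm{D}{\pi(u^*)}$.

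For the RHS, I would apply Lem.~\ref{env:nuJomegaJnu} with $\omega=\pi(u)\comm{D}{\pi(u^*)}$ (taking $a=u^*$ and $b=u$ in that lemma) to rewrite $\varepsilon'\nu J\pi(u)\comm{D}{\pi(u^*)}J^{-1}\nu=(u^*)^\oplus(Du^\ominus-u^\oplus D)$. A direct computation (or the fact that $a\mapsto a^\oplus$ is an antihomomorphism on $\alg{A}$) gives $(u^*)^\oplus u^\oplus=1$, so this reduces to $(u^*)^\oplus Du^\ominus-D$. Adding back $D+\pi(u)\comm{D}{\pi(u^*)}$ reproduces precisely the reduced form of the LHS.

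The hard part is the middle step: recognising that the twist factors $\nu^{\pm1}$ built into $(u^*)^\oplus$ and $u^\ominus$ are positioned exactly so that a single application of the $\nu$-twisted first-order condition, combined with the telescoping identity $\pi_J(\hat{\nu}^{-1}(u))\pi_J(\hat{\nu}^{-1}(u^*))=1$, lets the 1-form $\comm{D}{\pi(u^*)}$ slide through cleanly. This is really the payoff of the bimodule construction of \S3: the twist has been installed precisely so that pure gauge transformations mimic the formula of the untwisted case.
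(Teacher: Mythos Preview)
Your argument is correct. Both sides do reduce to $(u^*)^\oplus Du^\ominus+\pi(u)\comm{D}{\pi(u^*)}$, and each step you describe goes through: the zeroth-order condition lets $(u^*)^\oplus$ slide past $\pi(u^*)$, the balanced twisted first-order condition \eqn{nu1Cbal} moves $u^\ominus$ across $\comm{D}{\pi(u^*)}$ and converts it to $\pi_J(\hat\nu^{-1}(u^*))$, and the antihomomorphism property $(u^*)^\oplus u^\oplus=(uu^*)^\oplus=1$ finishes the RHS.

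The paper proceeds differently. Rather than reducing both sides to a common intermediate form, it attacks the LHS directly with the twisted $\varepsilon'$ condition \eqn{nueC} and the regularity condition \eqn{RC}: writing out $\widetilde{\Ad}(u)D\Ad(u)^{-1}$ in full, it collapses the inner block $\nu J^{-1}DJ\nu$ to $\varepsilon'D$, obtaining $\varepsilon'u\nu J(uDu^*)J^{-1}\nu u^*$, then splits $uDu^*=D+u\comm{D}{u^*}$ and applies \eqn{nueC} once more to peel off $D+u\comm{D}{u^*}$. Only at the very end, when massaging the residual term $\varepsilon'u\nu Ju\comm{D}{u^*}J^{-1}\nu u^*$ into $\varepsilon'\nu Ju\comm{D}{u^*}J^{-1}\nu$, does it invoke \eqn{0C} and \eqn{nu1C}. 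Your route instead never touches \eqn{nueC} or \eqn{RC} explicitly, because those are already packaged inside Lem.~\ref{env:nuJomegaJnu}; what you gain is that the computation is carried entirely in the $\oplus/\ominus$ calculus, which makes the parallel with the untwisted case transparent and shows concretely that the formalism of \S3 is doing real work. The paper's approach, by contrast, is more self-contained and exhibits exactly which axioms of Definition~\ref{env:optwistdef} are being consumed at each step.
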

\begin{proof}
The proof is given by a straightforward computation (we suppress representations for brevity):
\begin{align*}
\widetilde{\Ad}(u)D\Ad(u)^{-1}&=uJ\nu^{-1}u\nu J^{-1}DJ\nu u^*\nu^{-1}J^{-1}u^*\\
&=\varepsilon'u\nu JuDu^*J^{-1}\nu u^*\\
&=\varepsilon'u\nu J(D+u\comm{D}{u^*})J^{-1}\nu u^*\\
&=uDu^*+\varepsilon'u\nu Ju\comm{D}{u^*}J^{-1}\nu u^*\\
&=D+u\comm{D}{u^*}+\varepsilon'u\nu Ju\comm{D}{u^*}J^{-1}\nu u^*.
\end{align*}
Now all that remains is to massage the final term on the right hand side:
\begin{align*}
\varepsilon'u\nu Ju\comm{D}{u^*}J^{-1}\nu u^*&=\varepsilon'\nu J\nu J^{-1}u\nu Ju\comm{D}{u^*}J^{-1}\nu u^*J\nu J^{-1}\nu\\
&=\varepsilon'\nu J(J\nu^{-1}u\nu J^{-1})u\comm{D}{u^*}(J\nu u^*\nu^{-1}J^{-1})J^{-1}\nu\\
&=\varepsilon'\nu Ju\comm{D}{u^*}(J\nu u\nu^{-1}J^{-1})(J\nu u^*\nu^{-1}J^{-1})J^{-1}\nu\\
&=\varepsilon'\nu Ju\comm{D}{u^*}J^{-1}\nu,
\end{align*}thus giving the claimed result. Note that on the third line, we have used \eqn{0C} and \eqn{nu1C}.
\end{proof}
\begin{note}
We refer to Dirac operators obtained by gauge transformations of a non-fluctuated Dirac operator as being \emph{pure gauge}.
\end{note}

\begin{prop}\label{env:VDW}
Let $(\alg{A},\hilb,D,J,\nu)$ be a spectral triple with twisted real structure, and consider a fluctuated Dirac operator $D_\omega=D+\omega+\varepsilon'\nu J\omega J^{-1}\nu$.  Then for any $u\in\alg{U}(\alg{A})$, one has $$\widetilde{\Ad}(u)D_\omega\Ad(u)^{-1}=D+\omega^u+\varepsilon'\nu J\omega^uJ^{-1}\nu,$$ with the gauge transformed $\omega^u$ given as above.
\end{prop}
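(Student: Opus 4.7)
The plan is to leverage Lemma~\ref{env:puregauge} by applying it with $D_\omega$ in place of $D$. This is legitimate because, by Theorem~\ref{env:innerflucthm}, the data $(\alg{A},\hilb,D_\omega,J,\nu)$ itself constitutes a spectral triple with twisted real structure, so the conditions \eqn{0C}, \eqn{nu1C} and \eqn{nueC} invoked in the proof of Lemma~\ref{env:puregauge} remain valid for $D_\omega$. The lemma then yields
\[
\widetilde{\Ad}(u)D_\omega\Ad(u)^{-1} = D_\omega + \pi(u)\comm{D_\omega}{\pi(u^*)} + \varepsilon'\nu J\pi(u)\comm{D_\omega}{\pi(u^*)}J^{-1}\nu,
\]
and what remains is to unpack the commutators $\comm{D_\omega}{\pi(u^*)}$.

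The key step, and the main obstacle of the proof, is to establish that the twisted-opposite 1-form $\omega^\odot \coloneqq \varepsilon'\nu J\omega J^{-1}\nu$ commutes with $\pi(c)$ for every $c\in\alg{A}$. I would argue this by first using Lem.~\ref{env:nuJomegaJnu} to rewrite $\omega^\odot = \sum_j a_j^\oplus(Db_j^\ominus - b_j^\oplus D)$. Since $a^\oplus$, $b^\ominus$ and $b^\oplus$ all lie in $\pi_J(\alg{A})$, the zeroth-order condition \eqn{0C} allows them to be commuted freely past $\pi(c)$, so that $\comm{\omega^\odot}{\pi(c)} = \sum_j a_j^\oplus\bigl(\comm{D}{\pi(c)}b_j^\ominus - b_j^\oplus\comm{D}{\pi(c)}\bigr)$. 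The bracketed expression vanishes by the $\nu$-twisted first-order condition: substituting $b\to b^*$ in the balanced form \eqn{nu1Cbal} gives exactly the identity $\comm{D}{\pi(a)}b^\ominus = b^\oplus\comm{D}{\pi(a)}$ for all $a,b\in\alg{A}$. Hence $\comm{\omega^\odot}{\pi(u^*)}=0$, and so $\comm{D_\omega}{\pi(u^*)} = \comm{D}{\pi(u^*)} + \comm{\omega}{\pi(u^*)}$.

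It then remains only to assemble the pieces. Using $\pi(u)\pi(u^*)=1$, the combination $\omega + \pi(u)\comm{\omega}{\pi(u^*)}$ telescopes to $\pi(u)\omega\pi(u^*)$, and the $\nu J\,\cdot\,J^{-1}\nu$-conjugated sum $\omega^\odot + \varepsilon'\nu J\pi(u)\comm{\omega}{\pi(u^*)}J^{-1}\nu$ telescopes analogously to $\varepsilon'\nu J\pi(u)\omega\pi(u^*)J^{-1}\nu$. Grouping these with the $D$-contributions produced by Lem.~\ref{env:puregauge} yields $D + \omega^u + \varepsilon'\nu J\omega^u J^{-1}\nu$ with $\omega^u = \pi(u)\omega\pi(u^*)+\pi(u)\comm{D}{\pi(u^*)}$, which is the desired expression.
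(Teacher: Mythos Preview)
Your proof is correct, and it takes a genuinely different route from the paper's. The paper proceeds term by term: it conjugates $D$, $\omega$, and $\varepsilon'\nu J\omega J^{-1}\nu$ separately by $\widetilde{\Ad}(u)(\,\cdot\,)\Ad(u)^{-1}$, computing each conjugation explicitly using \eqn{0C}, \eqn{nu1C} and \eqn{RC}, and then combines the three results with Lem.~\ref{env:puregauge}. You instead apply Lem.~\ref{env:puregauge} \emph{once}, to $D_\omega$, and then reduce the problem to expanding $\comm{D_\omega}{\pi(u^*)}$. Your crucial observation is that $\omega^\odot$ commutes with $\pi(\alg{A})$; this is essentially the content already hidden in the proof of Prop.~3.7, but you extract and use it more directly. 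The payoff is that your argument avoids the longer operator manipulations the paper performs on the $\varepsilon'\nu J\omega J^{-1}\nu$ term, at the cost of relying on the structural fact that $D_\omega$ itself satisfies the twisted conditions. One small caveat: Thm.~\ref{env:innerflucthm} as stated assumes $\omega=\omega^*$, whereas Prop.~\ref{env:VDW} does not; however the properties you actually need (\eqn{nueC} and \eqn{nu1C} for $D_\omega$) are established in Prop.~3.5 and Prop.~3.7 without that hypothesis, and the proof of Lem.~\ref{env:puregauge} does not use self-adjointness of $D$, so your argument goes through regardless.
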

\begin{proof}
Without loss of generality, we can take $\omega=a\comm{D}{b}$, where we are again suppressing representations for brevity. In that case, we have 
\begin{align*}
\widetilde{\Ad}(u)\omega\Ad(u)^{-1}&=uJ\nu^{-1}u\nu J^{-1}a\comm{D}{b}J\nu u^*\nu^{-1}J^{-1}u^*\\
&=ua\comm{D}{b}(J\nu u\nu^{-1}J^{-1})(J\nu u^*\nu^{-1}J^{-1})u^*\\
&=ua\comm{D}{b}u^*=u\omega u^*,
\end{align*} and a slightly more involved (but not qualitatively different) calculation gives
\begin{align*}
\widetilde{\Ad}(u)(\varepsilon'\nu J\omega J^{-1}\nu)\Ad(u)^{-1}&=\varepsilon'uJ\nu^{-1}u\nu J^{-1}\nu Ja\comm{D}{b} J^{-1}\nu J\nu u^*\nu^{-1}J^{-1}u^*\\
&=\varepsilon'u\nu Jua\comm{D}{b}u^*J^{-1}\nu u^*\\
&=\varepsilon'\nu J\nu J^{-1}u\nu Jua\comm{D}{b}u^*J^{-1}\nu u^*J\nu J^{-1}\nu\\
&=\varepsilon'\nu J(J\nu^{-1}u\nu J^{-1})ua\comm{D}{b}u^*(J\nu u^*\nu^{-1}J^{-1})J^{-1}\nu\\
&=\varepsilon'\nu Jua\comm{D}{b}u^*(J\nu u\nu^{-1}J^{-1})(J\nu u^*\nu^{-1}J^{-1})J^{-1}\nu\\
&=\varepsilon'\nu Jua\comm{D}{b}u^*J^{-1}\nu=\varepsilon'\nu Ju\omega u^*J^{-1}\nu.
\end{align*} Collecting these results and combining with Lem.~\ref{env:puregauge}, one finds the claimed result.
\end{proof}

\begin{prop}\label{env:VW}
Let $(\alg{A},\hilb,D,J,\gamma,\nu)$ be an even spectral triple with twisted real structure.  Then for any $u\in\alg{U}(\alg{A})$, the gauge transformation of the spectral triple by $u$ is characterised by the following operator actions:
\begin{align}
\tilde{V}\pi(a)\tilde{V}^{-1}=V\pi(a)V^{-1}&=\pi(uau^*),\label{eq:atouau}\\
V\psi&=\psi^u\\
\tilde{V}DV^{-1}&=D^u,\label{eq:Durule}\\
\tilde{V}\gamma\tilde{V}^{-1}=V\gamma V^{-1}&=\gamma,\label{eq:gammarels}\\
\tilde{V}\nu J\tilde{V}^{-1}&=\nu J,\label{eq:intertwinenuJ}\\
VJ\nu V^{-1}&=J\nu,\label{eq:intertwineJnu}
\end{align} for all $a\in\alg{A}$ and $\tilde{V}\coloneqq\widetilde{\Ad}(u)=\pi(u)J\nu^{-1}\pi(u)\nu J^{-1}$ and $V\coloneqq\Ad(u)=\pi(u)J\nu\pi(u)\nu^{-1}J^{-1}$.
\end{prop}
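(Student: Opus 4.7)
The plan is to verify each of the six relations in turn, starting with those that follow most directly from results already in hand. A useful preliminary step is to rewrite $V = \pi(u)\pi_J(\hat\nu(u))$ and $\tilde V = \pi(u)\pi_J(\hat\nu^{-1}(u))$ using the defining relations $\nu^{\pm1}\pi(a)\nu^{\mp1} = \pi(\hat\nu^{\pm1}(a))$ and $\pi_J(a) = J\pi(a)J^{-1}$. In this form \eqn{atouau} is nearly immediate: by the zeroth-order condition \eqn{0C} the factor $\pi(a)$ commutes through the inner $\pi_J(\hat\nu^{\pm1}(u))$, which then pairs with $\pi_J(\hat\nu^{\pm1}(u^*))$ to give the identity, leaving $\pi(u)\pi(a)\pi(u^*)=\pi(uau^*)$. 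The identity $V\psi=\psi^u$ is just the definition \eqn{Adu}, and \eqn{Durule} is precisely Lem.~\ref{env:puregauge}, recognising the right-hand side as $D + \omega^u + \varepsilon'\nu J\omega^u J^{-1}\nu$ at $\omega=0$ (equivalently, Prop.~\ref{env:VDW} at $\omega=0$).

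For \eqn{gammarels}, I would combine \eqn{gnuJ}, \eqn{gcomm} and \eqn{RC} to derive the two auxiliary identities $J\nu\gamma\nu^{-1}J^{-1} = \varepsilon''\gamma$ and $\nu^{-1}J^{-1}\gamma J\nu = \varepsilon''\gamma$; when sandwiched between the two outer copies of $\pi(u)$ in $V$ (through which $\gamma$ passes by the even-triple compatibility), the two factors of $\varepsilon''$ multiply to $1$, and similarly for $\tilde V$. The intertwining relations \eqn{intertwinenuJ} and \eqn{intertwineJnu} are the main work. For \eqn{intertwinenuJ} I would establish $\tilde V\nu J = \nu J\tilde V$ directly: the left-hand side collapses to $\pi(u)J\nu^{-1}\pi(u)$ using $J^{-1}\nu J = \nu^{-1}$ (a rearrangement of \eqn{RC}); the right-hand side, after repeated use of \eqn{RC} in its equivalent forms ($\nu J = J\nu^{-1}$, $\nu^{-1}J\nu^{-1} = J$) together with $\nu^{-1}\pi(u) = \pi(\hat\nu^{-1}(u))\nu^{-1}$, becomes $\varepsilon\pi_J(\hat\nu^{-1}(u))\pi(u)\nu J^{-1}$; finally \eqn{0C} and $J^{-1}=\varepsilon J$ rearrange this into $\pi(u)\pi_J(\hat\nu^{-1}(u))J\nu^{-1}$, which simplifies to $\pi(u)J\nu^{-1}\pi(u)$ upon observing that $\pi_J(\hat\nu^{-1}(u))J\nu^{-1} = J\pi(\hat\nu^{-1}(u))\nu^{-1} = J\nu^{-1}\pi(u)$. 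Relation \eqn{intertwineJnu} is proved analogously by interchanging the roles of $\nu$ and $\nu^{-1}$ (and $\hat\nu$ and $\hat\nu^{-1}$) throughout.

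The main obstacle is purely bookkeeping in the intertwining step: \eqn{RC} admits many equivalent rewritings (any pair $\nu^{\pm1}J^{\pm1}$ can be swapped at the cost of some combination of $\varepsilon$ and additional $\nu^{\pm1}$ factors), and one must simultaneously track the $\hat\nu$-twist that $\pi(u)$ acquires as it is moved past $\nu$. Once the key identities $\nu J = J\nu^{-1}$ and $\nu\pi_J(a)\nu^{-1} = \pi_J(\hat\nu^{-1}(a))$ are isolated and the zeroth-order condition is used to commute $\pi$ with $\pi_J$ freely, the computation reduces to the familiar real-spectral-triple argument up to the predictable twisting by $\hat\nu$.
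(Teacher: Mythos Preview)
Your proposal is correct and follows essentially the same approach as the paper: the paper's proof simply cites the definition for $V\psi=\psi^u$, Prop.~\ref{env:VDW} for \eqn{Durule}, and declares the remaining relations to be ``straightforward computations,'' which is exactly what you outline in detail. Your explicit identification of the key ingredients---the zeroth-order condition to commute $\pi(a)$ through $\pi_J(\hat\nu^{\pm1}(u))$, the pair of $\varepsilon''$-relations derived from \eqn{gnuJ}, \eqn{gcomm} and \eqn{RC} for the grading, and the systematic use of $\nu J = J\nu^{-1}$ (and its variants) for the intertwining relations---is precisely the content of those omitted computations.
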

\begin{proof}
We have that $V\psi=\psi^u$ from the definition of gauge transformations. This is also true of $\pi(a)^u=\pi(uau^*)$, which one checks is given by conjugation by both $V$ and $\tilde{V}$ by computation. That $\tilde{V}DV^{-1}=D^u$ is given by Prop.~\ref{env:VDW}. All of the other relations are obtained by straightforward computations. 
\end{proof}
The invariance of $\gamma$ under both $V$ and $\tilde{V}$ nicely dovetails with the gauge transformations of $\alg{A}$ and $D$, which follows from the earlier constructions (specifically, $\gamma$ must commute with $\alg{A}$ and anticommute with $D$). However, it is curious that the Dirac operator does not generally respect gauge transformations on the Hilbert space, \eg, $D^u\psi^u=\tilde{V}DV^{-1}V\psi=\tilde{V}D\psi\neq(D\psi)^u$ when $\tilde{V}\neq V$. The meaning of this is not entirely clear, although given that the choice of right $\alg{A}$-action \eqn{psia} was somewhat arbitrary (we could equally have chosen $\psi a\coloneqq a^\oplus\psi$), it may be a hint that our construction of Morita equivalence is in some way half-complete, and that we should have two gauge transformations, one for each twist (or perhaps even for an infinite number of twists coming from $\tilde{\nu}^n$, $n\in\ZZ$). However, as the construction presented in this paper is sufficient for our purposes, leave this possibility for further investigation.

Another interesting point is that the real structure $J$ (and hence the twist operator $\nu$) does not individually obey any invariance with respect to $V$, $\tilde{V}$, or their inverses, as we might hope. However, the pair of equations \eqn{intertwinenuJ} and \eqn{intertwineJnu} do make sense as the weakest way of ensuring \eqn{Durule} is compatible with \eqn{nueC}. The most natural inference from the pair \eqn{intertwinenuJ} and \eqn{intertwineJnu} is to accept that there is a covariance instead of an invariance, in which case we should define $J^u=VJ\tilde{V}^{-1}$ and $\nu^u=\tilde{V}\nu V^{-1}$. But in that case $J^u$ can only possibly be a real structure if $\nu$ is both self-adjoint and unitary (up to sign). This should be borne in mind for the following sections.

It is of course also important to note that, for $U\coloneqq\pi(u)J\pi(u)J^{-1}$, when $\nu=1$ (the trivially-twisted case) or $\nu\pi(u)\nu^{-1}=\pi(u)$ for all $u\in\alg{U}(\alg{A})$, we have that $\tilde{V}=V=U$ and $V^{-1}=\tilde{V}^{-1}=U^*$ and all of the above relations reduce to the familiar ones for unitary equivalences of spectral triples. 
\begin{rmk}
A curious observation is that when $u$ is permitted to remain invariant under $\hat{\nu}$, or equivalently, we request that $\comm{\nu}{\pi(u)}=0$, we see that $\tilde{V}$ reduces to $\pi(u)J\pi(u)J^{-1}$ and $V^{-1}$ to $\pi(u^*)J\pi(u^*)J^{-1}$, which are simply the familiar unitary operators which implement gauge transformations in the trivially-twisted case. This is interesting because in principle the remainder of the twisted real structure remains intact, unlike in the case of Prop.~\ref{env:ruinerprop}, for example.\footnote{In that case, the spectral triple with twisted real structure $(J,\nu=\pm\nu^*=\pm\nu^{-1})$ is equivalent to the trivially-twisted spectral triple with real structure $\mathcal{J}=\nu J$. Unitary equivalence would then be restored automatically with $V=\mathcal{U}$ for $\mathcal{U}=\pi(u)\mathcal{J}\pi(u)\mathcal{J}^{-1}$.} That said, it is likely not practical to impose such invariance, as it would almost surely be far too restrictive on either the
  available twists or the usable unitary elements.
\end{rmk}

\subsection{Self-adjointness of the Dirac operator}

In the trivially-twisted case, for $U\coloneqq\pi(u)J\pi(u)J^{-1}$, a gauge transformation preserves the self-adjointness of the Dirac operator automatically. The transformed operator $D_{\omega^u}=UD_\omega U^*$ is self-adjoint if and only if $D_\omega$ is, since $U$ is unitary, and so a gauge transformation yields a spectral triple which is unitarily equivalent to the former. This is no longer necessarily true when the real structure is twisted. We now investigate the cases in which it is.

\begin{lem}
If $(\alg{A},\hilb,D,J,\nu)$ is a spectral triple with twisted real structure, and $D^u$ is the Dirac operator obtained from $D$ by a gauge transformation by the unitary element $u\in\alg{U}(\alg{A})$ then $$\nu=\pm\nu^*$$ is a sufficient condition for $D^u=(D^u)^*$. 
\end{lem}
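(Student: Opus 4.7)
The plan is to apply Lem.~\ref{env:puregauge} (or more generally Thm.~\ref{env:DwuDuw}) to write
\[D^u = D + \omega^u + \varepsilon'\nu J\omega^u J^{-1}\nu,\]
with $\omega^u = \pi(u)\omega\pi(u^*) + \pi(u)\comm{D}{\pi(u^*)}$ (and $\omega=0$ in the pure gauge case), and then verify self-adjointness term by term. Since $D = D^*$ by hypothesis, it suffices to show that $\omega^u$ and $\varepsilon'\nu J\omega^u J^{-1}\nu$ are each self-adjoint.

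For $\omega^u$ itself the computation is routine and uses no hypothesis on $\nu$: from $\pi(u)\pi(u^*) = 1$ and $D = D^*$ one obtains $\pi(u)\comm{D}{\pi(u^*)} = \pi(u)D\pi(u^*) - D$, which is manifestly self-adjoint, and the piece $\pi(u)\omega\pi(u^*)$ inherits self-adjointness from $\omega$.

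For the twisted term, I would take the adjoint using only $J^* = J^{-1}$ (hence $(J^{-1})^* = J$) and the self-adjointness of $\omega^u$ just established, to obtain
\[(\varepsilon'\nu J\omega^u J^{-1}\nu)^* = \varepsilon'\nu^* J \omega^u J^{-1}\nu^*.\]
The hypothesis $\nu = \pm\nu^*$ enters exactly here: substituting $\nu^* = \pm\nu$ at both ends produces an overall factor $(\pm 1)^2 = 1$ independent of the sign choice, so the right-hand side collapses to $\varepsilon'\nu J\omega^u J^{-1}\nu$. Summing the three self-adjoint pieces yields $(D^u)^* = D^u$. There is no real obstacle; the only point worth flagging is that both signs $\nu = +\nu^*$ and $\nu = -\nu^*$ are sufficient because the sign enters squared, with no need to match $\varepsilon'$ or any KO-dimension sign.
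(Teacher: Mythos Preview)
Your argument is correct. You use the additive form $D^u = D + \omega^u + \varepsilon'\nu J\omega^u J^{-1}\nu$ from Lem.~\ref{env:puregauge} and check self-adjointness summand by summand, whereas the paper works with the multiplicative form $D^u=\tilde{V}DV^{-1}$ and compares it directly with its adjoint $(V^{-1})^*D\tilde{V}^*$, observing that substituting $\nu^*=\pm\nu$ (and hence $(\nu^{-1})^*=\pm\nu^{-1}$) makes the two expressions coincide because the sign appears four times. Your route is essentially the one the paper adopts for the \emph{next} proposition (self-adjointness of $D_\omega^u$), so in effect you have anticipated that argument and specialised it to $\omega=0$; this has the advantage of isolating precisely which term needs the hypothesis on $\nu$. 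The paper's conjugation argument is a touch quicker for the pure-gauge case but less modular. In both approaches the crux is the same: the sign in $\nu=\pm\nu^*$ enters an even number of times and cancels.
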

\begin{proof}
From Lem.~\ref{env:puregauge} we know that $D^u=\tilde{V}DV^{-1}$ for $\tilde{V}=\pi(u)J\nu^{-1}\pi(u)\nu J^{-1}$ and $V^{-1}=\pi(u^*)J\nu\pi(u^*)\nu^{-1}J^{-1}$. Since $D$ is self-adjoint by the definition of a Dirac operator, we therefore have that $D^u=(D^u)^*$ is given by $\tilde{V}DV^{-1}=(V^{-1})^*D\tilde{V}^*$ or, expressed more fully, $$uJ\nu^{-1}u\nu J^{-1}DJ\nu u^*\nu^{-1}J^{-1}u^*=uJ(\nu^{-1})^*u\nu^*J^{-1}DJ\nu^*u^*(\nu^{-1})^*J^{-1}u^*.$$ It is clear from simple substitution that $\nu=\pm\nu^*$ satisfies this equation.
\end{proof}
Note that when $\nu=\pm\nu^*$, one has that $\tilde{V}^*=V^{-1}$ and $V^*=\tilde{V}^{-1}$. 
\begin{rmk}
The requirement that the twist operator be self-adjoint (up to sign) is well-motivated by comparison to the literature on (real) twisted spectral triples, where it is equivalent to the common requirement (going back to \cite{CM08}) that $\rho(a^*)=(\rho^{-1}(a))^*$, where $\rho$ is the twisting algebra automorphism\footnote{Recall that $\tilde{\nu}(a^*)=(\hat{\nu}^{-1}(a))^*$ and $\tilde{\nu}\equiv\hat{\nu}$ exactly when $\nu^*=\nu$ (up to sign).} ($\rho$ can be seen as very roughly analogous to $\hat{\nu}^2$ or $\tilde{\nu}^2$ in the twisted real structure context, though of course the frameworks differ).\end{rmk}

\begin{prop}
Let $(\alg{A},\hilb,D,J,\nu)$ be a spectral triple with twisted real structure, with $D_\omega$ the Dirac operator obtained by fluctuating $D$ by a 1-form $\omega\in\Omega^1_D(\alg{A})$ and $D_\omega^u$ the Dirac operator obtained from $D_\omega$ by a gauge transformation by the unitary element $u\in\alg{U}(\alg{A})$. Then, for $\nu=\pm\nu^*$, $$\omega=\omega^*$$ is a sufficient condition for $D_\omega^u=(D_\omega^u)^*$. 
\end{prop}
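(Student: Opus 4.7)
The plan is to use Theorem~\ref{env:DwuDuw} to write
\[ D_\omega^u = D + \omega^u + \varepsilon'\nu J\omega^u J^{-1}\nu, \]
and then show under the hypotheses that each of the three summands is separately self-adjoint. Since $D$ is self-adjoint by assumption, everything reduces to checking that (i) $\omega^u$ is self-adjoint and (ii) the ``twisted-opposite'' piece $\nu J\omega^u J^{-1}\nu$ is self-adjoint.

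For (i), I would expand $\omega^u = \pi(u)\omega\pi(u^*) + \pi(u)\comm{D}{\pi(u^*)}$ and take adjoints termwise. The first summand is manifestly self-adjoint when $\omega=\omega^*$. For the second, the crucial algebraic identity is that $\pi(u)\pi(u^*)=1$ implies $\comm{D}{\pi(u)}\pi(u^*) = -\pi(u)\comm{D}{\pi(u^*)}$ by the Leibniz rule; combined with $\comm{D}{\pi(u^*)}^* = -\comm{D}{\pi(u)}$ (which uses $D=D^*$), this gives $\bigl(\pi(u)\comm{D}{\pi(u^*)}\bigr)^* = \pi(u)\comm{D}{\pi(u^*)}$. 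Therefore $\omega^u=(\omega^u)^*$.

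For (ii), the key observation is just the adjoint calculus of an antiunitary conjugation. Using $J^* = J^{-1}$, one computes
\[ \bigl(\nu J\omega^u J^{-1}\nu\bigr)^* = \nu^* J (\omega^u)^* J^{-1}\nu^*, \]
and the hypothesis $\nu=\pm\nu^*$ makes the two sign factors multiply to $+1$, giving $\nu J(\omega^u)^*J^{-1}\nu$. Combined with the self-adjointness of $\omega^u$ established in step (i), this equals the original term. Adding $D=D^*$ yields $(D_\omega^u)^* = D_\omega^u$.

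I expect no real obstacle here: both ingredients are routine once one recognises that $\omega^u$ is itself a self-adjoint 1-form of the standard type, so the extra twist operators only intervene through the straightforward observation that their adjoints reintroduce the same factor of $\nu$ (up to a square of $\pm1$). The only mildly delicate step is the sign bookkeeping in the Leibniz-rule identity that rewrites $\comm{D}{\pi(u^*)}^*\pi(u^*)$ as $\pi(u)\comm{D}{\pi(u^*)}$; beyond this the argument is formal.
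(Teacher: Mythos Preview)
Your proof is correct and follows essentially the same approach as the paper: both invoke Thm.~\ref{env:DwuDuw} to write $D_\omega^u=D_{\omega^u}$, use $\nu=\pm\nu^*$ together with the antiunitarity of $J$ to reduce self-adjointness of the twisted-opposite term to that of $\omega^u$, and then observe that $\omega=\omega^*$ forces $\omega^u=(\omega^u)^*$. The only minor difference is organisational---the paper treats $\omega+\varepsilon'\nu J\omega J^{-1}\nu$ as a single block and argues via the difference $(\omega-\omega^*)+\varepsilon'\nu J(\omega-\omega^*)J^{-1}\nu=0$, while you check the two summands separately---and you are more explicit (via the Leibniz identity) about why $\pi(u)\comm{D}{\pi(u^*)}$ is self-adjoint, a step the paper leaves as ``immediately implies''.
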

\begin{proof}
We begin by considering the Dirac operator $D_\omega=D+\omega+\varepsilon'\nu J\omega J^{-1}\nu$. As $D$ is self-adjoint by assumption, the self-adjointness of $D_\omega$ is guaranteed by the condition that $$\omega+\varepsilon'\nu J\omega J^{-1}\nu=\omega^*+\varepsilon'\nu^*J\omega^*J^{-1}\nu^*.$$ Taking $\nu=\pm\nu^*$, this can be rewritten $$(\omega-\omega^*)+\varepsilon'\nu J(\omega-\omega^*)J^{-1}\nu=0,$$ which is clearly satisfied when $\omega=\omega^*$.

By Thm.~\ref{env:DwuDuw}, $D^u_\omega=D_{\omega^u}$ where $\omega^u=\pi(u)\omega\pi(u^*)+\pi(u)\comm{D}{\pi(u^*)}$. The same reasoning therefore applies with $\omega^u$ replacing $\omega$, and so $D^u_\omega$ will be self-adjoint when $\nu=\pm\nu^*$ and $\omega^u=(\omega^u)^*$. However, $\omega=\omega^*$ immediately implies $\omega^u=(\omega^u)^*$.
\end{proof}

Strictly speaking, it is not necessary to take $\nu=\pm\nu^*$ for the twisted real structure formulation to be fully self-consistent. For example, in the case of the trivial Dirac operator $D=0$, which also has trivial fluctuations, one is free to take a non-self-adjoint twist operator. However, as we are interested in the general case, assuming self-adjointness (up to sign) proves to be the only really practical option to guarantee everything will work.

\section{The spectral action for spectral triples with twisted real structures}

In light of the results of the previous section, from here on we will take ${\nu=\alpha_1\nu^*}$ for $\alpha_1\in\{-1,+1\}$. When considering twisted real structures, we have no reason to expect that the standard bilinear form which gives the fermionic action $\mathfrak{A}_D(\psi,\phi)\coloneqq\mip{J\psi}{D\phi}$, $\psi,\phi\in\Dom(D)$, should still hold unchanged. Indeed, in the twisted real structure setting, this bilinear form fails to be suitably (anti)symmetric; we find 
\begin{align}
\mathfrak{A}_D(\psi,\phi)&=\mip{J\psi}{D\phi}\nonumber\\
&=\varepsilon\mip{J\psi}{J^2D\phi}\nonumber\\
&=\varepsilon\mip{JD\phi}{\psi}\nonumber\\
&=\varepsilon\varepsilon'\mip{\nu^{-1}DJ\nu\phi}{\psi}\nonumber\\
&=\varepsilon\varepsilon'\mip{J\phi}{{\nu^{-1}}^*D^*{\nu^{-1}}^*\psi}\label{eq:rubbish}
\end{align}
 which is not equal to $\mathfrak{A}_D(\phi,\psi)$ (up to sign) unless $D=\pm{\nu^{-1}}D^*{\nu^{-1}}$, which is not true in general.

Therefore our first task is to see if we can construct an alternative bilinear form which is gauge-covariant, and indeed, this can be done:
\begin{lem}\label{env:bilinearform}
Let $\nu$ be a linear operator. Then the bilinear form
\begin{equation}\label{eq:Anubar}\tilde{\mathfrak{A}}_D(\psi,\phi)\coloneqq\mip{J\nu\psi}{D\phi}_{\hilb},\end{equation}
where $\mip{\place}{\place}_{\hilb}$ is the inner product on the Hilbert space $\hilb$, is gauge-covariant and is correctly (anti)symmetric for the appropriate KO-dimension.\footnote{The (anti)symmetry of the form is important when interpreting the bracket in the physical context of the Grassmann nature of fermion fields.}
\end{lem}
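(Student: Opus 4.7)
The plan is to handle the two assertions---gauge covariance and (anti)symmetry---in turn, since each reduces to a short manipulation using identities already in place.

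For gauge covariance, I would start from $\tilde{\mathfrak{A}}_{D^u}(V\psi,V\phi) = \mip{J\nu V\psi}{D^u V\phi}$ and use \eqn{Durule} to rewrite $D^u V\phi = \tilde{V}D\phi$. Since $\tilde V$ is a bounded linear operator, it can be transported across the inner product as $\tilde V^{*}$, giving $\mip{\tilde V^{*}J\nu V\psi}{D\phi}$. The note following the self-adjointness lemma records that the standing assumption $\nu = \alpha_1\nu^*$ forces $\tilde V^{*} = V^{-1}$, so \eqn{intertwineJnu} then collapses $V^{-1}J\nu V$ to $J\nu$ and the expression is recognised as $\tilde{\mathfrak{A}}_D(\psi,\phi)$.

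For the (anti)symmetry, the strategy is to transport every operator in $\tilde{\mathfrak{A}}_D(\psi,\phi) = \mip{J\nu\psi}{D\phi}$ over to the other slot of the inner product until one recovers $\tilde{\mathfrak{A}}_D(\phi,\psi)$ up to a scalar. The ingredients I would use are: antiunitarity of $J$ in the form $\mip{J\xi}{\eta} = \overline{\mip{\xi}{J^{-1}\eta}}$ together with $J^{-1} = \varepsilon J$; the identity $J^{-1}DJ = \varepsilon'\nu^{-1}D\nu^{-1}$ obtained by combining \eqn{nueC} with \eqn{RC} (which also yields $J\nu J^{-1} = \nu^{-1}$); the hypothesis $\nu = \alpha_1\nu^*$, which gives $(\nu^{-1})^{*} = \alpha_1\nu^{-1}$; and self-adjointness of $D$. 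Collecting the accumulated signs at each step should produce $\tilde{\mathfrak{A}}_D(\psi,\phi) = \alpha_1\varepsilon\varepsilon'\,\tilde{\mathfrak{A}}_D(\phi,\psi)$, which recovers the standard KO-dimension sign $\varepsilon\varepsilon'$ modified only by the self-adjointness sign $\alpha_1$ of $\nu$.

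No individual step is deep; the main thing to watch is the bookkeeping of complex conjugations arising from the antilinearity of $J$ and from sliding operators past the conjugate-linear slot of the inner product. The crucial structural input that makes the whole argument go through is the identification $\tilde V^{*} = V^{-1}$, which is exactly why the hypothesis $\nu = \alpha_1\nu^*$ was imposed at the start of the section: without it the gauge-covariance computation fails at the penultimate step, and the same hypothesis also accounts for the factor of $\alpha_1$ in the (anti)symmetry sign.
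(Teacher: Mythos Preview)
Your proposal is correct and follows essentially the same route as the paper. The only organisational difference is that for gauge covariance you factor the key identity $\tilde V^{*}J\nu V=J\nu$ into the two previously established facts $\tilde V^{*}=V^{-1}$ and \eqn{intertwineJnu}, whereas the paper simply states that $\tilde V^{*}J\nu V=J\nu$ ``can easily be computed''; for the (anti)symmetry your outlined chain of manipulations matches the paper's step-by-step computation and yields the same sign $\alpha_1\varepsilon\varepsilon'$.
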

\begin{proof}
That the bilinear form be gauge-covariant is equivalent to requiring that it satisfies $$\tilde{\mathfrak{A}}_D(\psi,\phi)=\tilde{\mathfrak{A}}_{D^u}\big(\mathrm{Ad}(u)(\psi),\mathrm{Ad}(u)(\phi)\big).$$ Using the notation $V\coloneqq\Ad(u)=\pi(u)J\nu\pi(u)\nu^{-1}J^{-1}$ and $\tilde{V}=\pi(u)J\nu^{-1}\pi(u)\nu J^{-1}$, one finds this is equivalent to the requirement that 
\begin{align*}\mip{J\nu\psi}{D\phi}&=\mip{J\nu V\psi}{D^uV\phi}\\
&=\mip{J\nu V\psi}{\tilde{V}DV^{-1}V\phi}\\
&=\mip{JV\psi}{\tilde{V}D\phi},\end{align*} 
which in turn is equivalent to requiring that $$\tilde{V}^*J\nu V=J\nu,$$ which one can easily compute to be true.
The (anti)symmetry of the bilinear form can also be verified by computation. Assuming that $D$ is self-adjoint, we have 
\begin{align*}
\tilde{\mathfrak{A}}_D(\psi,\phi)&=\mip{J\nu\psi}{D\phi}\\
&=\varepsilon\mip{J\nu\psi}{J^2D\phi}\\
&=\varepsilon\mip{JD\phi}{\nu\psi}\\
&=\alpha_1\varepsilon\mip{\nu JD\phi}{\psi}\\
&=\alpha_1\varepsilon\varepsilon'\mip{DJ\nu\phi}{\psi}\\
&=\alpha_1\varepsilon\varepsilon'\mip{J\nu\phi}{D\psi}\\
&=\alpha_1\varepsilon\varepsilon'\tilde{\mathfrak{A}}_D(\phi,\psi),
\end{align*}
which can always be made to have the same sign as in the untwisted case (for example, by setting $\alpha_1=+1$).
\end{proof}

\begin{rmk}
Equation \eqn{Anubar} can be written in the form $\tilde{\mathfrak{A}}_D(\psi,\phi)=\alpha_1\mip{J\psi}{\nu^{-1}D\phi}$, which allows one to draw an analogy to the work presented in \cite{DFLM18}, wherein the authors define the sesquilinear form $\mip{\psi}{\phi}_\rho\coloneqq\mip{\psi}{R\phi}$ for a linear operator $R=R^*=R^{-1}$ implementing the algebra automorphism $\rho(\place)\coloneqq R(\place)R^*$. However, for our purposes it is preferable to keep the real structure and twist together, so we choose not to use a similarly modified bracket.
\end{rmk}

The main consequence of Lem. \ref{env:bilinearform} is the following:
\begin{prop}
The appropriate form for the fermionic action functional for spectral triples with twisted real structure is \begin{equation}\label{eq:SFtwist}S_F[D,\psi]\coloneqq\mip{J\nu\tilde{\psi}}{D\tilde{\psi}},\end{equation} where $\tilde{\psi}$ is the Grassmann variable corresponding to $\psi\in\{\varphi\in\hilb:\gamma\varphi=\varphi\}$, which is well-defined and gauge-covariant when $D=D^*$ and $\nu=\alpha_1\nu^*$ and is antisymmetric in KO-dimension $2\mod8$ when $\alpha_1=+1$.
\end{prop}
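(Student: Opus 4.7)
The proof is essentially a corollary of Lemma~\ref{env:bilinearform}. The plan is to recognise that $S_F[D,\psi] = \tilde{\mathfrak{A}}_D(\tilde\psi,\tilde\psi)$, so that all three claimed properties --- well-definedness, gauge-covariance, and antisymmetry in KO-dimension $2\bmod 8$ --- are inherited from the corresponding properties of the bilinear form established in that lemma, together with the usual Grassmann prescription for fermionic actions.

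First I would establish gauge-covariance. This is immediate: under a gauge transformation by $u\in\alg{U}(\alg{A})$, the Grassmann field transforms as $\tilde{\psi}\mapsto V\tilde{\psi}$ with $V=\Ad(u)$, the Dirac operator transforms as $D\mapsto D^u = \tilde{V}DV^{-1}$ by Thm.~\ref{env:DwuDuw}, and the identity $\tilde{V}^*J\nu V=J\nu$ proved within Lem.~\ref{env:bilinearform} gives $\tilde{\mathfrak{A}}_{D^u}(V\tilde\psi,V\tilde\psi)=\tilde{\mathfrak{A}}_D(\tilde\psi,\tilde\psi)$, which is exactly the required covariance of $S_F$.

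Next I would derive the antisymmetry. The computation at the end of Lem.~\ref{env:bilinearform} yields $\tilde{\mathfrak{A}}_D(\psi,\phi)=\alpha_1\varepsilon\varepsilon'\tilde{\mathfrak{A}}_D(\phi,\psi)$, valid whenever $D=D^*$ and $\nu=\alpha_1\nu^*$. Specialising to KO-dimension $2\bmod 8$, where $(\varepsilon,\varepsilon',\varepsilon'')=(-1,+1,-1)$, and taking $\alpha_1=+1$, the coefficient becomes $-1$, so $\tilde{\mathfrak{A}}_D$ is antisymmetric. This antisymmetry is precisely what is needed for the pairing $\tilde{\mathfrak{A}}_D(\tilde\psi,\tilde\psi)$ of a Grassmann variable with itself to be non-trivial, hence the action is well-defined in the Berezin sense; a bosonic (symmetric) form would identically vanish on such arguments.

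Finally I would check well-definedness of the restriction of $\tilde\psi$ to the $+1$-eigenspace of $\gamma$. This step is where the $\varepsilon''$ sign enters and where the subtlety of the twist is most visible: one needs $J\nu$ to interchange the $\pm1$-eigenspaces of $\gamma$ consistently with $D$ doing so (since $\gamma D=-D\gamma$), which is ensured by the $\nu$-twisted $\varepsilon''$ condition \eqn{gnuJ} of Defn.~\ref{env:nueeC} with $\varepsilon''=-1$. I expect this compatibility between $\gamma$, $D$, and $J\nu$ on the Weyl subspace to be the only genuinely nontrivial point; once it is verified, combining the three ingredients above yields the proposition.
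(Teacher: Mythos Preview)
Your proposal is correct and matches the paper's approach: the paper presents this proposition explicitly as ``the main consequence of Lem.~\ref{env:bilinearform}'' and gives no separate proof, so treating it as a corollary of that lemma is exactly what is intended. If anything, you are more thorough than the paper, which does not spell out the chirality check; your observation that one needs $\gamma J\nu=\varepsilon''J\nu\gamma$ (which, as shown in the proof of Prop.~\ref{env:gradprop}, follows from \eqn{gnuJ} together with \eqn{RC} and $\comm{\gamma}{\nu^2}=0$) is the right way to fill that in.
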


The case for the bosonic action $$S_B[D]=\mathrm{Tr}\left(f\left(\frac{D}{\Lambda}\right)\right)$$ for $\Lambda\in\mathbb{R}$ is less neat. The action should be invariant with respect to gauge transformations of the Dirac operator, but it is not difficult to see that $D^u$ need not have the same spectrum as $D$ in general,\footnote{If $D\psi=\lambda\psi$, we have $D^u\psi^u=\tilde{V}DV^{-1}V\psi=\lambda\tilde{V}\psi\neq\lambda\psi^u$ (see Prop. \ref{env:VW}).} which is a problem. The simplest fix for this is to require $\nu=\pm\nu^{-1}$, which ensures $D$ and $D^u$ have the same spectrum (as then $\tilde{V}=V$), but one must be careful not to run afoul of Prop. \ref{env:ruinerprop}.

\section{The Standard Model and beyond} 

\subsection{First attempt}

Having gone to the effort of constructing a consistent formulation of gauge transformations using twisted real structures, the natural thing is to try to put it to work. As we found in the previous sections, we require $\nu=\pm\nu^*$ to ensure that fluctuations of the Dirac operator are self-adjoint, and doing so then means we require $\nu=\pm\nu^*$ to ensure the bosonic action is well defined. Unfortunately, the requirement that $\nu=\alpha_1\nu^*=\alpha_2\nu^{-1}$ for $\alpha_1,\alpha_2\in\{-1,+1\}$ is very restrictive. Certainly, at the very least, it forces the twist to be mild, such that \eqn{nu1C} simplifies to \eqn{1C}. Even so, it is worth exploring the realm of applicability.

The first place to look for something new is the Standard Model, or at least, the finite part of its spectral triple. We would like to keep the spectral data $(\ASM,\CC^{96},D_{\mathrm{SM}},J_F,\gamma_F)$ unchanged, as they all carry quite neat physical interpretations, but it is not difficult to see that this does not leave much room for adding a twist that does anything. For one thing, consider how one should simultaneously satisfy $\nu J_FD_{\mathrm{SM}}=\DSM J_F\nu$ and $J_F\DSM=\DSM J_F$ for a nontrivial $\nu$.

The next place to look, as suggested in the Introduction, is the Pati-Salam model of \cite{CCvS13b}, which takes for its spectral data 
\begin{equation}\label{eq:PS}(\ALR,\CC^{96},D_{\mathrm{SM}},J_F,\gamma_F).\end{equation}
It is well known that $\ALR$ does not respect the first-order condition with respect to $\DSM$ and $J_F$, so there might be room for a twisted first-order condition to hold instead. But first we will briefly summarise the approach of \cite{CCvS13b}. 

\subsection{Discarding the first-order condition}

The solution to the problem that the first-order condition is not satisfied which is offered by \cite{CCvS13b} is to discard the first-order condition altogether. This can be done, as explained in \cite{CCvS13a}, by changing the way the Dirac operator fluctuates, so that if one wants to fluctuate a Dirac operator $D$ by a one-form $\omega=\sum_i\pi(a_i)\comm{D}{\pi(b_i)}$, $a_i,b_i\in\alg{A}$, rather than use \eqn{Dalpha}, one instead takes
\begin{align}\label{eq:quadfluc} D_\omega&=\sum_i\pi(a_i)\comm{D}{\pi(b_i)}+\sum_iJ\pi(a_i)J^{-1}\comm{D}{J\pi(b_i)J^{-1}}\nonumber\\
&\qquad+\sum_{i,j}J\pi(a_i)J^{-1}\pi(a_j)\comm{\comm{D}{\pi(b_j)}}{J\pi(b_i)J^{-1}}, \end{align}
the substantial difference being the inclusion of the final non-linear `quadratic' correction term, which of course goes to 0 when one assumes \eqn{1C} holds. 

This quadratic term, which we will denote $\omega_{(2)}$, then gauge transforms under the rule 
\[ \omega_{(2)}^u=J\pi(u)J^{-1}\omega_{(2)}J\pi(u^*)J^{-1}+J\pi(u)J^{-1}\comm{\pi(u)\comm{D}{\pi(u^*)}}{J\pi(u^*)J^{-1}}. \]
Though \cite{CCvS13b} demonstrates that the presence of the quadratic term does not lead to anything radical in terms of the particle content of the theory, that is not to say that we should not tread carefully; the sacrifice of the first-order condition means the Dirac operator can no longer be considered a (noncommutative) first-order differential operator. This in particular may be considered too high a price to pay to push the bounds of applicability of the noncommutative approach to gauge theory, and so we investigate as an alternative if the above described \emph{weakening} of the first-order condition provided by a twisted real structure might serve to obtain some alternative noncommutative description of the Pati-Salam model. 

\subsection{Spectral triples with multitwisted real structure}

The trouble with this approach is that, as we have seen, for the bosonic spectral action to make sense, we need $\nu=\pm\nu^{-1}$ which reduces the twisted first-order condition to the ordinary first-order condition, which we already know does not hold for \eqn{PS}. A loophole is provided by the proposed multitwisted real spectral triples of \cite{DS21}. This formalism is an extension of the twisted real structures introduced in \S2. We will not explain it in depth here, but to summarise, one decomposes the Dirac operator such that 
\[ D=\sum_\ell D_\ell,\quad\ell\in\{1,2,\ldots,N\},\]
and to each component $D_\ell$ associates a twist operator $\nu_\ell$. Therefore, practically speaking, in all of the definitions in \S2 one replaces $\nu$ with $\nu_\ell$ and $D$ with $D_\ell$, additionally replacing \eqn{0C} with the multitwisted zeroth-order condition
\begin{equation}\label{eq:nul0C}
\comm{\pi(a)}{J\nu_\ell\pi(b)\nu_\ell^{-1}J^{-1}}=0=\comm{\pi(a)}{J\nu_\ell^{-1}\pi(b)\nu_\ell J^{-1}}
\end{equation}
for all $\ell$, and replacing \eqn{nu1C} with the multitwisted first-order condition
\begin{equation}\label{eq:nul1C}
\comm{D_\ell}{\pi(a)}J\nu_\ell\pi(b)\nu_\ell^{-1}J^{-1}=J\nu_\ell^{-1}\pi(b)\nu_\ell J^{-1}\comm{D_\ell}{\pi(a)}
\end{equation}
for all $\ell$. These last two changes are not trivial (even when $N=1$) because we also now no longer require that $\nu_\ell\pi(\alg{A})\nu_\ell^{-1}\simeq\alg{A}$, and instead only require that conjugation by $\nu_\ell$ is an automorphism of $\bdd$. 

It is clear that one cannot start from the multitwisted perspective and proceed via the path we have carved out above. For one thing, the formulation of Morita equivalences provided in \S3 clearly will not carry over directly, as this would require some abstract decomposition of connections which would be hard to account for, amongst other difficulties. However, for our purposes, Morita equivalence is only necessary for developing a framework of gauge transformations for spectral triples as given in \S\S4--5, which can be expressed wholly in terms of operators. Once these definitions have been laid out, they can be extended in an entirely analogous fashion to that of extending spectral triples with twisted real structures to spectral triples with multitwisted real structures described above, \ie, everywhere replacing $D\mapsto D_\ell$ and $\nu\mapsto\nu_\ell$.

While it is reasonably straightforward to write down new gauge transformations for the multitwist, it should be remarked upon that this process is not always trivial. For example, suppose 
\[ \Omega^1_D(\alg{A})\ni\omega=\pi(a)\comm{D}{\pi(b)}=\pi(a)\comm\bigg{\sum_\ell D_\ell}{\pi(b)}=\sum_\ell\pi(a)\comm{D_\ell}{\pi(b)}. \]
The only way to make sense of an equivalent version of the map $\omega\mapsto\omega^\odot$ is by defining
\[ \omega^\odot\coloneqq\sum_\ell\nu_\ell J\pi(a)\comm{D_\ell}{\pi(b)}J^{-1}\nu_\ell, \]
but unlike $\omega$, this $\omega^\odot$ cannot be derived from any object built from the complete Dirac operator $D$; for example, the fluctuation $D\mapsto\sum_\ell(D_\ell+a\comm{D_\ell}{b}\pm\sum_{k}\nu_k Ja\comm{D_\ell}{b} J^{-1}\nu_k)$ would not satisfy $\nu_\ell JD_\ell=\pm D_\ell J\nu_\ell$ except when $k=\ell$.

Thus there are two ways in which the spectral Pati-Salam model might be (re)constructed using multitwisted real structures: one is that even for a single twist ($N=1$), we have a broader selection of twists to work with than earlier assumed, and of course the other is the possibility to try to use multiple twists ($N>1$). 

\begin{notn}
Because we no longer necessarily have that $\nu\pi(\alg{A})\nu^{-1}\simeq\pi(\alg{A})$, from here on we will use a slightly different notation for twisted commutators to that described in \S2. We will now take ${\comm{A}{B}_\rho\coloneqq AB-\rho(B)A}$, where here $\rho$ is a map acting directly on operators. As a further point of notation, we will make use the shorthand $\Ad(\nu)\eqqcolon\bar{\nu}$. 
\end{notn}

\subsection{The Pati-Salam case (shown with a toy model)}

The matrices involved in the (finite part of the) spectral Pati-Salam case are quite large and unwieldy, and thus difficult to express written out in full. However, many of the issues that come up when working with them also arise in the simpler toy model given in \cite{CCvS13a}, and so for demonstrative purposes we will largely present that case instead. We take for the algebra $\Atoy=\CC_L\oplus\CC_R\oplus M_2(\CC)$, represented as matrices acting on the Hilbert space $\CC^8$ by
\[ \pi\colon(\lambda_L,\lambda_R,M)\mapsto(\diag(\lambda_L,\lambda_R)\otimes1_2)\oplus(1_2\otimes M). \]
The real structure is then given by $J=\left(\begin{smallmatrix}0&1_4\\1_4&0\end{smallmatrix}\right)\circ\cc$, where $\cc$ denotes complex conjugation. Next we consider the Dirac operator $D$, which is given by \[ D=\begin{pmatrix}S&T^*\\T&\bar{S}\end{pmatrix} \] where $S=\left(\begin{smallmatrix}0&k_x\\ \bar{k}_x&0\end{smallmatrix}\right)\otimes1_2$ and $T=\diag(k_y,0,0,0)$. With respect to this Dirac operator, the ordinary first-order condition \eqn{1C} is only satisfied for the `symmetry-broken' subalgebra $\CC_L\oplus\CC_R\oplus\CC_0\subset\Atoy$. 
First we will briefly investigate if it is possible to satisfy the multitwisted first-order condition \eqn{nul1C} for the unbroken algebra $\Atoy$ instead. 

We start by considering matrices which, for the sake of notational ease, we call $B_\ell^+\coloneqq J\nu_\ell\pi(b)\nu_\ell^{-1}J^{-1}$ and $B_\ell^-\coloneqq J\nu_\ell^{-1}\pi(b)\nu_\ell J^{-1}$ for $b\in\Atoy$, all of which must be elements of $\pi(\Atoy)'$ in order to satisfy \eqn{nul0C}. As such, $B_\ell^\pm$ must take the form $m\oplus n\oplus\diag(\mu_1,\mu_1,\mu_2,\mu_2)$ for $m,n\in M_2(\CC)$, $\mu_1,\mu_2\in\CC$. The question then is whether we can identify some operator(s) $\nu_\ell$ which would allow us to obtain $B_\ell^\pm$ from a given $b\in\Atoy$. 
Assuming we have $N=1$ twists, it is not difficult to compute the one-forms $\comm{D}{\pi(a)}$, $a\in\Atoy$, and it is with respect to such one-forms we can try to impose the usual twisted first-order condition \eqn{nu1C}, \ie, we demand that $\comm{D}{\pi(a)}B^+=B^-\comm{D}{\pi(a)}$. Doing so further restricts $B^\pm$ to be of the form
\begin{subequations}\begin{align}
B^+&=\begin{pmatrix}m_{11}^+&0\\m_{21}^+&m_{22}^+\end{pmatrix}\oplus
\begin{pmatrix}\mu_1^+&n^+_{12}\\0&n^+_{22}\end{pmatrix}\oplus\diag(\mu_1^+,\mu_1^+,\mu_2^+,\mu_2^+),\label{eq:Bpl}\\
B^-&=\begin{pmatrix}\mu_1^+&n_{12}^+\\0&n_{22}^+\end{pmatrix}\oplus
\begin{pmatrix}m^+_{11}&0\\m_{21}^+&m^+_{22}\end{pmatrix}\oplus\diag(m_{11}^+,m_{11}^+,\mu_2^-,\mu_2^-),\label{eq:Bmn}
\end{align}\end{subequations}
where we have tried to express everything in terms of elements of $B^+$. 

Even relaxing the requirement that $\nu^2=1$ (in which case $B^-=B^+$), reading off \eqn{Bpl} and \eqn{Bmn} there is a strong suggestion that $\nu^2$ be given by a pair of blockwise flips (on the first and last pairs of $2\times2$ blocks respectively), but in that case one must still make the identifications $\mu^+_2=m_{11}^+$ and $\mu_2^-=\mu_1^+$, which necessarily breaks the algebra $\Atoy$ down to a subalgebra. Not imposing those identifications, it is not at all clear what (or if any) $\nu$ can be found to relate $B^-$ to $B^+$. When $\nu^2=1$ however, it is immediately clear that $\Atoy$ breaks to $\CC^3$.

The ultimate reason for this breaking of the algebra is $T$ being nonzero; thus, if multitwists are to be applicable, it would make sense to decompose $D$ into $D_1=S\oplus\bar{S}$ and either $D_2=\begin{pmatrix}0&T^*\\0&0\end{pmatrix}$ and $D_3=\begin{pmatrix}0&0\\T&0\end{pmatrix}$, or $D_{2'}\coloneqq D_2+D_3$. Using the same method as before we find that, as expected, the $\ell=1$ case is well behaved, but the same cannot be said for the other components.

\begin{prop}\label{env:2twistfail}
The requirement that
$(\Atoy,\CC^8,D_1+D_{2'},J,\{\nu_1, \nu_{2'}\})$,
where $D_1=\left(\begin{smallmatrix}S&0\\0&\bar{S}\end{smallmatrix}\right)$ and $D_{2'}=\left(\begin{smallmatrix}0&T^*\\T&0\end{smallmatrix}\right)$, with all other data defined as above and twists satisfying $\nu_1^2=\nu_{2'}^2=1$,
be a multitwisted spectral triple breaks the algebra $\Atoy$ to $\CC^3$.
\end{prop}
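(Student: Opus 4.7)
The plan is to impose the multitwisted zeroth-order condition \eqn{nul0C} and first-order condition \eqn{nul1C} for both components $\ell\in\{1,2'\}$ simultaneously and to show that together they force the image of $\pi$ to lie in the subalgebra $\CC^3\subset\Atoy$. Since $\nu_1^2=\nu_{2'}^2=1$, the ``plus'' and ``minus'' variants of the transformed algebra element coincide, so I write $B_\ell(b):=J\nu_\ell\pi(b)\nu_\ell J^{-1}$. The zeroth-order conditions force each $B_\ell(b)$ to sit in the commutant $\pi(\Atoy)'$, which in the natural $2+2+4$ block decomposition of $\CC^8$ takes the form $m\oplus n\oplus(N\otimes 1_2)$ with $m,n,N\in M_2(\CC)$.

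Next I would compute the two Dirac commutators in this block decomposition. Because $D_1=S\oplus\bar{S}$ is block-diagonal with respect to the $4+4$ splitting on which $J$ acts by interchange, $\comm{D_1}{\pi(a)}$ is supported only on the first $4$-block, where it is proportional to $(\lambda_R-\lambda_L)$ times the off-diagonal swap encoded by $S$, and vanishes on the second block. The first-order condition for $\ell=1$ is therefore mild: it is already satisfied by $\nu_1=1$, imposing on $B_1(b)$ only the requirement that the two $2\times 2$ sub-blocks of its top-left $4$-block agree. The genuine obstruction is concentrated entirely in $D_{2'}$, because $T=\diag(k_y,0,0,0)$ is maximally sparse, so that $\comm{D_{2'}}{\pi(a)}$ has only one nonzero row in its upper-right block, proportional to $(M_{11}-\lambda_L,M_{12},0,0)$, and only one nonzero column in its lower-left block.

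Substituting this sparse commutator into $\comm{D_{2'}}{\pi(a)}B_{2'}(b)=B_{2'}(b)\comm{D_{2'}}{\pi(a)}$ and demanding the identity for all $a,b\in\Atoy$ reads off direct entry identifications in $B_{2'}(b)$: the $(1,1)$ entry of the first $2\times 2$ block must agree with the $(1,1)$ entry of $N\otimes 1_2$, while the $(1,2)$ and $(2,1)$ entries of that first $2\times 2$ block must vanish. Since $J$ interchanges the two $4$-blocks, under the untwisted map $b\mapsto J\pi(b)J^{-1}$ these entries correspond to $\bar M_{11}$ versus $\bar\lambda_L$ and to $\bar M_{12},\bar M_{21}$ respectively, so the identifications translate to $M_{11}=\lambda_L$ and $M_{12}=M_{21}=0$. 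This collapses $M\in M_2(\CC)$ to $\diag(\lambda_L,\mu_0)$ and hence reduces $\Atoy$ to $\CC_L\oplus\CC_R\oplus\CC\simeq\CC^3$.

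The step I expect to be the main obstacle is verifying that no nontrivial involutive twist $\nu_{2'}$ can evade this reduction. Conjugation by $\nu_{2'}$ is not required here to implement an automorphism of $\pi(\Atoy)$, which in principle gives extra freedom compared with the standard setting; however $B_{2'}(b)$ must still land in $\pi(\Atoy)'$ and must still respect the single-row/single-column matching forced by the sparsity of $\comm{D_{2'}}{\pi(a)}$. A short case analysis on the involutive candidates (essentially blockwise sign changes and block swaps compatible with the $J$-structure) should show that every admissible $\nu_{2'}$ merely permutes the roles of $\lambda_L,\lambda_R,M_{11},M_{22}$ in the identifications, leaving the reduction of $\Atoy$ to a three-dimensional commutative subalgebra unaltered.
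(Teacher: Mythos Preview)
Your argument has a genuine gap precisely where you yourself flag the ``main obstacle''. Up through the computation of $\comm{D_{2'}}{\pi(a)}$ and the commutant form of $B_{2'}(b)$ you are fine, but when you read off the entry identifications you write ``under the untwisted map $b\mapsto J\pi(b)J^{-1}$ these entries correspond to $\bar M_{11}$ versus $\bar\lambda_L$ \ldots''. That step silently sets $\nu_{2'}=1$. For a general involutive $\nu_{2'}$, the constraints you obtain are constraints on $B_{2'}(b)=J\nu_{2'}\pi(b)\nu_{2'}J^{-1}$, and translating them back to constraints on $b$ requires knowing $\nu_{2'}$. Your proposed ``short case analysis on the involutive candidates'' is not a proof: in the multitwisted setting $\nu_{2'}$ is an arbitrary involution on $\CC^8$ (not required to normalise $\pi(\Atoy)$), and you give no argument reducing this to ``blockwise sign changes and block swaps''.

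The paper avoids the case analysis entirely by invoking two structural constraints you never use: the $\varepsilon'$ condition $\nu_{2'}JD_{2'}=\varepsilon'D_{2'}J\nu_{2'}$ and the regularity condition $\nu_{2'}J\nu_{2'}=J$. Combined with the directly-checked fact $JD_{2'}=D_{2'}J$ (and $\nu_{2'}^2=1$), these give $D_{2'}=\pm\nu_{2'}D_{2'}\nu_{2'}$. One also checks that $D_{2'}$ has no component in $\pi(\Atoy)'$, so by the decomposition of \cite[Prop.~4.1]{DD16} the first-order condition for $\ell=2'$ forces $\comm{D_{2'}}{\nu_{2'}J\pi(a)J^{-1}\nu_{2'}}=0$ outright. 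Substituting $D_{2'}=\pm\nu_{2'}D_{2'}\nu_{2'}$ into this and using $\nu_{2'}^2=1$ collapses it to $\nu_{2'}\comm{D_{2'}}{J\pi(a)J^{-1}}\nu_{2'}=0$, i.e.\ the \emph{untwisted} condition $\comm{D_{2'}}{J\pi(a)J^{-1}}=0$, which is exactly the known obstruction that breaks $\Atoy$ to $\CC^3$. The twist has been eliminated from the problem without ever classifying it.
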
 
\begin{proof}
Since we are taking $\nu_1^2=\nu_{2'}^2=1$, all twisted commutators become ordinary commutators. This means we can directly apply Prop. 4.1 of \cite{DD16} to each component of the Dirac operator (with the associated twist), but instead taking the map $(\place)^\circ$ of \cite{DD16} to mean $\nu_\ell J\pi(\place)J^{-1}\nu_\ell$ for a given $\ell\in\{1,2'\}$ (which takes the algebra to its commutant thanks to the twisted zeroth-order condition \eqn{nul0C}). 

Thus, focussing on the second component, we have that $D_{2'}$ satisfies the twisted first-order condition if and only if it decomposes into
\[ D_{2'}=D_{2',0}+D_{2',1} \]
for $D_{2',0}\in(\nu_{2'}J\pi(\Atoy)J^{-1}\nu_{2'})'$ and $D_{2',1}\in\pi(\Atoy)'$. However, we know the shape of $D_{2'}$, and so we know that no nonzero part of it lies within $\pi(\Atoy)'$, and so we must have $D_{2',1}=0$. This means we must have \begin{equation}\label{eq:gotcha}\comm{D_{2'}}{\nu_{2'}J\pi(a)J^{-1}\nu_{2'}}=0\end{equation} for all $a\in\Atoy$.

Now, by the definition of a spectral triple with multitwisted real structure we should have $\nu_{2'}JD_{2'}=\varepsilon'D_{2'}J\nu_{2'}$. However, we know that $JD_{2'}=D_{2'}J$, which implies that $D_{2'}=\pm\nu_{2'}D_{2'}\nu_{2'}$. Substituting this into \eqn{gotcha} gives
\[ \nu_{2'}\comm{D_{2'}}{J\pi(a)J^{-1}}\nu_{2'}=0, \]
but this is only true when $a$ lies within the symmetry-broken subalgebra $\CC^3$.
\end{proof}
Indeed, the above argument carries over in exactly the same manner for the full Pati-Salam case, making the appropriate replacements, \ie, replacing $D_{2'}$ by the block off-diagonal part of $\DSM$, $\Atoy$ by $\ALR$ (the symmetry-broken subalgebra of $\ALR$ being $\ASM$), and the other data by their higher-dimensional equivalents.

While this doesn't in principle rule out the 2-twisted case (since the decomposition of the Dirac operator is not unique), it does eliminate the most promising candidate. For the 3-twisted case with the decomposition we described before, Prop. \ref{env:2twistfail} carries over with only minor modifications.

\begin{prop}\label{env:3twistfail}
The requirement that
$(\Atoy,\CC^8,D_1+D_2+D_3,J,\{\nu_1, \nu_2,\nu_3\})$,
where $D_1=\left(\begin{smallmatrix}S&0\\0&\bar{S}\end{smallmatrix}\right)$, $D_2=\left(\begin{smallmatrix}0&T^*\\0&0\end{smallmatrix}\right)$ and $D_3=\left(\begin{smallmatrix}0&0\\T&0\end{smallmatrix}\right)$, with all other data defined as above and twists satisfying $\nu_1^2=\nu_2^2=\nu_3^2=1$,
be a spectral triple with multitwisted real structure breaks the algebra $\Atoy$ to $\CC^3$.
\end{prop}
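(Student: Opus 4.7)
The plan is to reduce the 3-twist case to the 2-twist case of Prop.~\ref{env:2twistfail} by applying the decomposition result of \cite{DD16} separately to the two off-diagonal components $D_2$ and $D_3$, and then combining the outputs. Since $\nu_\ell^2=1$ by hypothesis, the multitwisted first-order condition \eqn{nul1C} collapses to an ordinary commutator condition with respect to the subalgebra $\nu_\ell J\pi(\Atoy)J^{-1}\nu_\ell$, so Prop.~4.1 of \cite{DD16} applies just as in Prop.~\ref{env:2twistfail}. For each $\ell\in\{2,3\}$ it yields a decomposition $D_\ell=D_{\ell,0}+D_{\ell,1}$ with $D_{\ell,0}\in(\nu_\ell J\pi(\Atoy)J^{-1}\nu_\ell)'$ and $D_{\ell,1}\in\pi(\Atoy)'$. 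Both $D_2$ and $D_3$ are strictly block off-diagonal with respect to the $\CC^4\oplus\CC^4$ decomposition of $\CC^8$, whereas $\pi(\Atoy)'$ consists only of block-diagonal operators; hence $D_{\ell,1}=0$ and
\[ \comm{D_\ell}{\nu_\ell J\pi(a)J^{-1}\nu_\ell}=0\qquad\text{for all } a\in\Atoy \text{ and } \ell\in\{2,3\}. \]

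A short computation from the explicit matrix forms shows that $J$ exchanges the two off-diagonal components: $JD_2J^{-1}=D_3$ and $JD_3J^{-1}=D_2$. Combining this with the multitwisted $\varepsilon'$ condition $\nu_\ell JD_\ell=\varepsilon'D_\ell J\nu_\ell$ and with $\nu_\ell J=J\nu_\ell$ (which follows from $\nu_\ell^2=1$ together with \eqn{RC}) yields the cross relations $\nu_2 D_2\nu_2=\varepsilon'D_3$ and $\nu_3 D_3\nu_3=\varepsilon'D_2$. Conjugating each commutator identity from the preceding paragraph by $\nu_\ell$ on both sides and simplifying with $\nu_\ell^2=1$ turns it into $\comm{\nu_\ell D_\ell\nu_\ell}{J\pi(a)J^{-1}}=0$, which by the cross relations reads $\comm{D_3}{J\pi(a)J^{-1}}=0$ (from $\ell=2$) and $\comm{D_2}{J\pi(a)J^{-1}}=0$ (from $\ell=3$) for every $a\in\Atoy$. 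Summing these gives $\comm{D_{2'}}{J\pi(a)J^{-1}}=0$ with $D_{2'}\coloneqq D_2+D_3$, which is exactly the condition \eqn{gotcha} that the proof of Prop.~\ref{env:2twistfail} showed forces $\Atoy$ to reduce to the symmetry-broken subalgebra $\CC^3$.

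The main obstacle is that, unlike $D_{2'}$ in the 2-twist case, neither $D_2$ nor $D_3$ is individually $J$-invariant; rather, $J$ swaps them. Consequently the $\varepsilon'$ condition for $\ell=2$ constrains $D_3$ rather than $D_2$, and conversely for $\ell=3$, so the $\ell=2$ and $\ell=3$ pieces of information must be carefully dovetailed: twist-conjugating $D_\ell$ produces $D_{\ell'}$, which then interacts correctly with the $\ell'$-indexed commutator identity to yield the desired $D_{2'}$-level condition. Once this interleaving is in place, the remainder of the argument is a direct pull-back to Prop.~\ref{env:2twistfail}, and the same replacements carry the conclusion over to the full Pati-Salam case.
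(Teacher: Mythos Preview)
Your proof is correct and follows essentially the same route as the paper's own argument: apply \cite[Prop.~4.1]{DD16} to each off-diagonal piece $D_\ell$ (using $\nu_\ell^2=1$), observe that the block structure forces $D_{\ell,1}=0$, then use the $\varepsilon'$ condition together with $JD_2J^{-1}=D_3$ to swap $D_2\leftrightarrow D_3$ under conjugation by $\nu_\ell$, arriving at $\comm{D_3}{J\pi(a)J^{-1}}=0$ and $\comm{D_2}{J\pi(a)J^{-1}}=0$. The only cosmetic difference is that you sum these to recover the $D_{2'}$ condition and invoke Prop.~\ref{env:2twistfail}, whereas the paper simply notes that the pair of vanishing commutators already forces $a\in\CC^3$; both conclusions are equivalent since $\nu_\ell$ is invertible.
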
 
\begin{proof}
The preliminaries carry over exactly as in Prop. \ref{env:2twistfail}. Now, we focus on $D_2$ and $D_3$, beginning with $D_2$. By \cite[Prop. 4.1]{DD16}, we have that $D_{2}$ satisfies the twisted first-order condition if and only if it decomposes into 
\[ D_2=D_{2,0}+D_{2,1} \]
for $D_{2,0}\in(\nu_{2}J\pi(\Atoy)J^{-1}\nu_{2})'$ and $D_{2,1}\in\pi(\Atoy)'$. However, as before we know that  $D_{2,1}=0$. This means we must have \begin{equation}\label{eq:gotcha2}\comm{D_{2}}{\nu_{2}J\pi(a)J^{-1}\nu_{2}}=0\end{equation} for all $a\in\Atoy$.

Now, by the definition of a multitwisted real structure, we should have $\nu_{2}JD_{2}=\varepsilon'D_{2}J\nu_{2}$. However, we know that $JD_{2}J^{-1}=D_3$, which implies that $D_{2}=\pm\nu_{2}D_3\nu_{2}$. Substituting this into \eqn{gotcha2} gives
\begin{equation}\label{eq:3breaker1} \nu_{2}\comm{D_3}{J\pi(a)J^{-1}}\nu_{2}=0. \end{equation}
Going through the same procedure for $D_3$ yields 
\begin{equation}\label{eq:3breaker2} \nu_{3}\comm{D_2}{J\pi(a)J^{-1}}\nu_{3}=0, \end{equation}
and the pair of equations \eqn{3breaker1} and \eqn{3breaker2} can only be satisfied when $a$ lies within the symmetry-broken subalgebra $\CC^3$ as before.
\end{proof}

One further point which is worth remarking upon is that the Dirac operator for this toy example is not simpler than the Standard Model/Pati-Salam case only due to the lower dimensionality. With respect to a given choice of basis, one has $\DSM=\begin{pmatrix}S_{\mathrm{SM}}&T_{\mathrm{SM}}^*\\T_{\mathrm{SM}}&\bar{S}_{\mathrm{SM}}\end{pmatrix}$ where
\[ S_{\mathrm{SM}}=\begin{pmatrix}0&0&k_\nu^*&0\\0&0&0&k_e^*\\k_\nu&0&0&0\\0&k_e&0&0\end{pmatrix}\oplus
\,\bigoplus_{i=1}^3\begin{pmatrix}0&0&k_u^*&0\\0&0&0&k_d^*\\k_u&0&0&0\\0&k_d&0&0\end{pmatrix}\text{, }T_{\mathrm{SM}}=\begin{pmatrix}k_{\nu_R}&0_{1\times15}\\0_{15\times1}&0_{15\times15}\end{pmatrix}, \]
and all entries are in $M_3(\CC)$.\footnote{Note that the matrices $k_i$ ($i=\nu,e,u,d,\nu_R$) are not arbitrary elements of $M_3(\CC)$, but are subject to further constraints which we will not go into here. See \cite{CMar08} for more.} Taking $k_\nu=k_u$ and $k_e=k_d$ is called \emph{quark-lepton coupling unification}, and this simplifies the mathematics significantly. For example, if one takes $T_{\mathrm{SM}}=0$ and assumes quark-lepton coupling unification, then it is not particularly difficult to find twists (not dissimilar to the toy model case). However, even with $T_{\mathrm{SM}}=0$, without quark-leptop coupling unification this task becomes much more difficult. This is unfortunate because the model is defined at the gauge coupling unification scale, and so making such simplifying assumptions is likely to impose strong constraints on the physics up to that scale, and so ought to be avoided unless absolutely necessary.

\subsection{Other issues}

It is worth mentioning here that even apart from the above discussion, there are other issues worth mentioning in this context. Even if we had found twist(s) which recovered some twisted first-order condition, such twists would likely not be of much physical interest. The reason comes from \eqn{SFtwist} and the fact that $D_{\mathrm{SM}}$ is the fermionic mass matrix. Ordinarily, the Dirac/Majorana mass terms in the action come from the fermionic spectral action of the unfluctuated Dirac operator $$\mip{J_F\psi}{D_{\mathrm{SM}}\psi},$$ but now in the multitwisted case it seems that this should be replaced by $$\sum_\ell\mip{J_F\nu_\ell\psi}{{D}_{\ell}\psi},$$
where here $\sum_\ell D_\ell=\DSM$ specifically.

If we would like to maintain the physical relevance of the model, it would likely be necessary to instead use some 
$$J'_\ell\coloneqq\nu_\ell J_F\text{ or }D'_\ell\coloneqq\alpha_1\nu_\ell D_\ell$$ 
instead of $J_F$ or $D_\ell$ respectively (where $\nu_\ell=\alpha_1\nu_\ell^*$ for all $\ell$).\footnote{Note that Prop. \ref{env:ruinerprop} does not hold in the multitwisted formalism, but it is worth keeping track of signs nevertheless.} These both have problems though.

The choice of $J'_\ell$ seems initially preferable to $D'_\ell$, as $\nu_\ell J'_\ell D_\ell=D_\ell J'_\ell\nu_\ell$ automatically gives $J_F\DSM=\DSM J_F$ provided that $\nu_\ell=\nu_\ell^{-1}$ (for all $\ell$). However, this neatness is telling, and indeed one finds that using $J'_\ell$ reduces \eqn{nul1C} to \eqn{1C} with $J_F$ in this case, and we already know that \eqn{1C} does not hold for $\ALR$. 

If we use $D'_\ell$ instead, we run into the different issue that we already require (by definition) that $\sum_\ell D_\ell=\DSM$, but in order to have the correct action we would also need $\sum_\ell D'_\ell=\alpha_1\sum_\ell\nu_\ell D_\ell=\DSM$, which needless to say also makes it difficult to have nontrivial twists.

\section{Future directions}

The above investigation seems to leave little space for the application of twisted real structures to the spectral formulation of the left-right symmetric extension to the Standard Model, as they simply result in a reduction to the Standard Model. Is there any more that could be done? We suggest three possible avenues.

One approach could be to try to marry the (multi)twisted real structure formalism to the twisted spectral triple approach to the Standard Model and its extensions, which is an active area of research (see \eg, \cite{M15,DM17} and subsequent papers). While the cited papers focus on twisting the (doubled) commutative part of the spectral triple, it might be worthwhile in this possible `hybrid twisted' setting to investigate twists on the finite part, or even on both. We leave this long-term endeavour for future investigation. 

A second idea is to shift perspective away from the finite part of the spectral triple and towards the commutative part instead. For example, in \cite{DDM21} it was shown that the Hodge-de Rham spectral triple for a Riemannian manifold can be made into a real spectral triple if equipped with a twisted real structure. This twisted real structure may allow for nontrivial inner fluctuations of the Dirac operator $-i(d-d^*)$, which is not the case when compared to the usual spin manifold formulation with Dirac operator $\slashed{\partial}$, and that could in turn lead to some interesting geometrical features (in isolation or when coupled with a finite spectral triple).

Another approach which keeps the focus on the commutative part of the spectral triple is to attempt to follow more closely in the direction of \cite{DFLM18} and investigate if twisted real structures could have any applications to Lorentzian spectral triples, for example, by using the `untwisting' procedure described in \cite{BDS19}.  

Unrelated to these gauge theory-inspired directions, it may also prove fruitful to attempt to find twisted real structures for quantum groups which do not satisfy the first-order condition in the strict sense, for example, $\SU_q(2)$ \cite{DLSvSV05}. Though such spectral triples do not satisfy the first-order condition for a given `real structure' (only satisfying a \emph{modified} first-order condition, if any) it seems natural, and very much in the spirit of this paper, to ask if a (multi)twisted real structure could be found such that the (multi)twisted first order condition was satisfied exactly.

\section*{Acknowledgments}

We would like to thank the anonymous referee and Andrzej Sitarz for their helpful remarks and Piotr M. Hajac for his detailed comments to improve the manuscript.

\bibliography{bibliography-210608}{}
\bibliographystyle{siam}

\end{document}